\newcommand{\Id}{\ensuremath{\mathop{\rm Id}\nolimits}}
\newtheorem{theorem}{Theorem}[section]
\newtheorem{lemma}[theorem]{Lemma}
\newtheorem{claim}[theorem]{Claim}
\theoremstyle{definition}
\newtheorem{definition}[theorem]{Definition}
\newtheorem{protocol}[theorem]{Protocol}
\newtheorem{ass}[theorem]{Assumptions}
\theoremstyle{remark}
\newtheorem{remark}[theorem]{Remark}
\newcommand{\tnote}[1]{\textcolor{magenta}{\small {\textbf{(Thomas:} #1\textbf{) }}}}
\newcommand{\znote}[1]{\textcolor{blue}{\small {\textbf{(Tina:} #1\textbf{) }}}}
\title{Classical zero-knowledge arguments for quantum computations}
\author{Thomas Vidick}
\affiliation{Department of Computing and Mathematical Sciences, California Institute of Technology, USA}
\email{vidick@caltech.edu}
\author{Tina Zhang}
\affiliation{Division of Physics, Mathematics and Astronomy, California Institute of Technology, USA}
\email{tinazhang@caltech.edu}
\date{}
\begin{document}
\maketitle
\begin{abstract}
We show that every language in QMA admits a classical-verifier, quantum-prover zero-knowledge argument system which is sound against quantum polynomial-time provers and zero-knowledge for classical (and quantum) polynomial-time verifiers. The protocol builds upon two recent results: a computational zero-knowledge proof system for languages in QMA, with a quantum verifier, introduced by Broadbent et al.\ (FOCS 2016), and an argument system for languages in QMA, with a classical verifier, introduced by Mahadev (FOCS 2018).
\end{abstract}
\section{Introduction}

The paradigm of the interactive proof system is a versatile tool in complexity theory. Although traditional complexity classes are usually defined in terms of a single Turing machine—NP, for example, can be defined as the class of languages which a non-deterministic Turing machine is able to decide—many have reformulations in the language of interactive proofs, and such reformulations often inspire natural and fruitful variants on the traditional classes upon which they are based. (The class MA, for example, can be considered a natural extension of NP under the interactive-proof paradigm.)

Intuitively speaking, an interactive proof system is a model of computation involving two entities, a \emph{verifier} and a \emph{prover}, the former of whom is computationally efficient, and the latter of whom is unbounded and untrusted. The verifier and the prover exchange messages, and the prover attempts to `convince' the verifier that a certain problem instance is a yes-instance. We can define some particular complexity class as the set of languages for which there exists an interactive proof system that 1) is \emph{complete}, 2) is \emph{sound}, and 3) has certain other properties which vary depending on the class in question. Completeness means, in this case, that for any problem instance in the language, there is an interactive proof involving $r$ messages in total that the prover can offer the verifier which will cause it to accept with at least some probability $p$; and soundness means that, for any problem instance not in the language, no prover can cause the verifier to accept, except with some small probability $q$. For instance, if we require that the verifier is a deterministic polynomial-time Turing machine, and set $r = 1, p = 1$, and $q = 0$, the class that we obtain is of course the class NP. If we allow the verifier to be a probabilistic polynomial-time machine, and set $r = 1, p = \frac{2}{3}$, $q = \frac{1}{3}$, we have MA. Furthermore, if we allow the verifier to be an efficient \emph{quantum} machine, and we allow the prover to communicate with it quantumly, but we retain the parameter settings from MA, we obtain the class QMA. Finally, if we allow $r$ to be any polynomial in $n$, where $n$ is the size of the problem instance, but otherwise preserve the parameter settings from MA, we obtain the class IP.

For every complexity class thus defined, there are two natural subclasses which consist of the languages that admit, respectively, a \emph{statistical} and a \emph{computational} \emph{zero-knowledge} interactive proof system with otherwise the same properties. The notion of a zero-knowledge proof system was first considered by Goldwasser, Micali and Rackoff in~\cite{goldwasser1989knowledge}, and formalises the surprising but powerful idea that the prover may be able to prove statements to the verifier in such a way that the verifier learns nothing except that the statements are true. Informally, an interactive proof system is \emph{statistical zero-knowledge} if an arbitrary malicious verifier is able to learn from an honest prover that a problem instance is a yes-instance, but can extract only negligible amounts of information from it otherwise; and the computational variant provides the same guarantee only for malicious polynomial-time verifiers. For IP in particular, the subclass of languages which admit a statistical zero-knowledge proof system that otherwise shares the same properties had by proof systems for languages in IP is known as SZK. Its computational sibling, meanwhile, is known as CZK. It is well-known that, contingent upon the existence of one-way functions, NP $\subseteq$ CZK: computational zero-knowledge proof systems have been known to exist for every language in NP since the early 1990s (\cite{original-zk}). However, because these proof systems often relied upon intractability assumptions or techniques (e.g. `rewinding') that failed in quantum settings, it was not obvious until recently how to obtain an analogous result for QMA. One design for a zero-knowledge proof system for promise problems in QMA was introduced by Broadbent, Ji, Song and Watrous in \cite{qma}. Their work establishes that, provided that a quantum computationally concealing, unconditionally binding commitment scheme exists, QMA $\subseteq$ QCZK.

There are, of course, a myriad more variations on the theme of interactive proofs in the quantum setting, each of which defines another complexity class. For example, motivated partly by practical applications, one might also consider the class of languages which can be decided by an interactive proof system involving a classical verifier and a quantum prover communicating classically, in which the soundness condition still holds against arbitrary provers, but the honest prover can be implemented in quantum polynomial time. (For simplicity, we denote this class by $\mathrm{IP_{BQP}}$.) The motivation for this specific set of criteria is as follows: large-scale quantum devices are no longer so distant a dream as they seemed only a decade ago. If and when we have such devices, how will we verify, using our current generation of classical devices, that our new quantum computers can indeed decide problems in BQP? This problem—namely, the problem of showing that BQP $\subseteq \mathrm{IP_{BQP}}$—is known informally as the problem of \emph{quantum verification}.

The problem of quantum verification has not yet seen a solution, but in recent years a number of strides have been made toward producing one. As of the time of writing, protocols are known for the following three variants on the problem:

\begin{enumerate}[itemsep=1pt, topsep=0pt]
\item It was shown in~\cite{AharonovBE10,aharonov2017interactive} that a classical verifier holding a quantum register consisting only of a constant number of qubits can decide languages in BQP by communicating quantumly with a single BQP prover. In~\cite{broadbent2009universal,fitzsimons2017unconditionally}, this result was extended to classical verifiers with \emph{single-qubit} quantum registers. All of these protocols are sound against arbitrary provers.
\item It was shown in~\cite{reichardt2013classical} that an entirely classical verifier can decide languages in BQP by interacting classically with two entangled, non-communicating QPT provers. This protocol is likewise sound against arbitrary provers.

\item It was shown in \cite{measurement} that an entirely classical verifier can decide languages in BQP by executing an \emph{argument system} (\cite{arguments}) with a single BQP prover. An argument system differs from a proof system in that 1) its honest prover must be efficient, and 2) an argument system need not be sound against arbitrary provers, but only efficient ones. In this case, the argument system in \cite{measurement} is sound against quantum polynomial-time provers. (The class of languages for which there exists an argument system involving a classical probabilistic polynomial-time verifier and a quantum polynomial-time prover is referred to throughout \cite{measurement} as $\mathrm{QPIP_0}$.) The argument system introduced in \cite{measurement} is reliant upon cryptographic assumptions about the quantum intractability of Learning With Errors (LWE; see~\cite{regev2009lattices}) for its soundness. For practical purposes, if this assumption holds true, the problem of verification can be considered solved.
\end{enumerate}

The last of these three results establishes that BQP $\subseteq \mathrm{QPIP_0}$, contingent upon the intractability of LWE. (As a matter of fact, the same result also establishes that QMA $\subseteq \mathrm{QPIP_0}$, provided the efficient quantum prover is given access to polynomially many copies of a quantum witness for the language to be verified, in the form of ground states of an associated local Hamiltonian.) In this work, we show that the protocol which \cite{measurement} introduces for this purpose can be combined with the zero-knowledge proof system for QMA presented in \cite{qma} in order to obtain a \emph{zero-knowledge} argument system for QMA. It follows naturally that, if the LWE assumption holds, and quantum computationally hiding, unconditionally binding commitment schemes exist, \footnote{It is known that quantum computationally hiding, unconditionally binding commitment schemes fitting our requirements can be constructed from LWE. See, for example, Section 2.4.2 in \cite{coladangelo2019non}.} QMA $\subseteq$ CZK-$\mathrm{QPIP_0}$, where the latter refers to the class of languages for which there exists a \emph{computational zero-knowledge} interactive argument system involving a classical verifier and a quantum polynomial-time prover. Zero-knowledge protocols for languages in NP are an essential component of many cryptographic constructions, such as identification schemes, and are often used in general protocol design (for example, one can force a party to follow a prescribed protocol by requiring it to produce a zero-knowledge proof that it did so). Our result opens the door for the use of zero-knowledge proofs in protocols involving classical and quantum parties which interact classically in order to decide languages defined in terms of quantum information (for instance, to verify that one of the parties possesses a quantum state having certain properties).

We now briefly describe our approach to the problem. The proof system for promise problems in QMA presented in \cite{qma} is \emph{almost} classical, in the sense that the only quantum action which the honest verifier performs is to measure a quantum state after applying Clifford gates to it. The key contribution which \cite{measurement} makes to the problem of verification is to introduce a \emph{measurement protocol} which, intuitively, allows a classical verifier to obtain honest measurements of its prover's quantum state. The combining of the proof system from \cite{qma} and the measurement protocol from \cite{measurement} is therefore a fairly natural action. 

That the proof system of \cite{qma} is complete for  problems in QMA follows from the QMA-completeness of a problem which the authors term the \emph{5-local Clifford Hamiltonian problem}. However, the argument system which \cite{measurement} presents relies upon the QMA-completeness of the well-known \emph{2-local XZ Hamiltonian problem} (see Definition \ref{def:2-local-xz-problem}). For this reason, the two results cannot be composed directly. Our first step is to make some modifications to the protocol introduced in \cite{qma} so it can be used to verify that an XZ Hamiltonian is satisfied, instead of verifying that a Clifford Hamiltonian is satisfied. We then introduce a composite protocol which replaces the quantum measurement in the protocol from \cite{qma} with an execution of the measurement protocol from \cite{measurement}. With the eventual object in mind of proving that the result is sound and zero-knowledge, we introduce a \emph{trapdoor check} step into our composite protocol, and split the \emph{coin-flipping protocol} used in the proof system from \cite{qma} into two stages. We explain these decisions briefly here, after we present a summary of our protocol and its properties, and refer the reader to Sections \ref{section:protocol}, \ref{section:soundness} and \ref{section:zk} for fuller expositions.

\begin{protocol}
\emph{Zero-knowledge, classical-verifier argument system for QMA (informal summary).}
\label{protocol:main-protocol-informal}

\emph{Parties.}

The protocol involves
\begin{enumerate}[itemsep=1pt, topsep=0pt]
\item A \emph{verifier}, which runs in classical probabilistic polynomial time;
\item A \emph{prover}, which runs in quantum polynomial time.
\end{enumerate}

\emph{Inputs.}
The protocol requires the following primitives: 
\begin{itemize}[itemsep=1pt, topsep=0pt]
\item A perfectly binding, quantum computationally concealing commitment protocol.
\item A zero-knowledge proof system for NP.
\item An extended trapdoor claw-free function family (ETCFF family), as defined in \cite{measurement}.
\end{itemize}
Apart from the above cryptographic primitives, we assume that the verifier and the prover also receive the following inputs.
\begin{enumerate}[noitemsep,topsep=0pt]
\item Input to the verifier: a 2-local XZ Hamiltonian $H$ (see Definition \ref{def:2-local-xz-problem}), along with two numbers, $a$ and $b$, which define a promise about the ground energy of $H$. Because the 2-local XZ Hamiltonian promise problem is complete for QMA, any input to any decision problem in QMA can be reduced to an instance of the 2-local XZ Hamiltonian problem.
\item Input to the prover: the Hamiltonian $H$, the numbers $a$ and $b$, and the quantum state $\rho = \sigma^{\otimes m}$, where $\sigma$ is a ground state of the Hamiltonian $H$. 
\end{enumerate}

\emph{Protocol.} 
\begin{enumerate}[itemsep=1pt, topsep=0pt]
\item The prover applies an encoding process to $\rho$. Informally, the encoding can be thought of as a combination of an encryption scheme and an authentication scheme: it both hides the witness state $\rho$ and ensures that the verifier cannot meaningfully tamper with the measurement results that it reports in step 5. Like most encryption and authentication schemes, this encoding scheme is keyed. For convenience, we refer to the encoding procedure determined by a particular encoding key $K$ as $E_K$.\footnote{The notation used here for the encoding key is not consistent with that which is used later on; it is simplified for the purposes of exposition.}
\item The prover commits to the encoding key $K$ from the previous step using a classical commitment protocol, and sends the resulting commitment string $z$ to the verifier.
\item The verifier and the prover jointly decide which random terms from the Hamiltonian $H$ the verifier will check by executing a coin-flipping protocol. (`Checking terms of $H$' means that the verifier obtains measurements of the state $E_K(\rho)$ and checks that the outcomes are distributed a particular way---or, alternatively, asks the prover to prove to it that they are.) However, because it is important that the prover does not know which terms will be checked before the verifier can check them, the two parties only execute the first half of the coin-flipping protocol at this stage. The verifier commits to its part of the random string, $r_v$, and sends the resulting commitment string to the prover; the prover sends the verifier $r_p$, its own part of the random string; and the verifier keeps the result of the protocol $r = r_v \oplus r_p$ secret for the time being. The random terms in the Hamiltonian which the verifier will check are determined by $r$.
\item The verifier and the prover execute the measurement protocol from \cite{measurement}. Informally, this allows the verifier to obtain honest measurements of the qubits of the prover's encoded witness state, so that it can check the Hamiltonian term determined by $r$. The soundness guarantee of the measurement protocol prevents the prover from cheating, even though the prover, rather than the verifier, is physically performing the measurements. This soundness guarantee relies on the security properties of a family of trapdoor one-way functions termed an ETCFF family in \cite{measurement}. Throughout the measurement protocol, the verifier holds trapdoors for these one-way functions, but the prover does not, and this asymmetry is what allows the (intrinsically weaker) verifier to ensure that the prover does not cheat.
\item The verifier opens its commitment to $r_v$, and also sends the prover its measurement outcomes $u$ and function trapdoors from the previous step.
\item The prover checks, firstly, that the verifier's trapdoors are valid, and that it did not tamper with the measurement outcomes $u$. (It can determine the latter by making use of the authentication-scheme-like properties of $E_K$ from step 1.) If both tests pass, it then proves the following statement to the verifier, using a zero-knowledge proof system for NP:
\begin{enumerate}[itemsep=1pt, topsep=0pt]
\item[] There exists a string $s_p$ and an encoding key $K$ such that $z = \textsf{commit}(K, s_p)$ and $Q(K, r, u) = 1.$
\end{enumerate}
The function $Q$ is a predicate which, intuitively, takes the value 1 if and only if both the verifier \emph{and} the prover were honest. In more specific (but still informal) terms, $Q(K, r, u)$ takes the value 1 if $u$ contains the outcomes of honest measurements of the state $E_K(\rho)$, where $\rho$ is a state that passes the set of Hamiltonian energy tests determined by $r$.
\end{enumerate}
\end{protocol}

\begin{lemma}[soundness; informal]
Assume that LWE is intractable for quantum computers. Then, in a no-instance execution of Protocol \ref{protocol:main-protocol-informal}, the probability that the verifier accepts is at most a function that is negligibly close to $\frac{3}{4}$.
\end{lemma}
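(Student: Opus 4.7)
The plan is to convert an arbitrary QPT cheating prover into, up to negligible deviation, an ``effective honest'' prover: one that commits to a fixed key $K^{*}$, prepares a fixed quantum state $\rho^{*}$ statistically independent of the verifier's challenge $r$, and reports the genuine measurement outcomes of $\rho^{*}$ in the basis implicitly determined by the verifier's queries and $K^{*}$. Once this reduction is carried out, the no-instance promise $\operatorname{tr}(H \sigma) \geq b$ for every state $\sigma$ forces the Hamiltonian term check to reject with constant probability, yielding the claimed bound.

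First, I would pin down the committed key and the predicate that the NP proof certifies. The statistical binding of the commitment used in step 2 fixes a unique key $K^{*}$ consistent with the commitment string $z$, except with negligible probability. By the computational soundness of the NP zero-knowledge proof system in step 6, whenever the verifier accepts, the predicate $Q(K^{*}, r, u) = 1$ must hold, again up to negligible error. Next, I would argue that $r$ is essentially uniform and independent of the prover's state: the fact that the verifier commits to $r_v$ before revealing it, together with the computational hiding of that commitment and the prover having already committed to $r_p$, ensures that $r = r_v \oplus r_p$ is, up to negligible bias, uniformly random and independent of everything the prover produced before step 5, including the quantum register it feeds into the measurement protocol. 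Third, I would invoke the soundness of the measurement protocol of \cite{measurement}, which under the quantum intractability of LWE yields an efficiently-specified state $\rho^{*}$ such that the outcomes $u$ delivered to the verifier are computationally indistinguishable from the outcomes of honestly measuring $\rho^{*}$ in the basis that $K^{*}$ and the verifier's queries implicitly select.

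To conclude, I would bound the acceptance probability of the resulting effective honest prover. Since $\rho^{*}$ is fixed and independent of $r$, the verifier's overall acceptance probability equals, up to a negligible term, the honest acceptance probability of the BJSW-style energy check applied to $\rho^{*}$ on a uniformly random Hamiltonian term indexed by $r$. A direct calculation, using the normalization of the 2-local XZ Hamiltonian promise problem and $\operatorname{tr}(H \rho^{*}) \geq b$, upper-bounds this expected acceptance probability by the constant $3/4$ that the authors engineer through gap amplification of the underlying QMA-complete problem. Adding the negligible extraction errors from the previous paragraph gives the claimed bound.

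The main obstacle is the coherent combination of the three extraction steps. The predicate $Q(K^{*}, r, u) = 1$ must be interpreted against the same key $K^{*}$ to which the prover is bound by its initial commitment, and the state $\rho^{*}$ produced by the measurement-protocol extractor must be consistent with that same key and with the prover's overall strategy. Because the prover is arbitrary and Mahadev's measurement protocol soundness is stated relative to a basis announced by the verifier, care is required to thread $K^{*}$ through the extraction so that the final effective prover genuinely factors as ``sample $K^{*}$ and $\rho^{*}$ before seeing $r$, then honestly measure $\rho^{*}$ in the $K^{*}$-determined basis,'' rather than leaving residual correlations between the key, the state, and the challenge that could break the final combinatorial bound.
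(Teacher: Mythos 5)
Your high-level structure tracks the paper up to a point — pin down the committed key via statistical binding, argue $r$ is unbiased, extract a state $\rho^*$ from the measurement-protocol soundness guarantee, then analyze the energy check against $\rho^*$. But your final step misattributes where the $3/4$ comes from, and this is not merely a bookkeeping slip: it would cause the argument to give the wrong numerical bound.

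You write that the verifier's overall acceptance probability is, up to a negligible term, the acceptance probability of the (amplified) XZ-Hamiltonian energy check applied to $\rho^*$, and that this latter quantity is upper-bounded by $3/4$ via the normalization and amplification of the $2$-local XZ promise problem. Both halves of that are off. First, in the paper, the amplified Morimae--Fitzsimons energy check (Protocol \ref{protocol:mf16}, Claim \ref{claim:proof-amplification-works}) accepts any fixed state in a no-instance with probability exponentially small in $|x|$ — negligible, not $3/4$. If the reduction to that check were tight up to negligible error, the resulting soundness parameter would be negligible, not $3/4$. Second, the reduction from the cheating prover's acceptance probability to the energy-check acceptance probability is \emph{not} tight up to negligible error: Mahadev's soundness guarantee (Claim \ref{claim:measurement-protocol}) only controls the prover's Hadamard-round output up to $\sqrt{p_{h,T}} + p_{h,H} + \mu$ in total variation, where $p_{h,T}$ and $p_{h,H}$ are the prover's rejection probabilities in test and Hadamard rounds; and since test and Hadamard rounds are each run with probability $1/2$, a cheating prover who always passes test rounds starts with $1/2$ acceptance ``for free.'' The constant $3/4$ (actually $2/3$, as the paper notes) arises from optimizing the resulting bound
\begin{equation*}
\phi \leq \tfrac12\mu + \sum_h v_h\,\tfrac12\bigl(1 - p_{h,T} + (1 - p_{h,H})(p_{h,H} + \sqrt{p_{h,T}})\bigr) + \tfrac12 p
\end{equation*}
over $p_{h,T}, p_{h,H} \in [0,1]$, where $p$ is the (negligible) acceptance probability of the energy check on $M(\xi)$. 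In other words, the constant is an artifact of the test/Hadamard split in the measurement protocol, and the Hamiltonian-gap argument only contributes the negligible $p$ term. Your plan as stated would either incorrectly claim a negligible soundness error (if the ``up to negligible'' reduction were right) or fail to identify any constant at all (since the XZ amplification is not what supplies the $3/4$). You need to decompose the acceptance event by round type, keep the non-negligible TV error terms from the measurement-protocol soundness claim, and optimize the resulting expression — that is where the constant lives.
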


\begin{lemma}[zero-knowledge; informal]
Assume that LWE is intractable for quantum computers. In a yes-instance execution of Protocol \ref{protocol:main-protocol-informal}, and for any classical probabilistic (resp. quantum) polynomial-time verifier interacting with the honest prover, there exists a classical probabilistic polynomial-time (resp. quantum polynomial-time) simulator such that the simulator's output is classical (resp. quantum) computationally indistinguishable from that of the verifier.
\end{lemma}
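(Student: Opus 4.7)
The plan is to construct, for every polynomial-time verifier $V^*$, a simulator $S^{V^*}$ of the same type (classical for classical $V^*$, quantum for quantum $V^*$) whose output is computationally indistinguishable from $V^*$'s real view. The argument follows the standard hybrid template for zero-knowledge: each concealing primitive in the protocol---the commitment to the encoding key $K$, the encoding $E_K$ itself, and the NP zero-knowledge proof in step~6---will be replaced in turn by its simulated counterpart, and the split coin-flipping step will be handled by extracting $V^*$'s contribution $r_v$ via rewinding.

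On input the Hamiltonian $H$, the simulator proceeds as follows. It first samples a dummy key $K'$ uniformly and sends $z = \textsf{commit}(K',s_p)$; by computational concealment this is indistinguishable from a commitment to the real $K$. After receiving $V^*$'s commitment to $r_v$, the simulator plays one complete dry run of the remainder of the protocol using an arbitrary $r_p$ and an arbitrary dummy state, waits until $V^*$ opens $r_v$, reads off the value, and then applies Watrous's quantum rewinding technique (ordinary rewinding suffices for classical $V^*$) to reset $V^*$ to its state immediately after the commitment to $r_v$. In the second pass, knowing $r_v$, it samples a uniform $r$, sends $r_p = r \oplus r_v$, prepares an efficient state $\sigma_r$ satisfying exactly the $2$-local terms of $H$ selected by $r$---efficient because each term acts on only two qubits---encodes it under $K'$, and runs the measurement protocol honestly. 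Once $V^*$ reveals $r_v$ (necessarily to the same value, by perfect binding) and sends outcomes $u$, the simulator performs the trapdoor check exactly as the honest prover would and invokes the NP zero-knowledge simulator on the statement of step~6, which holds by construction since $(K',s_p)$ witnesses it.

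Indistinguishability of $S$'s output from the real view is then established by a sequence of hybrids. Replacing the honest NP proof by the NP simulator's transcript is indistinguishable by the zero-knowledge property of the NP system; replacing the commitment to the real key by a commitment to $K'$ is indistinguishable by the computational hiding of the commitment; replacing $E_K(\sigma^{\otimes m})$ by $E_{K'}(\sigma_r)$ inside the measurement subprotocol is indistinguishable by the hiding property of the trap-code encoding, which is essentially a quantum one-time pad whose key is concealed by the commitment; and the step that converts the honest coin-flipping into its rewound counterpart is justified by the correctness of Watrous's rewinding lemma.

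The main obstacle is the measurement-protocol hybrid, where I need to argue that the \emph{full} transcript of the ETCFF-based subprotocol of~\cite{measurement}---not only the final outcomes $u$, but every trapdoor-function message exchanged along the way---remains indistinguishable after the underlying state is swapped. This does not follow from the soundness of~\cite{measurement} and instead requires a careful, step-by-step propagation of the hiding of $K$ through each action the honest prover performs on $E_K(\sigma)$ inside the measurement protocol. The splitting of the coin-flipping into two stages is precisely the design feature that makes Watrous-style rewinding possible before $r$ is needed for simulation, and the $2$-locality of the XZ Hamiltonian is what allows $\sigma_r$ to be prepared efficiently; together these choices are what make the plan go through.
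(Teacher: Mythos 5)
Your proposal correctly identifies the hybrid structure---swap the commitment for a commitment to a dummy key, swap the witness for a state $\sigma_r$ passing the chosen challenge, swap the NP proof for its simulator---and you are right that the measurement subprotocol is the crux. But the plan as written does not yield a \emph{classical} simulator, which is exactly the assertion of the lemma for a classical $V^*$. Your simulator ``encodes [$\sigma_r$] under $K'$, and runs the measurement protocol honestly.'' An honest execution of the measurement protocol on the quantum state $E_{K'}(\sigma_r)$ requires preparing that state, coherently evaluating the ETCFF functions, measuring output registers, and later performing standard- or Hadamard-basis measurements of committed qubits and preimage registers; none of this is classically feasible. The simulator you build is therefore quantum. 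Moreover, the obstacle you then flag---``careful, step-by-step propagation of the hiding of $K$''---mischaracterizes what is actually hard: once $V^*$ reveals its trapdoors in step 7, nothing it has seen is computationally hidden from it anymore, so no argument built purely on the concealment of $K$ can close the gap between a quantum $I_2$ and a classical one.

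What makes de-quantization possible, and what is missing from your plan, is the combination of two observations made in the paper. First, $\rho_r$ (Remark \ref{remark:rho_r}) is a tensor product of $\ket{0},\ket{1},\ket{+},\ket{-}$, so the concatenated Steane encoding of each logical qubit is a flat superposition over the polynomially sized set $\mathcal{D}^0_N\cup\mathcal{D}^1_N$; together with transversality and the simplicity of the trap qubits, this makes both standard- and Hadamard-basis measurements of $E(\rho_r)$ classically samplable. The classical $I_2'$ generates each $y_i$ by sampling a simulated standard-basis outcome $\beta_i$, choosing a uniformly random preimage $x_i$, and setting $y_i=\eta_{\kappa_i}(\beta_i\|x_i)$; in a test round this reproduces the honest distribution, and in a Hadamard round $I_2'$ instead sends a uniformly random string $s_i$. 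Second, this cheat is repaired \emph{after the fact}: once $V^*$ opens the trapdoors and $I_3'$ has verified them via Protocol \ref{protocol:trapdoor-and-key-check}, $I_3'$ deduces $h$ and retroactively chooses a one-time-pad $Z$-key $b'$ so that the previously sent random $s_i$ are exactly what an honest quantum prover would have obtained for that $b'$. This retroactive key-fixing is the real purpose of the trapdoor reveal and trapdoor check, and without it the classical simulator has no way to make its Hadamard-round messages consistent with a witness for the NP statement.

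On the coin-flipping, you propose extracting $r_v$ by rewinding $V^*$ through a dry run. The paper instead eliminates the coin-flipping by introducing an intermediary $I$ so that $(I,V^*)$ together act as a cheating verifier for Protocol \ref{protocol:modifiedqma}, and then invokes the zero-knowledge property of that protocol (Lemma \ref{lem:modified-qma-proof}) wholesale. This absorbs the hybrid you call ``$E_K(\sigma^{\otimes m})\to E_{K'}(\sigma_r)$'': that swap is not merely ``hiding of the trap code,'' but rests on the Pauli-twirl and XOR-attack analysis in the proof of Lemma \ref{lem:modified-qma-proof}. Your rewinding route is not ruled out, but note that $V^*$ opens $r_v$ only in a Hadamard round, so the dry run must account for the cases where $V^*$ chooses a test round or aborts and $r_v$ is never revealed; the proposal as stated does not handle this.
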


The reason we delay the verifier's reveal of $r_v$ (rather than completing the coin-flipping in one step, as is done in the protocol in \cite{qma}) is fairly easily explained. In our classical-verifier protocol, the prover cannot physically send the quantum state $E_K(\rho)$ to its verifier before the random string $r$ is decided, as the prover of the protocol in \cite{qma} does. If we allow our prover to know $r$ at the time when it performs measurements on the witness $\rho$, it will trivially be able to cheat.

The trapdoor check, meanwhile, is an addition which we make because we wish to construct a \emph{classical} simulator for our protocol when we prove that it is zero-knowledge. Since our verifier is classical, we need to achieve a classical simulation of the protocol in order to prove that its execution (in yes-instances) does not impart to the verifier any knowledge it could not have generated itself. During the measurement protocol, however, the prover is required to perform quantum actions which no classical polynomial-time algorithm could simulate unless it had access to the verifier's function trapdoors. Naturally, we cannot ask the verifier to reveal its trapdoors before the measurement protocol takes place. As such, we ask the verifier to reveal them immediately afterwards instead, and show in Section \ref{section:zk} that this (combined with the encryption-scheme properties of the prover's encoding $E_K$) allows us to construct a classical simulator for Protocol \ref{protocol:main-protocol-informal} in yes-instances.

The organisation of the paper is as follows.
\begin{enumerate}[itemsep=1pt, topsep=0pt]
\item Section 2 (`Ingredients') outlines the other protocols which we use as building blocks.
\item Section 3 (`The protocol') introduces our argument system for QMA.
\item Section 4 (`Completeness of protocol') gives a completeness lemma for the argument system introduced in section 3.
\item Section 5 (`Soundness of protocol') proves that the argument system introduced in section 3 is sound against quantum polynomial-time provers.
\item Section 6 (`Zero-knowledge property of protocol') proves that the argument system is zero-knowledge (that yes-instance executions can be simulated classically).
\end{enumerate}

\begin{remark}
\label{rk:2pc}
As Broadbent et al.\ note in~\cite[Section 1.3]{qma}, argument systems can often be made zero-knowledge by employing techniques from secure two-party computation (2PC). The essential idea of such an approach, applied to our particular problem, is as follows: the prover and the verifier would jointly simulate the classical verifier of the \cite{measurement} measurement protocol using a (classical) secure two-party computation protocol, and zero-knowledge would follow naturally from simulation security. (This technique is similar in spirit to those which are used in \cite{ip-in-czk} to show that any classical-verifier interactive \emph{proof} system can be made zero-knowledge.) We think that the 2PC approach applied to our problem would have many advantages, including that it is more generally applicable than our approach; however, we also believe that our approach is a more direct and transparent solution to the particular problem at hand, and that it provides an early example of how two important results might be fruitfully combined. As such, we expect that our approach may more easily lead to extensions and improvements.
\end{remark}

\paragraph{Related work.} Subsequent to the completion of this work, there have been several papers which explore other extensions and applications of the argument system from~\cite{measurement}, and also papers which propose  zero-knowledge protocols (with different properties from ours) for QMA. Many of these works focus on decreasing the amount of interaction required to implement a proof or argument system for QMA. Although none of these works directly builds on or supersedes ours, we review them briefly for the reader's convenience. In the category of extensions on the work of \cite{measurement}, we mention~\cite{alagic2019non}, which proposes a \emph{non-interactive} zero-knowledge variant of the Mahadev protocol and proves its security in the quantum random oracle model. (Of course, our protocol is interactive, and our analysis holds in the standard model.) In the category of `short' proof and argument systems for QMA, we mention three independent works. In~\cite{bitansky2019post}, the authors present a constant-round computationally zero-knowledge argument system for QMA. In~\cite{broadbent2019zero} and~\cite{coladangelo2019non} the authors present non-interactive zero-knowledge proof and argument systems, respectively, for QMA, with different types of setup phases. The main difference between all three of these new protocols and our protocol is that the three protocols mentioned all involve the exchange of quantum messages (although, in~\cite{coladangelo2019non}, only the setup phase requires quantum communication).

\emph{Acknowledgments.} We thank Zvika Brakerski, Andru Gheorghiu, and Zhengfeng Ji for useful discussions. We thank an anonymous referee for suggesting the approach based on secure 2PC sketched in Remark~\ref{rk:2pc}. 
Thomas Vidick is supported by NSF CAREER Grant CCF-1553477, AFOSR YIP award number FA9550-16-1-0495, MURI Grant FA9550-18-1-0161, a CIFAR Azrieli Global Scholar award, and the IQIM, an NSF Physics Frontiers Center (NSF Grant PHY-1125565) with support of the Gordon and Betty Moore Foundation (GBMF-12500028). Tina Zhang acknowledges support from the Richard G. Brewer Prize and Caltech's Ph11 program.

\section{Ingredients}
\label{section:ingredients}
The protocol we present in section \ref{section:protocol} combines techniques which were introduced in prior works for the design of protocols to solve related problems. In this section, we outline these protocols in order to introduce notation and groundwork which will prove useful in the remainder of the paper. We also provide formal definitions of QMA and of zero-knowledge.

\subsection{Definitions}

\begin{definition}[QMA]

The following definition is taken from \cite{qma}.

  A promise problem $A = (A_{yes},A_{no})$ is contained in the complexity
  class $\mathrm{QMA}_{\alpha,\beta}$ if there exists a polynomial-time generated
  collection
  \begin{equation}
    \bigl\{V_x\,:\,x\in A_{yes} \cup A_{no}\bigr\}
  \end{equation}
  of quantum circuits and a polynomially bounded function $p$ possessing the
  following properties:
  \begin{enumerate}
  \item[1.]
    For every string $x\in A_{yes}\cup A_{no}$, one has that $V_x$ is a
    measurement circuit taking $p(\abs{x})$ input qubits and outputting
    a single bit.
  \item[2.]
    \emph{Completeness}.
    For all $x\in A_{yes}$, there exists a $p(\abs{x})$-qubit state $\rho$
    such that $\Pr(V_x(\rho) = 1) \geq \alpha$.
  \item[3.]
    \emph{Soundness}.
    For all $x\in A_{no}$, and every $p(\abs{x})$-qubit state
    $\rho$, it holds that $\Pr(V_x(\rho) = 1) \leq \beta$.
  \end{enumerate}
\end{definition}

In this definition, $\alpha,\beta\in[0,1]$ may be constant values or functions
of the length of the input string~$x$.
When they are omitted, it is to be assumed that they are $\alpha = 2/3$ and
$\beta = 1/3$.
Known error reduction methods~\cite{kitaev2002classical,marriott2005quantum} imply that a wide
range of selections of $\alpha$ and $\beta$ give rise to the same complexity
class.
In particular, $\mathrm{QMA}$ coincides with $\mathrm{QMA}_{\alpha,\beta}$ for
$\alpha = 1 - 2^{-q(\abs{x})}$ and $\beta = 2^{-q(\abs{x})}$, for any
polynomially bounded function $q$.

\begin{definition}[Zero-knowledge]
  Let $(P, V)$ be an interactive proof system (with a classical verifier $V$) for a promise problem $A = (A_{yes}, A_{no})$. Assume that (possibly among other arguments) $P$ and $V$ both take a problem instance $x \in \{0,1\}^*$ as input. $(P,V)$ is
  \emph{computational zero-knowledge} if, for every probabilistic polynomial-time (PPT) $V^*$, there exists a polynomial-time generated
 simulator $S$ such that, when $x \in A_{yes}$, the distribution of $V^*$'s final output after its interaction with the honest prover $P$ is \emph{computationally indistinguishable} from $S$'s output distribution. More precisely, let $\lambda$ be a security parameter, let $n$ be the length of $x$ in bits, and and let $\{D_{n,\lambda}\}_{n, \lambda}$ and $\{S_{n,\lambda}\}_{n, \lambda}$ be the two distribution ensembles representing, respectively, the verifier $V^*$'s output distribution after an interaction with the honest prover $P$ on input $x$, and the simulator's output distribution on input $x$. If $(P,V)$ is computationally zero-knowledge, we require that, for all PPT algorithms $A$, the following holds:
 
 \begin{equation*}
 \Big| \underset{y \leftarrow D_{n,\lambda}}{\mathrm{Pr}}[A(y) = 1] - \underset{y \leftarrow S_{n,\lambda}}{\mathrm{Pr}}[A(y) = 1] \Big| = \mu(n) \nu(\lambda),
 \end{equation*}
 
 where $\mu(\cdot)$ and $\nu(\cdot)$ are negligible functions.

  \label{def_qczk}
\end{definition}

\subsection{Single-qubit-verifier proof system for QMA (\cite{xz})}

Morimae and Fitzsimons (\cite{xz}) present a proof system for languages (or promise problems) in QMA whose verifier is classical except for a single-qubit quantum register, and which is sound against arbitrary quantum provers. The proof system relies on the QMA-completeness of the 2-local XZ Hamiltonian problem, which is defined as follows.

\begin{definition}[2-local XZ Hamiltonian (promise) problem]
\label{def:2-local-xz-problem}~\\
\emph{Input.} An input to the problem consists of a tuple $x = (H, a, b)$, where
\begin{enumerate}[itemsep=1pt, topsep=0pt]
\item $H = \sum_{s=1}^S d_s H_s$ is a Hamiltonian acting on $n$ qubits, each term $H_s$ of which 
\begin{enumerate}[itemsep=1pt, topsep=0pt]
\item has a weight $d_s$ which is a polynomially bounded rational number,
\item satisfies $0 \leq H_s \leq I$,
\item acts as the identity on all but a maximum of two qubits,
\item acts as the tensor product of Pauli observables in $\{\sigma_X, \sigma_Z\}$ on the qubits on which it acts nontrivially.
\end{enumerate}
\item $a$ and $b$ are two real numbers such that
\begin{enumerate}[itemsep=1pt, topsep=0pt]
\item $a < b$, and
\item $b - a = \Omega(\frac{1}{\textsf{poly}(|x|)})$. 
\end{enumerate}
\end{enumerate}

\emph{Yes:} There exists an $n$-qubit state $\sigma$ such that $\big< \sigma, H \big> \leq a.$\footnote{The angle brackets $\big< \cdot , \cdot \big>$ denote an inner product between two operators which is defined as follows: $\big< A, B \big> = \mathrm{Tr}(A^* B)$ for any $A, B \in \mathrm{L}(\mathcal{X}, \mathcal{Y})$, where the latter denotes the space of linear maps from a Hilbert space $\mathcal{X}$ to a Hilbert space $\mathcal{Y}$.}

\emph{No:} For every $n$-qubit state $\sigma$, it holds that $\big< \sigma, H \big> \geq b.$
\end{definition}

\begin{remark}
Given a Hamiltonian $H$, we call any state $\sigma^*$ which causes $\big< \sigma^*, H \big>$ to take its minimum possible value a \emph{ground state} of $H$, and we refer to the value $\big< \sigma^*, H \big>$ as the \emph{ground energy} of $H$. \\
\end{remark}

The following theorem is proven by Biamonte and Love in \cite[Theorem 2]{xz-qma-complete}.

\begin{theorem}\label{thm:xz-complete}
The 2-local XZ Hamiltonian problem is complete for QMA.
\end{theorem}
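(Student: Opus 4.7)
The plan is to prove Theorem \ref{thm:xz-complete} along the standard template for establishing QMA-completeness of a Hamiltonian promise problem: (i) show containment in QMA, and (ii) reduce from a known QMA-complete Hamiltonian problem to the 2-local XZ version. Containment is the easy direction. Given an alleged ground state $\sigma$, the verifier selects a term $H_s$ with probability proportional to $d_s / \sum_s d_s$, measures the corresponding 2-qubit Pauli observable on $\sigma$, and accepts/rejects according to the $\{0,1\}$-valued projector $H_s$. Since each $H_s$ is a tensor product of $\sigma_X$ and $\sigma_Z$ on two qubits with operator norm at most one, this single-term measurement can be implemented by a polynomial-size quantum circuit. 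The expected acceptance probability is a known affine function of $\langle \sigma, H \rangle$, so the promise gap $b-a = \Omega(1/\poly(|x|))$ translates into a polynomial gap in acceptance probability, which can then be amplified to the standard QMA thresholds.

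For QMA-hardness, I would start from the 2-local Hamiltonian problem of Kempe, Kitaev and Regev, which is QMA-complete even when the terms are of the form $\alpha_{ij}^{PQ} P_i \otimes Q_j$ for arbitrary Paulis $P, Q \in \{\sigma_X, \sigma_Y, \sigma_Z\}$. The goal is then to eliminate all terms involving $\sigma_Y$ while preserving the low-energy spectrum up to a polynomially small shift. The standard way to do this is via \emph{perturbative gadgets}: one introduces auxiliary qubits and a heavy penalty Hamiltonian $H_0$ (made of $XX$ and $ZZ$ interactions) whose ground space encodes a constraint, and then a perturbation $V$ (also built from $XZ$-type interactions) whose self-energy expansion at second (or higher) order effectively reproduces the target $Y$-containing interaction on the logical subspace. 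The Schrieffer--Wolff / self-energy analysis gives quantitative control on how closely the effective Hamiltonian on the low-energy subspace matches the target, with error inversely polynomial in the penalty strength.

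The main obstacle I anticipate is exactly this perturbative analysis: one must choose the penalty strength polynomially large enough to make the induced error smaller than the promise gap $b-a$, while keeping all weights $d_s$ still polynomially bounded and ensuring each gadget term remains 2-local. Concretely, I would (1) exhibit gadgets that realise $\sigma_Y \otimes \sigma_Y$, $\sigma_X \otimes \sigma_Y$, and $\sigma_Z \otimes \sigma_Y$ from $XZ$-only interactions; (2) show that applying the gadget to each $Y$-containing term of the source Hamiltonian yields an overall XZ Hamiltonian $H'$ whose ground energy differs from that of $H$ by at most some $\varepsilon = O(1/\poly)$; and (3) verify that after suitably rescaling and shifting so that each term satisfies $0 \le H'_s \le I$, the new thresholds $a' = a + \varepsilon$, $b' = b - \varepsilon$ still satisfy $b' - a' = \Omega(1/\poly)$. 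The bookkeeping of (3) is routine but must be done carefully; the quantitative control in (2) is the real technical content, and since it is the subject of Biamonte and Love's paper I would cite their construction for the actual gadget design rather than reconstruct it from scratch.
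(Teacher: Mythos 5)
The paper does not give a proof of Theorem~\ref{thm:xz-complete} at all: it is stated as a citation to Biamonte and Love (reference~\cite{xz-qma-complete}, their Theorem~2), and the paper relies on that result as a black box. Your proposal is therefore not competing with an argument in the paper; it is an outline of what the \emph{cited} proof does, and you yourself ultimately defer to Biamonte--Love for the gadget construction, so in that sense the two ``approaches'' coincide by construction.

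As a sketch of the Biamonte--Love argument your outline is reasonable: containment in QMA by sampling a term and measuring the corresponding two-qubit observable, with the acceptance probability an affine function of $\langle\sigma,H\rangle$ that one amplifies by repetition; hardness by reducing from Kempe--Kitaev--Regev 2-local Hamiltonian and using perturbative ($XX$/$ZZ$-based) gadgets to synthesize the $Y$-containing couplings up to inverse-polynomial error, followed by the bookkeeping to show that the promise gap survives the shift and rescaling. Two small corrections to the containment half: the sampling probability should be $|d_s|/\sum_s|d_s|$, not $d_s/\sum_s d_s$, since the $d_s$ are signed; and calling $H_s$ a ``$\{0,1\}$-valued projector'' does not match the paper's (admittedly somewhat loosely worded) Definition~\ref{def:2-local-xz-problem}---the verifier actually measures a $\pm1$-valued tensor product of Pauli observables and then accepts or rejects based on the product of outcomes and the sign of $d_s$, as in Protocol~\ref{protocol:mf16}. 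Neither affects the high-level structure, but both matter if you wanted to make the containment argument precise.
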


We now describe an amplified version of the protocol presented in \cite{xz}, and give a statement about its completeness and soundness which we will use. (See \cite{xz} for a more detailed presentation of the unamplified version of this protocol.)

\begin{protocol}[Amplified variant of the single-qubit-verifier proof system for QMA from~\cite{xz}]
\label{protocol:mf16}
\quad\\

\emph{Notation.}
Let $L = (L_{yes}, L_{no})$ be any promise problem in QMA; let $x\in\{0,1\}^*$ be an input; and let $(H,a,b)$ be the instance of the 2-local XZ Hamiltonian problem to which $x$ reduces.
\begin{enumerate}[itemsep=1pt, topsep=0pt]
\item If $x \in L_{yes}$, the ground energy of $H$ is at most $a$.
\item if $x \in L_{no}$, the ground energy of $H$ is at least $b$.
\item $b - a \geq \frac{1}{\textsf{poly}(|x|)}$.
\end{enumerate}
Let $H = \sum_{s=1}^S d_s H_s$, as in Definition \ref{def:2-local-xz-problem}. Define
\begin{gather*}
\pi_s = \frac{|d_s|}{\sum_s |d_s|}\;.
\end{gather*}
\emph{Parties.}
The proof system involves 
\begin{enumerate}[itemsep=1pt, topsep=0pt]
\item A \emph{verifier}, who implements a classical probabilistic polynomial-time procedure with access to a one-qubit quantum register; and
\item A \emph{prover}, who is potentially unbounded, but whose honest behaviour in yes-instances can be implemented in quantum polynomial time. 
\end{enumerate}
The verifier and the prover communicate quantumly.

\emph{Inputs.}
\begin{enumerate}[itemsep=1pt, topsep=0pt]
\item Input to the verifier: the Hamiltonian $H$ and the numbers $a$ and $b$.
\item Input to the prover: the Hamiltonian $H$, the numbers $a$ and $b$, and the quantum state $\rho = \sigma^{\otimes m}$, where $\sigma$ is a ground state of the Hamiltonian $H$.
\end{enumerate}

\emph{Protocol.}
\begin{enumerate}[itemsep=1pt, topsep=0pt]
\item The verifier selects uniformly random coins $r=(r_1,\ldots,r_m)$. 
\item 
For each $j \in \{1, \ldots, m\}$, the verifier uses $r_j$ to select a random $s_j\in\{1,\ldots,S\}$ according to the distribution $D$ specified as follows:
\begin{gather*}
D(s) = \pi_s, \quad \text{for } s \in \{1, \ldots, S\}\;.
\end{gather*}
\item The prover sends a state $\rho$ to the verifier one qubit at a time. (The honest prover sends the state $\sigma^{\otimes m}$ that consists of $m$ copies of the ground state of $H$.)
\item The verifier measures $H_{s_j}$ for $j = 1, \ldots, m$, taking advantage of the fact that—if the prover is honest—it is given $m$ copies of $\sigma$. (`Measuring $H_{s_j}$', in this case, entails performing at most two single-qubit measurements, in either the standard or the Hadamard basis, on qubits in $\rho$, and then computing the product of the two measurement outcomes.)
\item The verifier initialises a variable \textsc{Count} to 0. For each $j\in\{1,\ldots,m\}$, if the $j$th product that it obtained in the previous step was equal to $-\mathrm{sign}(d_j)$, the verifier adds one to \textsc{Count}.
\item If $\frac{\textsc{Count}}{m}$ is closer to $\frac{1}{2} - \frac{a}{\sum_s 2 |d_s|}$ than to $\frac{1}{2} - \frac{b}{\sum_s 2 |d_s|}$, the verifier accepts. Otherwise, it rejects. 
\end{enumerate}
\end{protocol}

\begin{claim}
\label{claim:proof-amplification-works}
Given an instance $x=(H,a,b)$ of the $2$-local XZ Hamiltonian problem, there is a polynomial $P$ (depending only on $a$ and $b$) such that, for any $m = \Omega(P(|x|))$, the following holds.
In a yes-instance, the procedure of Protocol \ref{protocol:mf16} accepts the state $\rho = \sigma^{\otimes m}$ with probability exponentially close (in $|x|$) to 1. In a no-instance, the probability that it accepts any state is exponentially small in $|x|$. 
\end{claim}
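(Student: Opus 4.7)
The plan is to reduce the claim to a standard concentration argument. I first compute the expected per-round score. In round $j$ the verifier samples $s_j$ with probability $\pi_{s_j} = |d_{s_j}|/\sum_{s}|d_s|$ and measures the Pauli tensor $H_{s_j}$ on one $n$-qubit block of the state it receives. Writing $\tau_j$ for the reduced state on block $j$ and $X_j\in\{0,1\}$ for the indicator that the product of outcomes equals $-\mathrm{sign}(d_{s_j})$, a direct computation (using that $\Pr[\mathrm{product}=+1\,|\,s,\tau]= (1+\langle \tau,H_s\rangle)/2$) gives
\begin{equation*}
\mathbb{E}[X_j \mid \tau_j]
 = \frac{1}{2} - \frac{\langle \tau_j, H\rangle}{2\sum_s |d_s|}.
\end{equation*}
Setting $p_a := \tfrac{1}{2} - a/(2\sum_s|d_s|)$ and $p_b := \tfrac{1}{2} - b/(2\sum_s|d_s|)$, the acceptance condition in step 6 is precisely that the empirical mean $\tfrac{1}{m}\textsc{Count}$ lies above the midpoint $(p_a+p_b)/2$. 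In a yes-instance, $\langle\sigma,H\rangle\le a$ forces $\mathbb{E}[X_j]\ge p_a$; in a no-instance, every $n$-qubit reduced state has energy $\ge b$, so $\mathbb{E}[X_j\mid\tau_j]\le p_b$. The gap $p_a-p_b = (b-a)/(2\sum_s|d_s|)\ge 1/\mathrm{poly}(|x|)$, since $\sum_s |d_s|$ and $1/(b-a)$ are polynomially bounded in $|x|$.

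I would then apply concentration separately in the two cases. The yes-instance is immediate: the honest prover sends $\sigma^{\otimes m}$, so the $X_j$ are i.i.d.\ Bernoulli with mean $\ge p_a$, and Hoeffding's inequality gives acceptance probability at least $1-\exp(-\Omega(m(p_a-p_b)^2))$. The no-instance is the step I expect to require the most care, because a cheating prover may send a state that is entangled across the $m$ blocks, so the $X_j$ are not independent. The key observation is that the measurements in distinct rounds act on disjoint qubit registers and therefore commute, so I may analyse them sequentially. Conditioned on the history $\mathcal{F}_{j-1}$ of the first $j-1$ rounds and on the freshly random $s_j$, the post-measurement reduced state on block $j$ is still a valid $n$-qubit density matrix, so by the no-instance promise its energy is at least $b$; hence $\mathbb{E}[X_j\mid\mathcal{F}_{j-1}]\le p_b$ almost surely. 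The centred sums $\sum_{i\le j}\big(X_i - \mathbb{E}[X_i\mid\mathcal{F}_{i-1}]\big)$ then form a bounded martingale, to which Azuma--Hoeffding applies, giving rejection probability at least $1-\exp(-\Omega(m(p_a-p_b)^2))$.

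Finally, I would fix $m$ polynomial in $|x|$ large enough to drive both tail bounds to $2^{-\Omega(|x|)}$. Since $(p_a-p_b)^{-2}\le\mathrm{poly}(|x|)$, any $m\ge c\cdot |x|\cdot(\sum_s|d_s|)^2/(b-a)^2$ suffices, which yields a polynomial $P$ with the advertised property. The only non-routine step is the no-instance martingale reformulation; once cross-block entanglement is handled by switching from Hoeffding to Azuma via the commuting-measurement observation, the rest of the argument is standard Chernoff-type bookkeeping.
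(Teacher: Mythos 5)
Your proposal is correct and takes essentially the same approach as the paper: the paper also computes the same per-round expectation bounds (citing~\cite{proof-of-gap}), invokes Chernoff for the yes-instance (since the honest witness is a tensor product), and handles cross-block correlations in the no-instance via Azuma's inequality applied to the Doob martingale $B_j = \mathrm{E}[X \mid X_1,\ldots,X_j]$, which is functionally equivalent to the centred-sum martingale you construct.
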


\begin{proof}
Consider the probability (over the choice of $r_j$ and the randomness arising from measurement) that the $j$th measurement from step 4 of Protocol \ref{protocol:mf16}, conditioned on previous measurement outcomes,  yields $-\mathrm{sign}(d_j)$. Denote this probability by $q_j$.

As shown in~\cite[Section IV]{proof-of-gap}, it is not hard to verify that 
\begin{enumerate}[itemsep=1pt, topsep=0pt]
\item when $x \in L$, if the prover sends the honest witness $\sigma^{\otimes m}$, then $q_j \geq \frac{1}{2} - \frac{a}{\sum_s 2|d_s|}$, and
\item when $x \notin L$, for any witness that the prover sends, $q_j \leq \frac{1}{2} - \frac{b}{\sum_s 2|d_s|}$.
\end{enumerate}
The difference between the two cases is inverse polynomial in the size of the input to the 2-local XZ Hamiltonian problem. It is straightforward to show that, for an appropriate choice of $m$, this inverse polynomial gap can be amplified to an exponential one: see Appendix \ref{appendix:chernoff-bound}.
\end{proof}

\begin{remark}
\label{remark:rho_r}
It will be useful later to establish at this point that, if the string $r$ from step 1 of Protocol \ref{protocol:mf16} is fixed, it is simple to construct a state $\rho_r$ which will pass the challenge determined by $r$ with probability 1. One possible procedure is as follows.
\begin{enumerate}[itemsep=1pt, topsep=0pt]
\item For each $j \in 1, \ldots, m$:

Suppose that $H_{s_j} = d_j P_1 P_2$, and that $P_1,P_2\in\{\sigma_X,\sigma_Z\}$ act on qubits $\ell_1$ and $\ell_2$, respectively.
\begin{enumerate}[itemsep=1pt, topsep=0pt]
\item If $-\mathrm{sign}(d_j) = 1$, initialise the $((j-1)n + \ell_1)$th qubit to the +1 eigenstate of $P_1$, and likewise, initialise the $((j-1)n + \ell_2)$th qubit to the +1 eigenstate of $P_2$.
\item If $-\mathrm{sign}(d_j) = -1$, initialise the $((j-1)n + \ell_1)$th qubit to the +1 eigenstate of $P_1$, and initialise the $((j-1)n + \ell_2)$th qubit to the $-1$ eigenstate of $P_2$.
\end{enumerate}
\item Initialise all remaining qubits to $\ket{0}$.
\end{enumerate}
It is clear that the $\rho_r$ produced by this procedure is a tensor product of $\ket{0}$, $\ket{1}$, $\ket{+}$ and $\ket{-}$ qubits.
\end{remark}

\subsection{Measurement protocol (\cite{measurement})}
\label{section:mah18}
In \cite{measurement}, Mahadev presents a measurement protocol between a quantum prover and a classical verifier which, intuitively, allows the verifier to obtain trustworthy standard and Hadamard basis measurements of the prover's quantum state from purely classical interactions with it. The soundness of the measurement protocol relies upon the security properties of functions that \cite{measurement} terms \emph{noisy trapdoor claw-free functions} and \emph{trapdoor injective functions}, of which Mahadev provides explicit constructions presuming upon the hardness of LWE. (A high-level summary of these constructions can be found in Appendix \ref{appendix:trapdoor-check}.) Here, we summarise the steps of the protocol, and state the soundness property that it has which we will use.

\begin{protocol}[Classical-verifier, quantum-prover measurement protocol from \cite{measurement}]
\label{protocol:mah18}~

\emph{Parties.}
The proof system involves
\begin{enumerate}[itemsep=1pt, topsep=0pt]
\item A \emph{verifier}, which implements a classical probabilistic polynomial-time procedure; and
\item A \emph{prover}, which implements a quantum polynomial-time procedure. 
\end{enumerate}
The verifier and the prover communicate classically.

\emph{Inputs.}
\begin{enumerate}[itemsep=1pt, topsep=0pt]
\item Input to the prover: an $n$-qubit quantum state $\rho$, whose qubits the verifier will attempt to derive honest measurements of in the standard and Hadamard bases.
\item Input to the verifier:
\begin{enumerate}[itemsep=1pt, topsep=0pt]
\item A string $h \in \{0,1\}^n$, which represents the bases (standard or Hadamard) in which it will endeavour to measure the qubits of $\rho$. $h_i = 0$ signifies that the verifier will attempt to obtain measurement outcomes of the $i$th qubit of $\rho$ in the standard basis, and $h_i = 1$ means that the verifier will attempt to obtain measurement outcomes of the $i$th qubit of $\rho$ in the Hadamard basis.
\item An \emph{extended trapdoor claw-free function family} (ETCFF family), as defined in Section 4 of \cite{measurement}. The description of an ETCFF family specifies a large number of algorithms, and we do not attempt to enumerate them. Instead, we proceed to describe the verifier's prescribed actions at a level of detail which we believe to be sufficient for our purposes, and refer the reader to \cite{measurement} for a finer exposition.
\end{enumerate}
\end{enumerate}

\emph{Protocol.} 
\begin{enumerate}[itemsep=1pt, topsep=0pt]
\item For each $i \in \{1, \ldots, n\}$ (see `Inputs' above for the definition of $n$), the verifier generates an ETCFF function key $\kappa_i$ using algorithms provided by the ETCFF family, along with a trapdoor $\tau_{\kappa_i}$ for each function, and sends all of the keys $\kappa$ to the prover. It keeps the trapdoors $\tau$ to itself. If $h_i = 0$, the $i$th key $\kappa_i$ is a key for an \emph{injective} function $g$, and if $h_i = 1$, it is a key for a \emph{two-to-one} function $f$ known as a `noisy trapdoor claw-free function'. Intuitively, the $g$ functions are one-to-one trapdoor one-way functions, and the $f$ functions are two-to-one trapdoor collision-resistant hash functions. The keys for 
 $f$ functions and those for $g$ functions are computationally indistinguishable. (For convenience, we will from now on refer to the function specified by $\kappa_i$ either as $f_{\kappa_i}$ or as $g_{\kappa_i}$. Alternatively, we may refer to it as $\eta_{\kappa_i}$ if we do not wish to designate its type.\footnote{The letter $\eta$ has been chosen because it bears some resemblance to the Latin letter $h$.}) A brief outline of how these properties are achieved using LWE is given in Appendix~\ref{appendix:trapdoor-check}.
 
We make two remarks about the functions $\eta_{\kappa_i}$ which will become relevant later. 
\begin{enumerate}[itemsep=1pt, topsep=0pt]
\item The functions $\eta_{\kappa_i}$ always have domains of the form $\{0,1\} \times \mathcal{X}$, where $\mathcal{X} \subseteq \{0,1\}^w$ for some length parameter $w$.
\item The outputs of both the $f$ and the $g$ functions should be thought of not as strings but as \emph{probability distributions}. The trapdoor $\tau_{\kappa_i}$ inverts the function specified by $\kappa_i$ in the sense that, given a sample $y$ from the distribution $Y = \eta_{\kappa_i}(b \| x)$, along with the trapdoor $\tau_{\kappa_i}$, it is possible to recover $b \| x$, as well as any other $b' \| x'$ which also maps to $Y$ under $\eta_{\kappa_i}$ (should it exist).
\end{enumerate}

\begin{definition}
\label{def:deterministic-etas}
Suppose that $\eta_{\kappa_i}$ is the function specified by $\kappa_i$, whose output on each input $b \| x$ in its domain $\{0,1\} \times \mathcal{X}$ is a probability distribution $Y$. Define a (deterministic) function $\eta^*_{\kappa_i}(b\|x, e)$ which takes as input an $b\|x \in \{0,1\} \times \mathcal{X}$ and a randomness $e \in \mathcal{E}$, for some well-defined finite set $\mathcal{E}$, and returns a sample $y_e$ from the distribution $Y = \eta_{\kappa_i}(b \| x)$.
\end{definition}

\begin{definition}
\label{def:preimage-of-y}
Let $\eta_{\kappa_i}$ be the function specified by $\kappa_i$, with domain $\{0,1\} \times \mathcal{X}$. Let $y$ be a sample from one of the distributions $Y \in \mathcal{Y}$, where $\mathcal{Y}$ is the range of $\eta_{\kappa_i}$. It is guaranteed that the distributions in the range of $\eta_{\kappa_i}$ have compact support, and we call $b\|x \in \{0,1\} \times \mathcal{X}$ a \emph{preimage of $y$} if $y$ is in the support of the probability distribution $\eta_{\kappa_i}(b\|x)$.
\end{definition}

\item The prover uses the function keys $\kappa_1, \ldots, \kappa_n$ to `commit' to the quantum state of each of the $n$ qubits of $\rho$, and sends $n$ commitment strings $y_1, \ldots, y_n$ to the verifier. We direct the reader to Sections 2.2 and 5.1 of \cite{measurement} for a high-level, and then a more detailed, description of the commitment procedure, and explanations of how such a procedure will allow the verifier to extract trustworthy standard and Hadamard basis measurements of the qubits of $\rho$. For convenience, we summarise the procedure without justification here.

For each $i$, the prover concatenates to the qubit indexed by $i$ (which we call, following \cite{measurement}, the $i$th `committed qubit', and whose state we denote by\footnote{Strictly speaking, the state of the $i$th commited qubit may not be a pure state, but we ignore this fact for simplicity of presentation.} $\ket{\psi_i} = \gamma_i \ket{0} + \delta_i \ket{1}$) a register (the `preimage register') containing a uniform superposition over all $x \in \mathcal{X}$ (where $\{0,1\} \times \mathcal{X}$ is the domain of the function $\eta_{\kappa_i}$). It also concatenates to it a register containing a superposition over all $e \in \mathcal{E}$, with $\mathcal{E}$ defined as it is in Definition \ref{def:deterministic-etas}. It then applies the function $\eta^*_{\kappa_i}$ (see Definition \ref{def:deterministic-etas}) in superposition to $\sum_x \ket{\psi_i} \ket{x} \ket{e}$, and collects the outputs in a new register (the `output register'), obtaining the quantum state
\begin{gather*}
\sum_{x \in \{0,1\}^w} \gamma_i \ket{0} \ket{x} \ket{e}\ket{\eta^*_{\kappa_i}(0 \| x, e)} + \delta_i \ket{1} \ket{x} \ket{e} \ket{\eta^*_{\kappa_i}(1 \| x, e)}\;.
\end{gather*}
Finally, the prover measures the output register, obtains a sample $y_i$, and sends $y_i$ to the verifier as its $i$th commitment string.
\item The verifier then chooses at random to run either a \emph{test round} or a \emph{Hadamard round}. Each option is chosen with probability $\frac{1}{2}$.
\begin{enumerate}[itemsep=1pt, topsep=0pt]
	\item If a test round is chosen, the verifier requests standard basis measurements of each committed qubit $\ket{\psi_i}$ and its associated preimage register (recall that, in the previous step, only the output registers were measured), and receives measurement results $\beta_i, x_i$ for $i \in \{1, \ldots, n\}$ from the prover. It then checks, for each $i \in \{1, \ldots, n\}$, whether the $i$th measurement result is a preimage of $y_i$. (See Definition \ref{def:preimage-of-y} for a definition of the phrase `preimage of $y$'.) More precisely: if $h_i = 0$, the verifier accepts iff $y_i$ is in the support of $g_{\kappa_i}(\beta_i \| x_i)$ (recalling that $g_{\kappa_i}(\beta_i \| x_i)$ is a probability distribution); and if $h_i = 1$, the verifier accepts iff $y_i$ is in the support of $f_{\kappa_i}(\beta_i \| x_i)$.
	\item If a Hadamard round is chosen, the verifier requests Hadamard basis measurements of each committed qubit and its associated preimage register, and receives $\beta_i, d_i$ for $i \in \{1, \ldots, n\}$ from the prover. (We note, for future reference, that the distribution of the strings $d_i$ sent by the honest prover is the uniform distribution over $\{0,1\}^w$.) It then obtains the measurement results it desires using the procedure outlined in Section 2.2 of \cite{measurement}:
	\begin{enumerate}[itemsep=1pt, topsep=0pt]
	\item If $h_i = 0$, the verifier ignores the measurement results $\beta_i, d_i$, and uses the trapdoor $\tau_{\kappa_i}$ to invert $y_i$. If the inverse of $y_i$ exists, the inversion procedure outputs a preimage $(m_i, x_{m_i, y_i})$, and the verifier stores $m_i$ as the standard-basis measurement outcome. If the inverse does not exist, the verifier stores a random bit as $m_i$ and rejects. The specification that the verifier stores a random bit is made so that the $n$-bit string of measurement results, $m$, is well-defined even when the verifier rejects.
	\item If $h_i = 1$, the verifier computes the two preimages $x_{0, i}$ and $x_{1, i}$ to $y_i$ under $f_{\kappa_i}$ using the trapdoor $\tau_{\kappa_i}$. It then stores $\beta_i \oplus d_i \cdot (x_{0,i} \oplus x_{1,i})$ as the Hadamard basis measurement outcome. Once again, if either inverse does not exist, the verifier stores a random bit as $m_i$ and rejects.
	\end{enumerate}
\end{enumerate}
\end{enumerate}
\end{protocol}

The soundness property of this measurement protocol (Claim 7.1 in \cite{measurement}) is stated below. Before we state it, we briefly introduce the notation used in \cite{measurement}. (For a more detailed exposition, we refer the reader to Section 5.3 of \cite{measurement}.) $D_{\mathbb{P},h}$ refers to the distribution over measurement results $m \in \{0,1\}^n$ that the verifier obtains when it executes a Hadamard round with the prover labelled $\mathbb{P}$ on the basis choice $h$. $D^C_{\mathbb{P},h}$ is the same distribution, but conditioned on the verifier accepting (in a Hadamard round). $D_{\xi, h}$ is the distribution over measurement outcomes in $\{0,1\}^n$ that would result from directly measuring the quantum state $\xi$ in the bases determined by $h$. $p_{h, T}$ and $p_{h, H}$ are defined so that the verifier's probability of accepting (on basis choice $h$) in a test and a Hadamard round, respectively, are $1 - p_{h,T}$ and $1 - p_{h,H}$. $\| \cdot \|_{TV}$ denotes the total variation norm, and $A \approx_c B$ indicates that two distributions $A$ and $B$ are (quantum) computationally indistinguishable.

\begin{claim}\label{claim:measurement-protocol}
Assume that the Learning With Errors problem (with the same choices of parameters as those made in \cite[Section 9]{measurement}) is quantum computationally intractable. Then, for any arbitrary quantum polynomial-time prover $\mathbb{P}$ who executes the measurement protocol (Protocol \ref{protocol:mah18}) with the honest verifier $V$, there exists a quantum state $\xi$, a prover $\mathbb{P}'$ and a negligible function $\mu$ such that
\begin{gather*}
\|D^C_{\mathbb{P}, h} - D_{\mathbb{P'},h}\|_{TV} \leq \sqrt{p_{h,T}} + p_{h,H} + \mu \quad \text{and} \\
D_{\mathbb{P}',h} \approx_c D_{\xi,h}\;. \label{eq:comp-indist}
\end{gather*}	
\end{claim}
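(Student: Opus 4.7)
The plan is to follow Mahadev's original soundness proof in \cite{measurement}, which proceeds in three main stages. In the first stage, I would pass to a hybrid world in which the verifier generates keys for the injective trapdoor function $g$ at every index, rather than using claw-free keys $f$ at the positions where $h_i = 1$. The ``key indistinguishability'' property of the ETCFF family---a consequence of the quantum hardness of LWE---guarantees that the prover $\mathbb{P}$ cannot distinguish this hybrid from the real protocol except with negligible probability. Take $\mathbb{P}'$ to be $\mathbb{P}$ considered as operating in the all-injective hybrid; computational indistinguishability of the two worlds then gives $D_{\mathbb{P},h} \approx_c D_{\mathbb{P}',h}$ up to a negligible slack, which will furnish (part of) the additive term $\mu$ in the final bound.

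In the second stage, I would exploit the fact that in the hybrid world each $y_i$ is the image of a unique preimage $(\beta_i, x_i)$ under $g_{\kappa_i}$, so that the prover's commitments are statistically binding. This permits the extraction of a quantum state $\xi$ from $\mathbb{P}'$ as follows: purify $\mathbb{P}'$'s state after it outputs the commitments $y = (y_1, \ldots, y_n)$; for each $i$, coherently apply $g_{\kappa_i}^{-1}$ to the output register to move the uniquely-determined preimage into a preimage register (well-defined by injectivity); and then trace out the preimage and auxiliary registers to obtain the state $\xi$ on the $n$ committed qubits. By construction, measuring $\xi$ in the bases $h$ reproduces the idealized verifier output distribution from $\mathbb{P}'$'s commitment, establishing $D_{\mathbb{P}',h} \approx_c D_{\xi,h}$.

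In the third stage, I would bound $\|D^C_{\mathbb{P},h} - D_{\mathbb{P}',h}\|_{TV}$ directly. The test round acceptance probability $1 - p_{h,T}$ restricts the prover's post-commitment state to be close, in trace distance, to one supported on valid preimages of each $y_i$; a standard Jordan's-lemma / quantum rewinding argument (as in Section~7 of \cite{measurement}) converts an acceptance gap of $p_{h,T}$ into a trace-distance bound of $\sqrt{p_{h,T}}$ between the prover's post-commitment state and the canonical committed state extracted above. Conditioning on acceptance in a Hadamard round---an event of probability at least $1 - p_{h,H}$---can shift the measurement distribution by at most $p_{h,H}$ in total variation. Combining these bounds through the triangle inequality, and absorbing all negligible slack into $\mu$, yields the claimed inequality. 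The main obstacle I expect is in the second stage: rigorously constructing $\xi$ and showing that its measurement distribution matches $D_{\mathbb{P}',h}$ up to the $\sqrt{p_{h,T}}$ slack requires reconstructing Mahadev's ``predicted state'' formalism and invoking the adaptive hardcore bit property of the ETCFF family to argue that the prover, conditioned on being consistent in both test and Hadamard rounds, behaves as though it had coherently prepared the canonical committed superposition and measured it honestly. Once that machinery is in place, the remainder is essentially bookkeeping of error terms through the triangle inequality.
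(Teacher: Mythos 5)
This claim is not proved in the paper at all: the authors state it verbatim as Claim~7.1 of Mahadev's paper \cite{measurement} and import it as a black box, so there is no ``paper's own proof'' to compare your attempt against. Your proposal is therefore an attempt to reconstruct Mahadev's argument, and unfortunately it gets the logical architecture wrong in a way that matters.

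The central confusion is the role of the key-replacement hybrid. You define $\mathbb{P}'$ to be $\mathbb{P}$ run in the all-injective ($g$-keys-only) world, and you claim that the computational indistinguishability of $f$ and $g$ keys contributes part of the $\mu$ term in the total variation bound $\|D^C_{\mathbb{P},h}-D_{\mathbb{P}',h}\|_{TV}\le\sqrt{p_{h,T}}+p_{h,H}+\mu$. But computational indistinguishability does not bound total variation distance, so the first inequality cannot be established that way. Worse, if $\xi$ is extracted from $\mathbb{P}'$'s commitment in the all-injective world as you describe, then $D_{\mathbb{P}',h}$ and $D_{\xi,h}$ would be statistically close by construction, not merely $\approx_c$---the $\approx_c$ in the second line of the claim would have no content. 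In Mahadev's actual proof the two bounds have orthogonal sources: $\mathbb{P}'$ lives in the \emph{same} (mixed-key) world as $\mathbb{P}$ and is the ``trivial'' or ``consistent'' prover obtained by decomposing $\mathbb{P}$'s Hadamard-round attack into consistent and inconsistent pieces; the $\sqrt{p_{h,T}}$ and $p_{h,H}$ come from conditioning on test- and Hadamard-round acceptance, while $\mu$ is supplied by the adaptive hardcore bit property (which kills the inconsistent piece). The key replacement hybrid---switching all claw-free keys to injective keys---is used only to establish the second bound, $D_{\mathbb{P}',h}\approx_c D_{\xi,h}$: in the all-$g$ world the commitments determine an underlying state $\xi$, and key indistinguishability carries that back to the real world as $\approx_c$. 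You also describe the $\sqrt{p_{h,T}}$ step as a Jordan's-lemma/rewinding argument, but Mahadev's derivation is a gentle-measurement/Cauchy--Schwarz bound rather than rewinding. Since this is a cited result rather than one established here, the cleanest move is simply to invoke \cite[Claim~7.1]{measurement}; if you do want to sketch its proof, you need to disentangle which assumption supplies which of the two inequalities.
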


\subsection{Zero-knowledge proof system for QMA (\cite{qma})}
\label{section:bjsw16}

In \cite{qma}, Broadbent, Ji, Song and Watrous describe a protocol involving a quantum polynomial-time verifier and an unbounded prover, interacting quantumly, which constitutes a zero-knowledge proof system for promise problems in QMA. (Although it is sound against arbitrary provers, the system in fact only requires an honest prover to perform quantum polynomial-time computations.) We summarise the steps of their protocol below. For details and fuller explanations, we refer the reader to~\cite[Section 3]{qma}.

\begin{protocol}[Zero-knowledge proof system for QMA from \cite{qma}]
\label{protocol:bjsw16}~

\emph{Notation.}
Let $L$ be any promise problem in QMA. For a definition of the \emph{$k$-local Clifford Hamiltonian problem}, see~\cite[Section 2]{qma}. The $k$-local Clifford Hamiltonian problem is QMA-complete for $k=5$; therefore, for all possible inputs $x$, there exists a 5-local Clifford Hamiltonian $H$ (which can be computed efficiently from $x$) whose terms are all operators of the form $C^* \ket{0^k}\bra{0^k} C$ for some Clifford operator $C$, and such that
\begin{enumerate}[itemsep=1pt, topsep=0pt]
\item if $x \in L_{yes}$, the ground energy of $H$ is $\leq 2^{-p}$,
\item if $x \in L_{no}$, the ground energy of $H$ is $\geq \frac{1}{q}$, 
\end{enumerate}
for some positive integers $p$ and $q$.

\emph{Parties.}
The proof system involves 
\begin{enumerate}[itemsep=1pt, topsep=0pt]
\item A \emph{verifier}, who implements a quantum polynomial-time procedure;
\item A \emph{prover}, who is unbounded, but who is only required by the protocol to implement a quantum polynomial-time procedure.
\end{enumerate}
The verifier and the prover communicate quantumly.

\emph{Inputs.} 
\begin{enumerate}[itemsep=1pt, topsep=0pt]
\item Input to the verifier:
\begin{enumerate}[itemsep=1pt, topsep=0pt]
\item The Hamiltonian $H$.
\item A quantum computationally concealing, perfectly binding (classical) commitment protocol.
\item A proof system for NP sound against arbitrary quantum provers.
\end{enumerate}
\item Input to the prover:
\begin{enumerate}[itemsep=1pt, topsep=0pt]
\item The Hamiltonian $H$.
\item The $n$-qubit quantum state $\rho$, where $\rho$ is a ground state of the Hamiltonian $H$.
\item A quantum computationally concealing, perfectly binding (classical) commitment protocol.
\item A zero-knowledge proof system for NP sound against arbitrary quantum provers.
\end{enumerate}
\end{enumerate}

\emph{Protocol.} 
\begin{enumerate}[itemsep=1pt, topsep=0pt]
\item \emph{The prover's encoding step}. The prover firstly applies a concatenated Steane code (which maps every one qubit to $N$ qubits) to each qubit in $\rho$. (For details on the concatenated Steane code, see~\cite[Appendix A.6]{qma}. It will be important to Broadbent et al.'s purposes—and ours, in Section \ref{section:protocol}—that this code admits transversal applications of Clifford operations.) It then executes the following steps:
\begin{enumerate}[itemsep=1pt, topsep=0pt]
\item Concatenate $N$ trap qubits to the end of each logical qubit (alternatively, to the end of each $N$-qubit block) in the result of applying the concatenated Steane code to $\rho$. Each trap qubit is initialised uniformly at random to one of $\ket{0}, \ket{+}, \ket{+_y}$.\footnote{$\ket{+_y}$ here refers to the state $\frac{1}{\sqrt{2}}(\ket{0} + i \ket{1})$.} Denote the string that records the choices of trap qubits for all $n$ logical qubits by $t = t_1, \ldots, t_n$.
\item Permute each $2N$-tuple of qubits in the result of (a) according to a uniformly random permutation $\pi \in S_{2N}$. (Note that the same permutation $\pi$ is applied to every $2N$-tuple.)
\item Apply a Pauli one-time pad $X^a Z^b$, for uniformly random $a, b \in \{0,1\}^{2nN}$, to the entire $2nN$-qubit state.
\end{enumerate}
\begin{remark}
The number $N$ functions as a security parameter, both for this protocol and for the protocol that we construct in section \ref{section:protocol}. Accordingly, we assume that $N$ is polynomial in the size of the verifier's input.
\end{remark}

The prover's encoding applied to $\rho$ is denoted by $E(\rho)$, and the procedure $E$ is fully determined by the encoding key $(t, \pi, a, b)$ which the prover chose to use. At this point, the prover sends the state $E(\rho)$ to the verifier, along with a commitment (using some perfectly binding, computationally concealing classical commitment protocol) to the tuple $(\pi, a, b)$. (A commitment to the sequence of trap qubits $t$ is unnecessary because, in a sense, the trap qubits exist only to check the verifier.) Let the prover's commitment string be denoted $z$.
\item \emph{Coin-flipping protocol.} The prover and the verifier execute a coin-flipping protocol, choosing a string $r$ of fixed length uniformly at random. This random string $r$ determines a local Hamiltonian term $H_r = C^*_r \ket{0^k} \bra{0^k} C_r$ that is to be tested. (This step can be implemented, of course, using the same classical commitment protocol that the prover employed in the previous step.)
\item \emph{Verifier's challenge.} The verifier applies the Clifford $C_r$ transversally to the qubits on which the $k$-local Hamiltonian term $H_r$ acts nontrivially, and measures them in the standard basis. It then sends the measurement results $u_{i_1}, \ldots, u_{i_k}$ which it obtained to the prover. (Each $u_i$ is a $2N$-bit string, and $i_1, \ldots, i_k$ are the indices of the logical qubits on which the term $H_r$ acts nontrivially.)
\item \emph{Prover's response}. The prover receives the verifier's measurement results $u$, and firstly checks whether they cause a predicate $Q(t, \pi, a, b, r, u)$ to be satisfied. (We will explain the predicate $Q$ in more detail shortly. Intuitively, $Q$ is satisfied if and only if both verifier \emph{and} prover behaved honestly.) If $Q$ is not satisfied, the prover aborts, causing the verifier to reject. If $Q$ is satisfied, then the prover proves to the verifier, using an NP zero-knowledge protocol, that there exists randomness $s_p$ and an encoding key $(t, \pi, a, b)$ such that $z = \textsf{commit}((\pi, a, b), s_p)$ and $Q(t, \pi, a, b, r, u) = 1$.

We now describe the predicate $Q$ in precise terms. For convenience, Broadbent et al.\ define a predicate $R_r$, which represents the prover's check after it reverses the effects on $u$ of the one-time pad $X^a Z^b$, and then proceed to define $Q$ in terms of $R_r$. Since we will shortly have cause to alter the definition of $R_r$, we quote the definition of $R_r$ used in \cite{qma} for comparison. (Note that we have altered their notation slightly: the strings that they call $y_i$ we have called $q_i$, to avoid a conflict.)

\begin{definition}[Definition of $R_r$]
\label{def:original-R_r}

Let $u_{i_1}, \ldots, u_{i_k}$ be the measurement results that the verifier sent to the prover in step 3. For each $i \in \{i_1, \ldots, i_k\}$, define strings $q_i, z_i$ in $\{0,1\}^N$ such that $\pi(q_i \| z_i) = u_i$ (alternatively: $\pi^{-1}(u_i) = q_i \| z_i)$. The predicate $R_r(t, \pi, u)$ takes the value 1 if and only if the following two conditions are met:
\begin{enumerate}[itemsep=1pt, topsep=0pt]
\item $q_i \in \mathcal{D}_N$ for every $i \in \{i_1, \ldots, i_k\}$, and $q_i \in \mathcal{D}^1_N$ for at least one index $i \in \{i_1, \ldots, i_k\}$. ($\mathcal{D}_N = \mathcal{D}^0_N \cup \mathcal{D}^1_N$ is the set of all valid classical $N$-bit codewords of the concatenated Steane code. We note, as an aside, that $\mathcal{D}^0_N$ and $\mathcal{D}^1_N$ are both sets of size $8^t$, where $t$ is an even positive integer such that $7^t = N$; as such, $\mathcal{D}_N$ is polynomially sized.)
\item $\bra{z_{i_1} \cdots z_{i_k}} C^{\otimes n}_r \ket{t_{i_1} \cdots t_{i_k}} \neq 0.$
\end{enumerate}
\end{definition}

Now we define the predicate $Q(t, \pi, a, b, r, u)$ in terms of $R_r$:

\begin{definition}[Definition of $Q$]
\label{def:original-Q}

Let $c_1, \ldots, c_n, \: d_1, \ldots, d_n \in \{0,1\}^{2N}$ be the unique strings such that
\begin{gather*}
C_r^{\otimes 2N}(X^{a_1} Z^{b_1} \otimes \cdots \otimes X^{a_n} Z^{b_n}) = \alpha(X^{c_1} Z^{d_1} \otimes \cdots \otimes X^{c_n} Z^{d_n}) C_r^{\otimes 2N}
\end{gather*}
for some $\alpha \in \{1, i, -1, -i\}$. (It is possible to efficiently compute $c = c_1, \ldots, c_n$ and $d = d_1, \ldots, d_n$ given $a, b$ and $C_r$.) The predicate $Q$ is then defined by
\begin{gather*}
Q(t, \pi, a, b, r, u) = R_r(t, \pi, u \oplus c_{i_1} \cdots c_{i_k})\;.	
\end{gather*}
\end{definition}
\end{enumerate}
\end{protocol}

\subsection{Replacing Clifford verification with XZ verification in Protocol \ref{protocol:bjsw16}}
\label{section:modifiedqma}

The authors of \cite{qma} introduce a zero-knowledge proof system which allows the verifier to determine whether the prover holds a state that has sufficiently low energy with respect to a $k$-local Clifford Hamiltonian (see Section 2 of \cite{qma}). In this section, we modify their proof system so that it applies to an input encoded as an instance of the XZ local Hamiltonian problem (Definition~\ref{def:2-local-xz-problem}) rather than as an instance of the Clifford Hamiltonian problem.

Before we introduce our modifications, we explain why it is necessary in the first place to alter the proof system presented in \cite{qma}. Modulo the encoding $E$ which the prover applies to its state in Protocol \ref{protocol:bjsw16}, the quantum verifier from the same protocol is required to perform a projective measurement of the form $\{\Pi = C^* \ket{0^k}\bra{0^k} C, \,\Id-\Pi\}$ of the state that the prover sends it (where $C$ is a Clifford unitary acting on $k$ qubits) and reject if it obtains the first of the two possible outcomes. Due to the properties of Clifford unitaries, this action is equivalent to measuring $k$ commuting $k$-qubit Pauli observables $C^* Z_i C$ for $i\in\{1,\ldots,k\}$ (where $Z_i$ is a Pauli $\sigma_Z$ observable acting on the $i$th qubit), and rejecting if all of said measurements result in the outcome $+1$. 

Our goal is to replace the quantum component of the verifier's actions in Protocol \ref{protocol:bjsw16}—a component which, fortunately, consists entirely of performing the projective measurement just described—with the measurement protocol introduced in~\cite{measurement} (summarized as Protocol \ref{protocol:mah18}). Unfortunately, the latter protocol 1. only allows for standard and Hadamard basis measurements, and 2. does not accommodate a verifier who wishes to perform multiple successive measurements on the same qubit: for each qubit that the verifier wants to measure, it must decide on a measurement basis (standard or Hadamard) prior to the execution of the protocol, and once made its choices are fixed for the duration of its interaction with the prover. This allows the verifier to, for example, obtain the outcome of a measurement of the observable $C^* Z_i C$ for some \emph{particular} $i$, by requesting measurement outcomes of all $k$ qubits in the appropriate basis and taking the product of the outcomes obtained. However, it is not obvious how the same verifier could request the outcome of measuring a $k$-\emph{tuple} of commuting Pauli observables which all act on the same $k$ qubits. 

To circumvent this technical issue, we replace the Clifford Hamiltonian problem used in~\cite{qma} with the QMA-complete XZ Hamiltonian problem. The advantage of this modification is that it becomes straightforward to implement the required energy measurements using the measurement protocol from \cite{measurement}. In order to make the change, we require that the verifier's measurements act on a linear, rather than a constant, number of qubits with respect to the size of the problem input.

A different potentially viable modification to the proof system of \cite{qma} is as follows. Instead of replacing Clifford Hamiltonian verification with XZ Hamiltonian verification, we could also repeat the original Clifford-Hamiltonian-based protocol a polynomial number of times. In such a scheme, the honest prover would hold $m$ copies of the witness state (as it does in Protocol \ref{protocol:mf16}). The verifier, meanwhile, would firstly choose a random term $C_r^* \ket{0^k} \bra{0^k} C_r$ from the Clifford Hamiltonian, and then select $m$ random Pauli observables of the form $C_r^* Z_i C_r$---where $C_r$ is the particular $C_r$ which it picked---to measure. (For each repetition, $i$ would be chosen independently and uniformly at random from the set $\{1, \ldots, k\}$.) The verifier would accept if and only if the number of times it obtains $-1$ from said Pauli measurements is at least $\frac{m}{2k}$. This approach is very similar to the approach we take for XZ Hamiltonians (which we explain below), and in particular also fails to preserve the perfect completeness of the original protocol in \cite{qma}. For simplicity, we choose the XZ approach. We now introduce the alterations which are necessary in order to make it viable.

Firstly, we require that the honest prover possesses polynomially many copies of the witness state $\sigma$, instead of one. We do this because we want the honest verifier to accept the honest prover with probability exponentially close to 1, which is not naturally true in the verification procedure for 2-local XZ Hamiltonians presented by Morimae and Fitzsimons in \cite{xz}, but which is true in our amplified variant, Protocol \ref{protocol:mf16}. Secondly, we need to modify the verifier's conditions for acceptance. In \cite{qma}, as we have mentioned, these conditions are represented by a predicate $Q$ (that in turn evaluates a predicate $R_r$; see Definitions \ref{def:original-R_r} and \ref{def:original-Q}).

We now describe our alternative proof system for QMA, and claim that it is zero-knowledge. Because the protocol is very similar to the protocol from \cite{qma}, this can be seen by following the proof of zero-knowledge in \cite{qma}, and noting where our deviations require modifications to the reasoning. On the other hand, we do not argue that the proof system is complete and sound, as we do not need to make explicit use of these properties. (Intuitively, however, the completeness and the soundness of the proof system follow from those of Protocol \ref{protocol:mf16}, and the soundness of the latter is a property which we will use.)

\begin{protocol}[Alternative proof system for QMA]
\label{protocol:modifiedqma}
\quad \\[-0.5cm]

\emph{Notation.}
Refer to notation section of Protocol \ref{protocol:mf16}.

\emph{Parties.}
The proof system involves 
\begin{enumerate}[itemsep=1pt, topsep=0pt]
\item A \emph{verifier}, who implements a quantum polynomial-time procedure;
\item A \emph{prover}, who is unbounded, but who is only required by the protocol to implement a quantum polynomial-time procedure.
\end{enumerate}
The verifier and the prover communicate quantumly.

\emph{Inputs.} 
\begin{enumerate}[itemsep=1pt, topsep=0pt]
\item Input to the verifier:
\begin{enumerate}[itemsep=1pt, topsep=0pt]
\item The Hamiltonian $H$, and the numbers $a$ and $b$.
\item A quantum computationally concealing, perfectly binding (classical) commitment protocol.
\item A proof system for NP sound against arbitrary quantum provers.
\end{enumerate}
\item Input to the prover:
\begin{enumerate}[itemsep=1pt, topsep=0pt]
\item The Hamiltonian $H$, and the numbers $a$ and $b$.
\item The $n$-qubit quantum state $\rho = \sigma^{\otimes n}$, where $\sigma$ is the ground state of the Hamiltonian $H$.
\item A quantum computationally concealing, perfectly binding (classical) commitment protocol.
\item A zero-knowledge proof system for NP sound against arbitrary quantum provers.
\end{enumerate}
\end{enumerate}

\emph{Protocol.} 
\begin{enumerate}[itemsep=1pt,topsep=0pt]
    \item \textit{Prover's encoding step:} The same as the prover's encoding step in Protocol \ref{protocol:bjsw16}, except that $t \in \{0, +\}^N$ rather than $\{0, +, +_y\}^N$. (This change will be justified in the proof of Lemma~\ref{lem:modified-qma-proof}.)
    \item \textit{Coin flipping protocol:} Unmodified from Protocol \ref{protocol:bjsw16}, except that $r=(r_1,\ldots,r_m)$ represents the choice of $m$ terms from the 2-local XZ Hamiltonian $H$ (with the choices being made as described in step 2 of Protocol \ref{protocol:mf16}) instead of a random term from a Clifford Hamiltonian. Note that $r$ determines the indices of the $2m$ logical qubits which the verifier will measure in step 3.
    \item \textit{Verifier's challenge:} The same as the verifier's challenge in Protocol \ref{protocol:bjsw16}, except that the verifier now applies $U_r$ transversally instead of $C_r$. (See item 2(c) in Definition \ref{def:R_r} below for the definition of $U_r$.)
    \item \textit{Prover's response:} The same as Protocol \ref{protocol:bjsw16} (but note that the predicate $Q$, which the prover checks and then proves is satisfied, is the $Q$ described in Definition \ref{def:Q} below).
\end{enumerate}
\end{protocol}

\begin{definition}[Redefinition of $R_r$]
\label{def:R_r}
Let $i_1, \ldots, i_{2m}$ be the indices of the logical qubits which were chosen for measurement in step 2 of Protocol \ref{protocol:modifiedqma}, ordered by their corresponding $j$s (so that $i_1$ and $i_2$ are the qubits that were measured in order to determine whether $H_{s_1}$ was satisfied, and so on). Let $u_{i_1}, \ldots, u_{i_{2m}}$ be the $2N$-bit strings which the verifier claims are the classical states that remained after said measurements were performed, and for each $i \in \{i_1, \ldots, i_{2m}\}$, define $N$-bit strings $q_i, z_i$ such that $\pi(q_i || z_i) = u_i$ (alternatively: $\pi^{-1}(u_i) = q_i || z_i$). In Protocol \ref{protocol:modifiedqma}, the predicate $R_r(t, \pi, u)$ takes the value 1 if and only if the following conditions are met:
\begin{enumerate}[itemsep=1pt,topsep=0pt]
\item $q_i \in \mathcal{D}_N$ for every $i \in \{i_1, \ldots, i_{2m}\}$.
\item The number $\frac{\textsc{Count}}{m}$ (where \textsc{Count} is obtained by executing the following procedure) is closer to $\frac{1}{2} - \frac{a}{\sum_s 2 |d_s|}$ than to $\frac{1}{2} - \frac{b}{\sum_s 2 |d_s|}$.
\begin{enumerate}[itemsep=1pt, topsep=0pt]
\item Initialise \textsc{Count} to 0.
\item For each $j \in \{1, \ldots, m\}$: Suppose that $H_{s_j} = d_j P_1 P_2$, for some $P_1,P_2\in\{\sigma_X,\sigma_Z\}$. The tuple $(P_1, u_{2j - 1}, P_2, u_{2j})$ determines a `logical' measurement result that could equally have been obtained by measuring $H_{r_j} \sigma$, where $\sigma$ is the unencoded witness state. We denote this measurement result by $\lambda$. If $\lambda = -\mathrm{sign}(d_j)$, add one to \textsc{Count}.
\item Let $U_r$ be the circuit obtained from the following procedure:
\begin{enumerate}[itemsep=1pt, topsep=0pt]
\item For each $j \in \{1, \ldots, m\}$, replace any $\sigma_X$s in the term $H_{s_j}$ with $H$ (Hadamard) gates, and replace any $\sigma_Z$s in $H_{s_j}$ with $I$. (For example, if $H_{r_j} = \pi_j \sigma_{X,{\ell_1}} \sigma_{Z,{\ell_2}}$, where the second subscript denotes the index of the qubit on which the observable in question acts, then $U_j = H_{\ell_1} I_{\ell_2}$, where the subscripts $\ell_1$ and $\ell_2$ once again the denote the indices of the qubits on which the gates $H$ and $I$ act.)
\item Apply $U_j$ to the qubits indexed $(j-1)n + 1$ through $jn$.
\end{enumerate}
It must then be the case that $\bra{ z_{i_1} \cdots z_{i_{2m}} } U_r^{\otimes N} \ket{ t_{i_1} \cdots t_{i_{2m}} } \neq 0$ (where each $t_i$ is an $N$-bit string that represents the pattern of trap qubits which was concatenated to the $i$th logical qubit during step 1 of Protocol \ref{protocol:modifiedqma}).
\end{enumerate}
\end{enumerate}
\end{definition}

\begin{definition}[Redefinition of $Q$]
\label{def:Q}
Let $c_1, \ldots, c_n, \: d_1, \ldots, d_n \in \{0,1\}^{2N}$ be the unique strings such that
\begin{gather*}
U_r^{\otimes 2N}(X^{a_1} Z^{b_1} \otimes \cdots \otimes X^{a_n} Z^{b_n}) = \alpha(X^{c_1} Z^{d_1} \otimes \cdots \otimes X^{c_n} Z^{d_n}) U_r^{\otimes 2N}
\end{gather*}
for some $\alpha \in \{1, i, -1, -i\}$. (It is possible to efficiently compute $c = c_1, \ldots, c_n$ and $d = d_1, \ldots, d_n$ given $a, b$ and $U_r$. In particular, recalling that $U_r$ is a tensor product of $H$ and $I$ gates, we have that $c_i = a_i$ and $d_i = b_i$ for all $i$ such that the $i$th gate in $U_r^{\otimes 2N}$ is $I$, and $c_i = b_i$, $d_i = a_i$ for all $i$ such that the $i$th gate in $U_r^{\otimes 2N}$ is $H$.) The predicate $Q$ is then defined by
\begin{gather*}
Q(t, \pi, a, b, r, u) = R_r(t, \pi, u \oplus c_{i_1} \cdots c_{i_k})\;,
\end{gather*}
where $R_r$ is as in Definition \ref{def:R_r}.
\end{definition}

\begin{lemma}
\label{lem:modified-qma-proof}
The modified proof system for QMA in Protocol \ref{protocol:modifiedqma} is computationally zero-knowledge for quantum polynomial-time verifiers.
\end{lemma}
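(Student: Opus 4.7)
The plan is to adapt the zero-knowledge simulator of Broadbent--Ji--Song--Watrous (for Protocol~\ref{protocol:bjsw16}, analysed in~\cite[Section~4]{qma}) to Protocol~\ref{protocol:modifiedqma}, verifying that each step of their analysis either carries over verbatim or needs only a minor modification to accommodate the switch from Clifford Hamiltonians to 2-local XZ Hamiltonians. As in~\cite{qma}, indistinguishability between the simulator's output and the real interaction will rest on three ingredients: the computational concealment of the prover's commitment to the encoding key, the Watrous rewinding lemma applied to the coin-flipping subprotocol, and the computational zero-knowledge of the NP proof system invoked in step~4.

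Concretely, given oracle access to a quantum polynomial-time malicious verifier $V^*$, the simulator $S^{V^*}$ picks a candidate string $r$, constructs the state $\rho_r$ of Remark~\ref{remark:rho_r} (a tensor product of $\ket{0},\ket{1},\ket{+},\ket{-}$ qubits which passes the $r$-th Hamiltonian test of Protocol~\ref{protocol:mf16} with probability~1), and encodes it via the prover's encoding procedure under a uniformly sampled key $(t,\pi,a,b)$ with $t\in\{0,+\}^N$. It then sends $E(\rho_r)$ together with a commitment to $(\pi,a,b)$, uses Watrous rewinding on the coin-flipping subprotocol to fix its outcome to the precomputed $r$, collects the verifier's measurement results $u$, and finally invokes the NP zero-knowledge simulator to produce the transcript of step~4 without knowing a witness for the statement being proved. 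A hybrid argument mirroring~\cite[Section~4]{qma} then swaps $\rho_r$ for the honest witness $\rho=\sigma^{\otimes n}$ and replaces the simulated NP zero-knowledge transcript with a real one, yielding the required indistinguishability.

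The only place where the analysis of~\cite{qma} needs to be revisited is in the hybrid step that appeals to the hiding properties of the encoding to replace $E(\rho_r)$ by $E(\rho)$. In~\cite{qma} the trap qubits are drawn from $\{\ket{0},\ket{+},\ket{+_y}\}$ precisely because the challenge Clifford $C_r$ can rotate trap qubits into any Pauli basis, so traps spanning all three Pauli eigenbases are required in order to make the encoding hide arbitrary inputs and catch cheating verifiers. In our protocol, by contrast, the challenge unitary $U_r$ is a tensor product of $H$ and $I$ gates, so it preserves the $X$/$Z$-eigenstate structure and only ever probes observables from $\{\sigma_X,\sigma_Z\}$; traps drawn from $\{\ket{0},\ket{+}\}$ therefore already suffice to catch any cheating verifier with the same probability as in~\cite{qma}. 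I expect this trap-hiding step to be the principal technical point of the proof; once it is verified, the remainder of the~\cite{qma} analysis transfers with only cosmetic changes (replacing $C_r$ by $U_r$ and the Clifford $Q$-predicate by its XZ analogue from Definition~\ref{def:Q}), and the lemma follows.
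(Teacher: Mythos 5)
Your high-level plan — replace the real witness with $\rho_r$, lean on the hiding of the commitment, the coin-flipping, and the NP zero-knowledge simulator, and then check that the trap analysis survives the switch from the three-element trap alphabet $\{\ket{0},\ket{+},\ket{+_y}\}$ to $\{\ket{0},\ket{+}\}$ — is the right structure and matches the paper's approach (which follows \cite[Section~5]{qma}, not Section~4). You have also correctly identified the trap analysis as the place where the argument does \emph{not} carry over verbatim. However, your explanation of why the two-element alphabet suffices misstates the actual difficulty and omits the argument that resolves it.

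The obstacle is not simply that $U_r$ ``only probes $\sigma_X$ and $\sigma_Z$'' so two trap states catch a cheating verifier ``with the same probability as in~\cite{qma}.'' The BJSW analysis of the high-Hamming-weight XOR attacks (the case $|v|_1 \geq K$ in their notation) rests on a lemma stating that a Clifford acting on $k$ qubits, each initialised uniformly in $\{\ket{0},\ket{+},\ket{+_y}\}$, leaves at least one of them in a standard-basis state with probability at least $3^{-k}$. That bound is fine in~\cite{qma} because there $k=5$ is a constant. In Protocol~\ref{protocol:modifiedqma}, the analogue of $k$ is $2m$, which is a \emph{polynomial} in the input size, so $3^{-k}$ is negligible and the BJSW lemma gives nothing. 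Your proposal does not acknowledge this, and without acknowledging it there is no reason to change the trap alphabet at all.

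The actual reason the restricted alphabet helps — and the content of the paper's proof — is quantitative and structural: because $U_r$ is a tensor product of $H$ and $I$, each of the $2m$ trap-bearing blocks \emph{independently} has probability exactly $\tfrac12$ of ending up with its traps in a standard-basis state. Combined with a pigeonhole argument (some block of the attack string $v$ has Hamming weight at least $K/(2m)$), this gives a per-block detection probability of the form $1-(3/4)^{K/(2m)}$, which is overwhelming once $K$ is a sufficiently higher-order polynomial in $N$ than $2m$. That is the calculation your proof sketch needs and currently lacks; the case $|v|_1 < K$ does carry over unchanged, as you assume, but the $|v|_1 \geq K$ case requires this new argument, not a cosmetic substitution of $U_r$ for $C_r$.
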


\begin{proof} We follow the argument from~\cite[Section 5]{qma}. Steps 1 to 3 only make use of the security of the coin-flipping protocol, the security of the commitment scheme, and the zero-knowledge properties of the NP proof system, none of which we have modified. Step 4 replaces the real witness state $\rho$ with a simulated witness $\rho_r$ that is guaranteed to pass the challenge indexed by $r$; this we can do also (see Remark \ref{remark:rho_r}). Step 5 uses the Pauli one-time-pad to twirl the cheating verifier, presuming that the honest verifier would have applied a Clifford term indexed by $r$ before measuring. We note that, since $U_r$ is a Clifford, the same reasoning applies to our modified proof system.

Finally, using the fact that the Pauli twirl of step 5 restricts the cheating verifier to XOR attacks, step 6 from \cite[Section 5]{qma} proves the following statement: if the difference $|p_0 - p_1|$ is negligible (where $p_0$ and $p_1$ are the probabilities that $\rho$ and $\rho_r$ respectively pass the verifier's test in an honest prover-verifier interaction indexed by $r$), then the channels $\Psi_0$ and $\Psi_1$ implemented by the cheating verifier in each case are also quantum computationally indistinguishable. It follows from this statement that the protocol is zero-knowledge, since, in an honest verifier-prover interaction indexed by $r$, $\rho_r$ would pass with probability 1, and $\rho$ would pass with probability $1 - \textsf{negl}(N)$. (This latter statement is true both in their original and in our modified protocol.) The argument presented in \cite{qma} considers two exclusive cases: the case when $|v|_1 < K$, where $v$ is the string that the cheating verifier XORs to the measurement results, $|v|_1$ is the Hamming weight of that string, and $K$ is the minimum Hamming weight of a nonzero codeword in $\mathcal{D}_N$; and the case when $|v|_1 \geq K$. The analysis in the former case translates to Protocol \ref{protocol:modifiedqma} without modification, but in the latter case it needs slight adjustment.

In order to address the case when $|v|_1 \geq K$, Broadbent et al.\ use a lemma which---informally---states that the action of a Clifford on $k$ qubits, each of which is initialised uniformly at random to one of $\ket{0}, \ket{+}$, or $\ket{+}_y$, has at least a $3^{-k}$ chance of leaving at least one out of $k$ qubits in a standard basis state. We may hesitate to replicate their reasoning directly, because our $k$ (the number of qubits on which our Hamiltonian acts) is not a constant. While it is possible that a mild modification suffices to overcome this problem, we note that in our case there is a simpler argument for an analogous conclusion: since $U_r$ is a tensor product of only $H$ gates and $I$ gates, it is straightforward to see that, if each of the $2m$ qubits on which it acts is initialised either to $\ket{0}$ or to $\ket{+}$, then 1) each of the $2m$ qubits has exactly a 50\% chance of being left in a standard basis state, and 2) the states of these $2m$ qubits are independent.

Now we consider the situation where a string $v = v_1 \: v_2 \: \cdots \: v_{2m}$, of length $4mN$ and of Hamming weight at least $K$, is permuted (`permuted', here, means that $\pi \in S_{2N}$ is applied to each $v_i$ individually) and then XORed to the result of measuring $4mN$ qubits ($2m$ blocks of $2N$ qubits each) in the standard basis after $U_r$ has been transversally applied to those qubits. It is straightforward to see, by an application of the pigeonhole principle, that there must be at least one $v_i$ whose Hamming weight is $\geq \frac{K}{2m}$. Consider the result of XORing this $v_i$ to its corresponding block of measured qubits. Half of the $2N$ qubits in that block would originally have been encoding qubits, and half would have been trap qubits; half again of the latter, then, would have been trap qubits left in a standard basis state by the transversal action of $U_r$. As such, the probability that none of the 1-bits of $v_i$ are permuted into positions which are occupied by the latter kind of qubit is $(\frac{3}{4})^{-\frac{K}{2m}}$, which is negligibly small as long as $K$ is made to be a higher-order polynomial in $N$ than $2m$ is. The remainder of the argument in \cite[Section 5]{qma} follows directly.
\end{proof}

\section{The protocol}
\label{section:protocol}

In this section, we present our construction of a zero-knowledge argument system for QMA. Our argument system allows a classical probabilistic polynomial-time verifier and a quantum polynomial-time prover to verify that any problem instance $x$ belongs to any particular language $L \in$ QMA, provided that the prover has access to polynomially many copies of a valid quantum witness for an instance of the $2$-local XZ local Hamiltonian problem to which $x$ is mapped by the reduction implicit in Theorem~\ref{thm:xz-complete}. The argument system is sound (against quantum polynomial-time provers) under the following assumptions:

\begin{ass}\label{assumptions}~
\begin{enumerate}[itemsep=1pt, topsep=0pt]
\item The Learning With Errors problem (LWE)~\cite{regev2009lattices} is quantum computationally intractable. (Specifically, we make the same asssumption about the hardness of LWE that is made in~\cite[Section 9]{measurement} in order to prove the soundness of the measurement protocol.)
\item There exists a commitment scheme $(\textsf{gen}, \textsf{initiate}, \textsf{commit}, \textsf{reveal}, \textsf{verify})$ of the form described in Appendix~\ref{appendix:commitment} that is unconditionally binding and quantum computationally concealing. (This assumption is necessary to the soundness of the proof system presented in~\cite{qma}.) It is known that a commitment scheme with the properties required can be constructed assuming the quantum computational hardness of LWE~\cite{coladangelo2019non}, although the parameters may be somewhat different from those required for soundness.
\end{enumerate}
\end{ass}

The following exposition of our protocol relies on definitions from Section~\ref{section:ingredients}, and we encourage the reader to read that section prior to approaching this one. We also direct the reader to Figures~\ref{figure:honest-protocol} and \ref{figure:cheating-verifier} for diagrams that chart the protocol's structure.

\begin{protocol}
\label{protocol:main-protocol}
\emph{Zero-knowledge, classical-verifier argument system for QMA.}

\emph{Notation.}
Let $L$ be any promise problem in QMA, and let $(H = \sum_{s=1}^S d_s H_s, a, b)$ be an instance of the 2-local XZ Hamiltonian problem to which $L$ can be reduced (see Definition \ref{def:2-local-xz-problem} and Theorem~\ref{thm:xz-complete}).
Define
\begin{gather*}
\pi_s \,=\, \frac{|d_s|}{\sum_s |d_s|}\;.
\end{gather*} 
Following \cite{qma}, we take the security parameter for this protocol to be $N$, the number of qubits in which the concatenated Steane code used during the encoding step of the protocol (step 1) encodes each logical qubit. We assume, accordingly, that $N$ is polynomial in the size of the problem instance $x$. 

\emph{Parties.}

The protocol involves
\begin{enumerate}[itemsep=1pt, topsep=0pt]
\item A \emph{verifier}, which runs in classical probabilistic polynomial time;
\item A \emph{prover}, which runs in quantum polynomial time.
\end{enumerate}

\emph{Inputs.}
The protocol requires the following primitives: 
\begin{itemize}[itemsep=1pt, topsep=0pt]
\item A perfectly binding, quantum computationally concealing commitment protocol (\textsf{gen}, \textsf{initiate}, $\mathsf{commit},\mathsf{reveal}$, \textsf{verify}) (which will be used twice: once for the prover's commitment in step 2, and then again for the coin-flipping protocol in step 3). We assume that this commitment protocol is of the form described in Appendix \ref{appendix:commitment}.
\item A zero-knowledge proof system for NP.
\item An extended trapdoor claw-free function family (ETCFF family), as defined in \cite{measurement}. (Note that we fall short of using the ETCFF family as a black box: for the trapdoor check of step 8, we rely on the specific properties of the LWE-based construction of an ETCFF family that \cite{measurement} provides. See Appendix \ref{appendix:trapdoor-check} for details.) 
\end{itemize}
Apart from the above cryptographic primitives, we assume that the verifier and the prover also receive the following inputs.
\begin{enumerate}[noitemsep,topsep=0pt]
\item Input to the verifier: the Hamiltonian $H$ and the numbers $a$ and $b$.
\item Input to the prover: the Hamiltonian $H$, the numbers $a$ and $b$, and the quantum state $\rho = \sigma^{\otimes m}$, where $\sigma$ is a ground state of the Hamiltonian $H$. 
\end{enumerate}

\emph{Protocol.} 
\begin{enumerate}[itemsep=1pt, topsep=0pt]
\item \emph{The prover encodes the witness.}
The prover encodes the quantum witness \(\rho\) by applying the following steps:
\begin{enumerate}[itemsep=1pt, topsep=0pt]
\item Apply concatenated Steane code
\item Concatenate trap qubits \(|t\rangle\)
\item Apply a random permutation \(\pi\)
\item Apply a Pauli one-time-pad \(X^a Z^b\)
\end{enumerate}
The encoding process here is the same as that from step 1 of Protocol \ref{protocol:modifiedqma}; we direct the reader to Protocol \ref{protocol:modifiedqma}, and the Protocol \ref{protocol:bjsw16} to which it refers, for a more detailed explanation of the steps. Denote the application of the prover's encoding to the state $\rho$ by $E(\rho)$. 
\item \emph{The prover commits to its encoding keys.} The prover commits to the strings \((\pi, a, b)\) from the previous step, using randomness \(s_p\). Call the prover's commitment string \(z\), so that \(z = \mathsf{commit}((\pi, a, b), s_p) \).
\item \emph{The verifier and the prover execute the first half of a two-stage coin-flipping protocol.}\footnote{We need to execute the coin-flipping protocol in two stages because, in our (classical-verifier) protocol, the prover cannot physically send the quantum state $E(\rho)$ to its verifier before the random string $r$ is decided, as the prover of Protocol \ref{protocol:bjsw16} does. If we allow our prover to know $r$ at the time when it performs measurements on the witness $\rho$, it will trivially be able to cheat.} The verifier commits to \(r_v\), its part of the random string that will be used to determine which random terms in the Hamiltonian $H$ it will check in subsequent stages of the protocol. Let \(c = \mathsf{commit}(r_v, s_v)\). The prover sends the verifier \(r_p\), which is its own part of the random string. The random terms will be determined by \(r = r_v \oplus r_p\). ($r$ is used to determine these terms in the same way that $r$ is used in Protocol \ref{protocol:mf16}.)
\item \emph{The verifier initiates the measurement protocol. (Refer to Protocol \ref{protocol:mah18} for an outline of the steps in said measurement protocol.)} The verifier chooses the measurement bases $h = h_1 \cdots h_{2nN}$ in which it wishes to measure the state $E(\rho)$. $2kN$ out of the $2nN$ bits of $h$---corresponding to $k$ logical qubits---are chosen so that the verifier can determine whether $\sigma$ satisfies the Hamiltonian terms specified by $r = r_v \oplus r_p$. In our particular case, $k = 2m$, where $m$ is the number of Hamiltonian terms that the verifier will check are satisfied. For the remaining qubits $i$, the verifier sets $h_i$ to 0. The verifier sends the function keys $\kappa = \kappa_1, \ldots, \kappa_{2nN}$ to the prover.
\item \emph{The prover commits to its encoded witness state, as per the measurement protocol.} The prover commits to the quantum state \(E(\rho)\) by concatenating a preimage register to each qubit in $E(\rho)$, applying the functions specified by $\kappa_1, \ldots, \kappa_{2nN}$ in superposition as Protocol \ref{protocol:mah18} describes, measuring the resulting output superpositions, and sending the outcomes \(y_1, \ldots, y_{2nN}\) to the verifier.
\item	\emph{The verifier chooses at random to run either a test round or a Hadamard round.} Each option is chosen with probability $\frac{1}{2}$.
\begin{enumerate}[itemsep=1pt,topsep=0pt]
	\item If a test round is chosen, the verifier requests standard basis measurements of each committed qubit $\ket{\psi_i}$ in $E(\rho)$ and its associated preimage register, and receives measurement results $\beta_i, x_i$ for $i \in \{1, \ldots, 2nN\}$ from the prover. It then checks, for each $i \in \{1, \ldots, 2nN\}$, whether the $i$th measurement result is a preimage of $y_i$. (See Definition \ref{def:preimage-of-y} for a definition of the phrase `preimage of $y$'.) More precisely: if $h_i = 0$, the verifier accepts iff $y_i$ is in the support of $g_{\kappa_i}(\beta_i \| x_i)$ (recalling that $g_{\kappa_i}(\beta_i \| x_i)$ is a probability distribution); and if $h_i = 1$, the verifier accepts iff $y_i$ is in the support of $f_{\kappa_i}(\beta_i \| x_i)$.
	\item If a Hadamard round is chosen, the verifier requests Hadamard basis measurements of each committed qubit and its associated preimage register, and receives $\beta_i, d_i$ for $i \in \{1, \ldots, 2nN\}$ from the prover. It then obtains the measurement results it desires using the procedure outlined in Section 2.2 of \cite{measurement}:
	\begin{enumerate}[itemsep=1pt,topsep=0pt]
	\item If $h_i = 0$, the verifier ignores the measurement results $\beta_i, d_i$, and uses the trapdoor $\tau_{\kappa_i}$ to invert $y_i$. The inversion procedure outputs a preimage $(m_i, x_{m_i, y_i})$; the verifier stores $m_i$ as the standard-basis measurement outcome. (If the inverse does not exist, the verifier stores a random bit as $m_i$ and rejects.)
	\item If $h_i = 1$, the verifier computes the two preimages $x_{0, i}$ and $x_{1, i}$ to $y_i$ under $f_{\kappa_i}$ using the trapdoor $\tau_{\kappa_i}$. It then stores $\beta_i \oplus d_i \cdot (x_{0,i} \oplus x_{1,i})$ as the Hadamard basis measurement outcome. (Once again, if either inverse does not exist, the verifier stores a random bit as $m_i$ and rejects.)
	\end{enumerate}
\end{enumerate}
\item \emph{The verifier opens its commitment, and sends the prover its measurement outcomes and function trapdoors.} If a test round was chosen in the previous step, the verifier now accepts the prover directly, provided that the preimage test was satisfied for all $i$. (If not all the preimage tests passed, the verifier directly    
rejects the prover.) If a Hadamard round was chosen, and the verifier instead opens its commitment to \(r_v\) and sends its measurement outcomes \(m = m_1, \ldots, m_{2nN}\) to the prover, along with its trapdoors $\tau_{\kappa_1}, \ldots, \tau_{\kappa_{2nN}}$.
\item \emph{The prover checks that the verifier's measurement outcomes and trapdoors were consistent; it then proves that the verifier's measurement outcomes were consistent.} The prover checks that $\tau_{\kappa_1}, \ldots, \tau_{\kappa_{2nN}}$ are indeed trapdoors for the functions specified by $\kappa_1, \ldots, \kappa_{2nN}$, and also that $\kappa_1, \ldots, \kappa_{2nN}$ are valid ETCFF keys, using the procedure described in Protocol \ref{protocol:trapdoor-and-key-check}. 
It also defines $u = u_{i_1} \cdots u_{i_{2m}}$ (each $u_i$ is $2N$ bits long) $= m_{\ell_1} \cdots m_{\ell_{4mN}}$, where $\ell_1, \ldots, \ell_{4mN}$ are the indices of the qubits on which $U_r^{\otimes 2N}$ acts nontrivially, and checks that $u$ causes the predicate $Q(t, \pi, a, b, r, u)$ to be satisfied. (The $Q$ we refer to here is the $Q$ of Definition \ref{def:Q}. We define $U_r$ in the same way that $U_r$ was defined in Definition \ref{def:R_r}.) If either of these tests fails, the prover aborts. If both tests pass, then the prover proves, using an NP zero-knowledge proof system,\footnote{It was shown in~\cite{watrous2009zero} that the second item in Assumptions~\ref{assumptions} suffices to guarantee the existence of a proof system for languages in NP that is zero-knowledge against quantum polynomial-time verifiers. Our proof that our protocol is zero-knowledge for \emph{classical} verifiers only requires that the NP proof system used here is (likewise) zero-knowledge against classical verifiers; however, it becomes necessary to require post-quantum security of this proof system if we want our protocol also to be zero-knowledge for potentially quantum malicious verifiers.} that the verifier's outcomes are consistent in the following sense:
\begin{enumerate}[itemsep=1pt, topsep=0pt]
\item[] The verifier's outcomes $u$ are consistent if there exists a string \(s_p\) and an encoding key \((t, \pi, a, b)\) such that \(z = \mathsf{commit}((\pi, a, b), s_p) \) and \(Q(t, \pi, a, b, r, u) = 1\).
\end{enumerate}
\end{enumerate}
\end{protocol}

\begin{figure}[H]
\begin{center}
\includegraphics[width=0.8\textwidth]{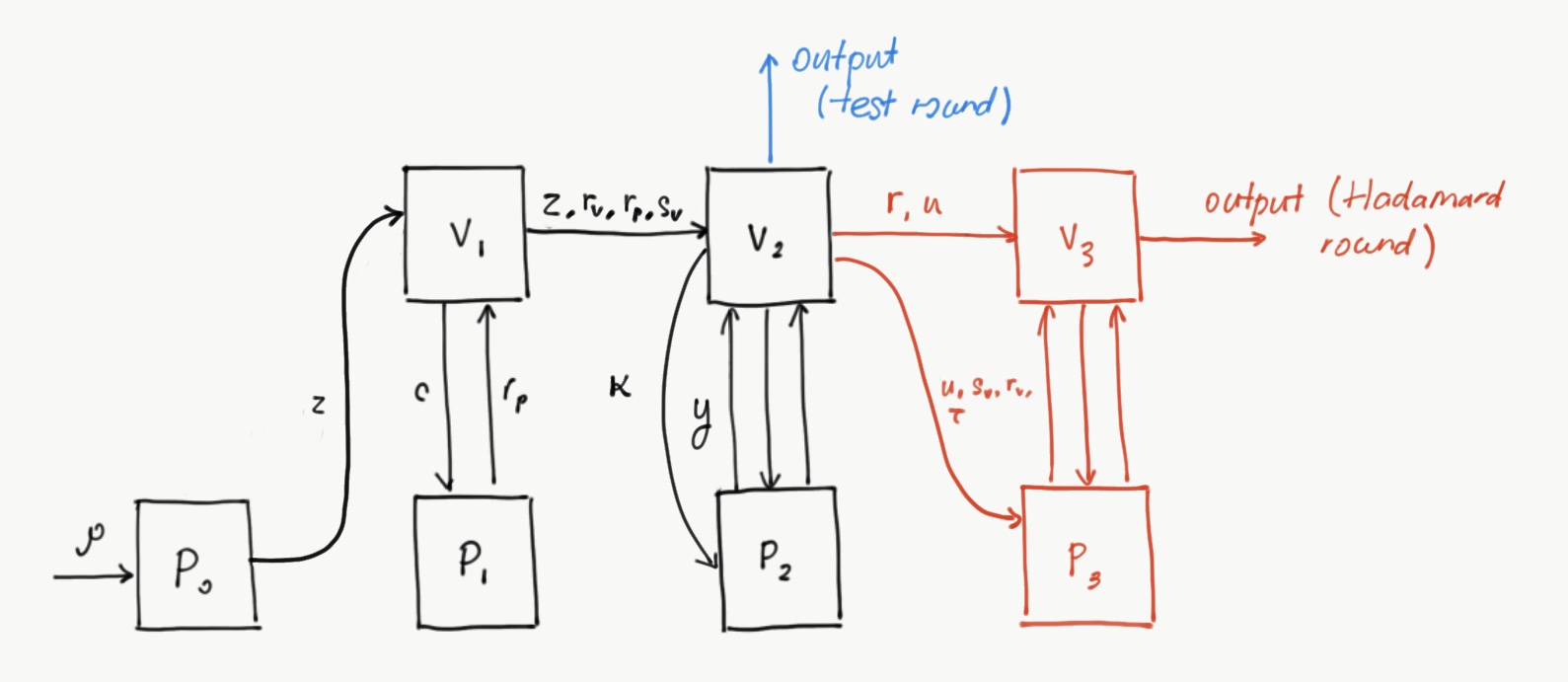} \\
\caption{
\label{figure:honest-protocol}
Diagrammatic representation of an honest execution of Protocol \ref{protocol:main-protocol}. We omit communication between the different parts of the prover for neatness, and we also omit the initial messages $i$ (see Appendix \ref{appendix:commitment}) from executions of the perfectly binding, quantum computationally concealing commitment protocol which we refer to in Assumptions \ref{assumptions}. The blue parts of the diagram indicate what occurs in the case of a test round, and the red parts indicate what occurs in the case of a Hadamard round.}
\end{center}
\end{figure}
\begin{figure}[H]
\begin{center}
\includegraphics[width=0.8\textwidth]{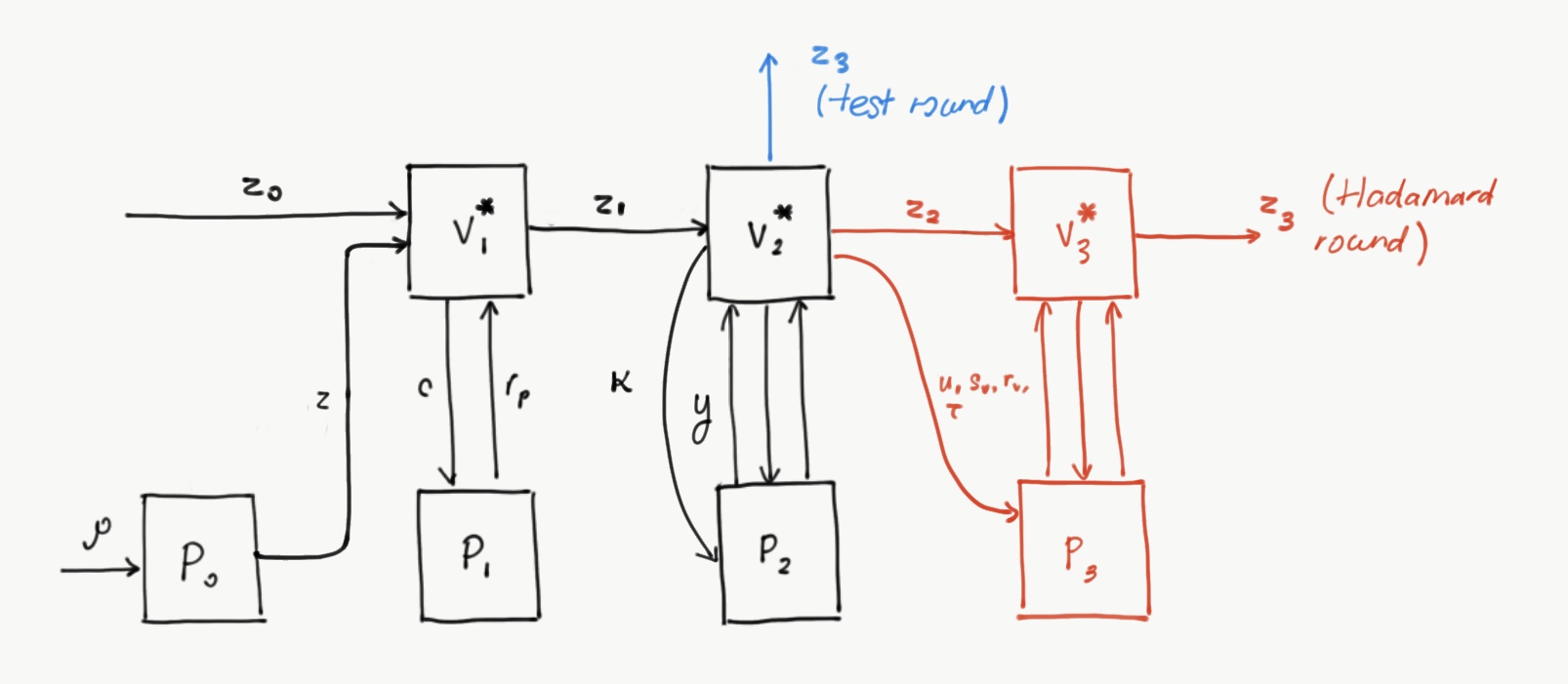}
\end{center}
\caption{
\label{figure:cheating-verifier}
Diagrammatic representation of Protocol \ref{protocol:main-protocol} with a cheating verifier. The cheating verifier $V^*$ may take some (classical) auxiliary input $Z_0$, store auxiliary information (represented by $Z_1$ and $Z_2$), and produce a final output $Z_3$ that deviates from that specified by the protocol.}
\end{figure}

\section{Completeness of protocol}

\begin{lemma}
\label{lemma:completeness}
Suppose that the instance $x = (H, a, b)$ of the 2-local XZ Hamiltonian problem that is provided as input to the verifier and prover in Protocol \ref{protocol:main-protocol} is a yes-instance, i.e. the ground energy of $H$ is smaller than $a$. Then, the probability that the honest verifier accepts after an interaction with the honest prover in Protocol \ref{protocol:main-protocol} is $1 - \mu(|x|)$, for some negligible function $\mu$.
\end{lemma}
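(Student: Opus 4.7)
The plan is to condition on the verifier's choice of round at step 6 and bound the rejection probability in each branch separately. For a test round, by construction of the commitment stage (step 5) the honest prover holds, for each $i$, a coherent superposition of preimage/output pairs entangled with the $i$th committed qubit of $E(\rho)$. Measuring the output register to obtain $y_i$, and then measuring the committed qubit together with its preimage register in the standard basis, deterministically yields a valid preimage of $y_i$ under $\eta_{\kappa_i}$. Consequently every preimage check succeeds and the verifier accepts at step 7 with probability $1$.

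For a Hadamard round, I would decompose the acceptance event into three sub-events: (i) the trapdoor/key check of step 8 passes; (ii) the predicate $Q(t,\pi,a,b,r,u)$ evaluates to $1$ on the honest outcomes; (iii) the NP zero-knowledge proof of step 8 is accepted. Event (i) holds with probability $1$ because the honest verifier generated $\kappa$ and $\tau$ via the prescribed ETCFF algorithms, so they pass the check of Protocol \ref{protocol:trapdoor-and-key-check}. Event (iii) conditioned on $Q=1$ holds except with negligible probability, by completeness of the NP zero-knowledge proof system. So the substantive task is to bound $\Pr[Q\neq 1]$. For this I would invoke the correctness of the measurement protocol against the honest prover: the recovered outcomes $m$ are distributed exactly as though obtained by physically measuring each qubit of $E(\rho)$ in the basis $h_i$. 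The basis choice $h$ together with the Pauli correction $c$ baked into Definition \ref{def:Q} is engineered so that the corrected outcomes $u \oplus c_{i_1}\cdots c_{i_k}$ match precisely those produced by the transversal $U_r$ measurement of Protocol \ref{protocol:modifiedqma}. I would then verify the clauses of Definition \ref{def:R_r} individually: clause 1 ($q_i \in \mathcal{D}_N$) holds with probability $1$ because the Steane code admits transversal Cliffords, so after $U_r$ the encoding part of each logical qubit is still a valid codeword; the inner product condition in clause 2(c) holds with probability $1$ by a short case check---for each pair $(t_i, U_j) \in \{\ket{0},\ket{+}\} \times \{I, H\}$, every standard-basis outcome lies in the support of the post-$U_r$ trap state, so $\bra{z_i} U_j \ket{t_i} \neq 0$; and the counting condition in clause 2 reduces, via the logical outcomes $\lambda_j$ extracted from $(P_1, u_{2j-1}, P_2, u_{2j})$, to the acceptance condition of step 6 of Protocol \ref{protocol:mf16}, so Claim \ref{claim:proof-amplification-works} delivers concentration of $\textsc{Count}/m$ around $\tfrac{1}{2} - a/(2\sum_s |d_s|)$ with failure probability exponentially small in $|x|$ for suitably chosen $m$.

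A union bound over the complementary events of (i)--(iii), and over the verifier's choice of round, then yields overall acceptance probability $1 - \mu(|x|)$ for a negligible $\mu$, as claimed. The main obstacle I expect is carefully formalising, in the Hadamard-round branch, the correspondence between the outcomes $m$ that the verifier recovers via the Mahadev trapdoors and the transversal-$U_r$ measurement outcomes that Protocol \ref{protocol:modifiedqma} would have produced quantumly---in particular, tracking the Pauli one-time pad through $U_r$ to yield the correction string $c$ and confirming that XOR-ing $c$ with $u$ exactly reverses the pad, so that the \textsc{Count} computed by the prover in step 8 is distributed identically to the \textsc{Count} of the amplified Morimae--Fitzsimons procedure. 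All remaining ingredients are direct invocations of lemmas already established in Section \ref{section:ingredients}.
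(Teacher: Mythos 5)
Your proposal is correct and takes essentially the same route as the paper: both reduce, via the correctness guarantee of the Mahadev measurement protocol (Claim~5.3 of \cite{measurement}), to the completeness of the amplified Morimae--Fitzsimons verification (Claim~\ref{claim:proof-amplification-works}), with test-round acceptance, the trapdoor check, and NP zero-knowledge completeness handled as side conditions. The paper's version is more compressed---it cites Claim~5.3 directly for both the test-round acceptance probability and the Hadamard-round outcome distribution rather than re-deriving them---and one small misstatement in your clause~2(c) case check (``every standard-basis outcome lies in the support'' is false when $(t_i,U_j)$ is $(\ket{0},I)$ or $(\ket{+},H)$; the relevant fact is merely that outcomes obtained by \emph{actually measuring} the honest post-$U_r$ trap state always lie in its support) should be corrected, though it does not affect the conclusion.
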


\begin{proof}
The measurement protocol outlined in section \ref{section:mah18} has the properties that

\begin{enumerate}
\item for any $n$-qubit quantum state $\rho$ and for any choice of measurement bases $h$, the honest prover is accepted by the honest verifier with probability $1 - \mathsf{negl}(n)$, and
\item the distribution of measurement outcomes obtained by the verifier from an execution of the measurement protocol (the measurement outcomes $m_i$ in step 6(b) of Protocol \ref{protocol:main-protocol}) is negligibly close in total variation distance to the distribution that would have been obtained by performing the appropriate measurements directly on $\rho$.
\end{enumerate}
These properties are stated in Claim 5.3 of \cite{measurement}. It is evident (assuming the NP zero-knowledge proof system has perfect completeness) that if the verifier of Protocol \ref{protocol:main-protocol} had obtained the outcomes $m$ through direct measurement of $\rho$, it would accept with exactly the same probability with which the verifier of Protocol \ref{protocol:mf16} would accept $\rho = \sigma^{\otimes n}$. By Claim \ref{claim:proof-amplification-works}, this latter probability is exponentially close to 1. Lemma \ref{lemma:completeness} follows.
\end{proof}

\section{Soundness of protocol}
\label{section:soundness}
Let the honest verifier of the argument system in Protocol \ref{protocol:main-protocol} be denoted $V$, and let an arbitrary quantum polynomial-time prover with which $V$ interacts be denoted $\mathbb{P}$. For this section, we will require notation from Section 5.3 of \cite{measurement}, the proof of Theorem 8.6 of the same paper, and Section 4 of \cite{qma}. We will by and large introduce this notation as we proceed (and some of it has been introduced already in Sections \ref{section:mah18} and \ref{section:bjsw16}, the sections containing outlines of the measurement protocol from \cite{measurement} and the zero-knowledge proof system from \cite{qma}), but the reader should refer to the above works if clarification is necessary.

We begin by making some preliminary definitions and proving a claim, from which the soundness of Protocol \ref{protocol:main-protocol} (Lemma \ref{lem:soundness-of-main-protocol}) will naturally follow. Firstly, we introduce some notation from Section 4 of \cite{qma}:

\begin{definition}[Projection operators $\Pi_0$ and $\Pi_1$]
Define $N$ as it is defined in Protocol \ref{protocol:main-protocol}. Let $\mathcal{D}^0_N$ be the set of bitstrings $x$ such that the encoding of $\ket{0}$ under the concatenated Steane code of Protocol \ref{protocol:bjsw16} (or of Protocol \ref{protocol:main-protocol}) is $\sum_{x \in \mathcal{D}^0_N} \ket{x}$, and let $\mathcal{D}^1_N$ likewise be the set of bitstrings $x$ such that the encoding of $\ket{1}$ under the concatenated Steane code is $\sum_{x \in \mathcal{D}^1_N} \ket{x}$. (See Definition \ref{def:original-R_r}, and Section A.6 of \cite{qma}, for details about the concatenated Steane code. The first condition in Definition \ref{def:original-R_r} will provide some motivation for the following definitions of $\Pi_0$ and $\Pi_1$.) Define
\begin{align*}
&\Pi_0 = \sum_{x \in \mathcal{D}^0_N} \ket{x}\bra{x}\;,
&\Pi_1 = \sum_{x \in \mathcal{D}^1_N} \ket{x}\bra{x}\;.
\end{align*}
\end{definition}

\begin{definition}[Projection operators $\Delta_0$ and $\Delta_1$]
Define $N$ as it is defined in Protocol \ref{protocol:main-protocol}. Let $\Delta_0$ and $\Delta_1$ be the following projection operators:
\begin{align*}
&\Delta_0 = \frac{I^{\otimes N} + Z^{\otimes N}}{2}\;,
&\Delta_1 = \frac{I^{\otimes N} - Z^{\otimes N}}{2}\;.
\end{align*}
$\Delta_0$ is the projection onto the space spanned by all even-parity computational basis states, and $\Delta_1$ is its equivalent for odd-parity basis states. Note that, since all the codewords in $\mathcal{D}_0$ have even parity, and all the codewords in $\mathcal{D}_1$ have odd parity, it holds that $\Pi_0 \leq \Delta_0$ and that $\Pi_1 \leq \Delta_1$.
\end{definition}

\begin{definition}[The quantum channel $\Xi$]
Define a quantum channel mapping $N$ qubits to one qubit as follows:
\begin{gather*}
\Xi_N(\sigma) = \frac{
\langle I^{\otimes N}, \sigma \rangle I +
\langle X^{\otimes N}, \sigma \rangle X +
\langle Y^{\otimes N}, \sigma \rangle Y +
\langle Z^{\otimes N}, \sigma \rangle Z}{2}\;.
\end{gather*}
Loosely, $\Xi_N$ can be thought of as a simplification of the decoding operator to the concatenated Steane code that the honest prover applies to its quantum witness in Protocol \ref{protocol:bjsw16} (or in Protocol \ref{protocol:main-protocol}). Its adjoint is specified by
\begin{gather*}
\Xi^*_N(\sigma) = \frac{
\langle I, \sigma \rangle I^{\otimes N} +
\langle X, \sigma \rangle X^{\otimes N} +
\langle Y, \sigma \rangle Y^{\otimes N} +
\langle Z, \sigma \rangle Z^{\otimes N}}{2}\;,
\end{gather*}
and has the property that
\begin{align*}
&\Xi^*_N(\ket{0}\bra{0}) = \Delta_0\;,
&\Xi^*_N(\ket{1}\bra{1}) = \Delta_1\;,
\end{align*}
a property which we will shortly use.
\end{definition}

Let $z$ be prover $\mathbb{P}$'s commitment string from step 2 of Protocol \ref{protocol:main-protocol}. Because the commitment protocol is perfectly binding, there exists a unique, well-defined tuple $(\pi, a, b)$ and a string $s_p$ such that $z = \textsf{commit}((\pi, a, b), s_p)$.

\begin{definition}
For notational convenience, we define a quantum procedure $M$ on a $2nN$-qubit state $\rho$ as follows:
\begin{enumerate}[itemsep=1pt,topsep=0pt]
\item Apply $X^a Z^b$ to $\rho$, to obtain a state $\rho'$.
\item Apply $\pi^{-1}$ to each $2N$-qubit block in the state $\rho'$, to obtain a state $\rho''$.
\item Discard the last $N$ qubits of each $2N$-qubit block in $\rho''$, to obtain a state $\rho'''$.
\item To each $N$-qubit block in $\rho'''$, apply the map $\Xi_N$.
\end{enumerate}
We also define the procedure $\tilde{M}$ as the application of the first three steps in $M$, again for notational convenience.
\end{definition}

Intuitively, we think of $M$ as an inverse to the prover's encoding procedure $E$. $M$ may not actually invert the prover's encoding procedure, if the prover lied about the encoding key that it used when it sent the verifier $z = \textsf{commit}((\pi, a, b), s_p)$; however, this is immaterial.

We now prove a claim from which the soundness of Protocol \ref{protocol:main-protocol} will follow. Before we do so, however, we make a remark about notation for clarity. When we write `$V$ accepts the distribution $D_{\xi, h}$ with probability $p$' (or similar phrases), we mean that, in \cite{measurement}'s notation from section 8.2,
\[ \sum_{h \in \{0,1\}^{2nN}} v_h (1 - \tilde{p}_h(D_{\xi, h})) = p.\]
Here, $h$ represents the verifier's choice of measurement bases, as before; $v_h$ is the probability that the honest verifier will select the basis choice $h$, and $1 - \tilde{p}_h(D)$ is defined, for any distribution $D$ over measurement outcomes $m \in \{0,1\}^{2nN}$, as the probability that the honest verifier will accept a string drawn from $D$ on basis choice $h$. (When we refer to the latter probability, we assume, following \cite[Section 4]{qma}, that the prover behaves optimally---in terms of maximising the verifier's eventual probability of acceptance---after the verifier sends it measurement outcomes at the end of step 6 in Protocol \ref{protocol:main-protocol}. For the purposes of the present soundness analysis, therefore, we can imagine that the verifier checks the predicate $Q$ itself after step 6, instead of relying on the prover to prove to it during step 8 that $Q$ is satisfied.)


\begin{claim}
\label{claim:if-V-accepts-so-does-V'}
Suppose there exists a quantum state $\xi$ such that the honest verifier $V$ accepts the distribution $D_{\xi, h}$ with probability $p$. Then the state $M(\xi)$ is accepted by the verifier of Protocol \ref{protocol:mf16} with probability at least $p$.
\end{claim}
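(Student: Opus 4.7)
The plan is to construct a coupling between the measurement process that feeds $V$ (drawing from $D_{\xi,h}$) and the measurement of $M(\xi)$ that feeds the verifier of Protocol~\ref{protocol:mf16} (denote it $V'$), and to show that under the coupling every outcome on which $V$ accepts is also an outcome on which $V'$ accepts. I would first condition on the coin-flip outcome $r$, which simultaneously determines the Hamiltonian terms to be checked and, via step~4 of Protocol~\ref{protocol:main-protocol}, the basis string $h$; I will argue the bound pointwise in $r$ and then average. Crucially, by the honest specification, $h$ is constant on each $2N$-qubit block: within the blocks corresponding to logical qubits touched by some $H_{s_j}$, every bit of $h$ is set to match the Pauli at that position, and all other bits are set to $0$. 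This block-consistency of $h$ will let permutations of qubits commute with measurement outcomes.

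The central technical step is a measurement equivalence: measuring $M(\xi)$ in the Pauli bases prescribed by Protocol~\ref{protocol:mf16} yields exactly the same distribution as the following classical procedure applied to $m \sim D_{\xi,h}$: (a) XOR the relevant bits of $m$ by the string $c$ from Definition~\ref{def:Q} (undoing $X^a Z^b$); (b) apply $\pi^{-1}$ to each $2N$-block, parsing the result as $(q_i,z_i)$; (c) discard the trap bits $z_i$; (d) take the parity of each remaining $N$-bit block $q_i$. Steps (a) and (b) follow from the facts that, for a qubit measured in the basis determined by $h_i$, conjugating the state by a Pauli shifts the outcome by a bit read off from the Pauli, and that permutations within a basis-consistent block commute with measurement. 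Step (d) is the crux: from the given identities $\Xi_N^*(\ket{0}\bra{0}) = \Delta_0$, $\Xi_N^*(\ket{1}\bra{1}) = \Delta_1$, and their Hadamard-basis analogues, one computes that $\Xi_N$ followed by a Pauli-$P$ measurement produces outcome $b$ with probability $\mathrm{Tr}\bigl(\tfrac{1}{2}(I^{\otimes N} + (-1)^b P^{\otimes N})\,\sigma\bigr)$, which is exactly the probability that the parity of a transversal Pauli-$P$ measurement on $\sigma$ equals $b$.

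Given the equivalence, couple the two processes by drawing a single $m \sim D_{\xi,h}$ and reading off both $V$'s decision (from the sample $m$ using the $Q$ predicate) and $V'$'s decision (on $M(\xi)$, using the parity-based recipe above). The event ``$V$ accepts'' requires (i) every $q_i$ to lie in $\mathcal{D}_N = \mathcal{D}_N^0 \cup \mathcal{D}_N^1$, (ii) the trap-check condition 2(c) of Definition~\ref{def:R_r} to pass, and (iii) the count \textsc{Count} computed from the logical outcomes ($0$ if $q_i \in \mathcal{D}_N^0$, $1$ if $q_i \in \mathcal{D}_N^1$) to land on the correct side of the threshold. The event ``$V'$ accepts'' requires only the analogous count, now taken over the \emph{parities} of the $q_i$, to land on the correct side. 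But since $\Pi_b \leq \Delta_b$, every string in $\mathcal{D}_N^0$ has even parity and every string in $\mathcal{D}_N^1$ has odd parity; hence on event (i) the parity of $q_i$ coincides with its logical outcome, and the count on parities equals the count on logical outcomes. Combining (i) with (iii) therefore implies $V'$'s check, so ``$V$ accepts'' $\subseteq$ ``$V'$ accepts'' under the coupling, and averaging over $r$ yields $\Pr[V'\text{ accepts }M(\xi)] \geq p$.

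I expect the main obstacle to lie in verifying step (d) of the measurement equivalence cleanly. The channel $\Xi_N$ is not a genuine Steane decoder, but is tailored so that its action before a single-qubit Pauli measurement reproduces exactly the parity of a transversal physical measurement; the identities on $\Xi_N^*$ recalled above are what make this work. The extra checks (i) and (ii) that $V$ imposes but $V'$ does not serve only to restrict $V$'s acceptance event further, so they tilt the inequality in the desired direction for free.
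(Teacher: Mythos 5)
Your proposal is correct and uses the same key facts as the paper's proof: the inclusions $\Pi_b \leq \Delta_b$ (codewords in $\mathcal{D}_N^b$ have parity $b$), the identity $\Xi_N^*(\ket{b}\bra{b}) = \Delta_b$ (together with the commutation of $\Xi_N$ with transversal Cliffords, which is the paper's Eq.~(35) from~\cite{qma}), and an averaging over $r$. The paper expresses this as a single chain of operator inequalities, namely $p_r \le \sum_{z\in\mathcal Z_r}\langle (U_r^*)^{\otimes N}(\Delta_{z_1}\otimes\cdots\otimes\Delta_{z_n})U_r^{\otimes N},\tilde M(\xi)\rangle = \sum_{z\in\mathcal Z_r}\langle U_r^*\ket{z}\bra{z}U_r, M(\xi)\rangle$, whereas you unwind the same inequality into an explicit coupling on a shared sample $m\sim D_{\xi,h}$ and argue containment of acceptance events; the two presentations are equivalent, with your version making slightly more explicit that the trap check~2(c) of Definition~\ref{def:R_r} only shrinks $V$'s acceptance event and so can be discarded for free.
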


\begin{proof}
Fix a choice of $r$ (see step 3 of Protocol \ref{protocol:main-protocol} for a definition of $r$). Let $\mathcal{Z}_r$ be the subset of $\{0,1\}^{n}$ such that the verifier of Protocol \ref{protocol:mf16} accepts if and only if the $n$-bit string that results from concatenating the measurement results it obtains in step 4 of said protocol is a member of $\mathcal{Z}_r$. It is unimportant to the analysis what $\mathcal{Z}_r$ actually is; it matters only that it is well-defined.

For this choice of $r$, we can express the probability that the verifier of Protocol \ref{protocol:mf16} accepts a state $\tau$ as
\begin{gather*}
\sum_{z \in \mathcal{Z}_r} \Bigl< U^*_r \ket{z_1, \ldots, z_{n}} \bra{z_1, \ldots, z_{n}} U_r, \tau \Bigr>.
\end{gather*}

(Though only $2m$ of the $n$ qubits in $\tau$ are relevant to $U_r$, we assume here for notational simplicity that $U_r$ is a gate on $n$ qubits, and that the verifier measures all $n$ qubits of $U_r \tau$ and ignores those measurement results which are irrelevant.)

For the same choice of $r$, we can express the probability that the verifier $V$ from Protocol \ref{protocol:main-protocol} will eventually accept the distribution $D_{\xi, h}$ as
\begin{gather*}
p_r = \sum_{z \in \mathcal{Z}_r} \Bigl< (U_r^*)^{\otimes N} (\Pi_{z_1} \otimes \cdots \otimes \Pi_{z_{n}}) (U_r)^{\otimes N}, \tilde{M}(\xi) \Bigr>\;.
\end{gather*}

Following \cite{qma}, we note that
\begin{align*}
&\sum_{z \in \mathcal{Z}_r} \Bigl< (U^*_r)^{\otimes N} (\Pi_{z_1} \otimes \cdots \otimes \Pi_{z_{n}}) (U_r)^{\otimes N}, \tilde{M}(\xi) \Bigr> \\
&\leq
\sum_{z \in \mathcal{Z}_r} \Bigl< (U^*_r)^{\otimes N} (\Delta_{z_1} \otimes \cdots \otimes \Delta_{z_{n}}) (U_r)^{\otimes N}, \tilde{M}(\xi) \Bigr> \\
&=
\sum_{z \in \mathcal{Z}_r} \Bigl< (U^*_r)^{\otimes N} \Big( \Xi^*_N(\ket{z_1}\bra{z_1}) \otimes \cdots \otimes \Xi^*_N(\ket{z_{n}}\bra{z_{n}}) \Big) (U_r)^{\otimes N}, \tilde{M}(\xi) \Bigr> \\
&= \sum_{z \in \mathcal{Z}_r} \Bigl< (\Xi_N^{\otimes n})^* U^*_r \ket{z_1, \ldots, z_{n} } \bra{z_1, \ldots, z_{n}} U_r, \tilde{M}(\xi) \Bigr> \\
&= \sum_{z \in \mathcal{Z}_r} \Bigl< U^*_r \ket{z_1, \ldots, z_{n}} \bra{z_1, \ldots, z_{n}} U_r, M(\xi) \Bigr>\;.
\end{align*}
For the second-to-last equality above, we have used the observation that, for any $n$-qubit Clifford operation $C$, and every $nN$-qubit state $\sigma$,

\begin{equation*}
\Xi_N^{\otimes n} (C^{\otimes N} \sigma (C^{\otimes N})^*) = C \Xi_N^{\otimes n}(\sigma) C^*.
\end{equation*}

This is equation (35) in \cite{qma}, and can be verified directly by considering the definition of $\Xi_{N}$.

We conclude that, if the distribution $D_{\xi, h}$ is accepted by $V$ with probability $p = \sum_r v_r p_r = \sum_{h} v_h (1 - \tilde{p}_h(D_{\xi, h}))$ (where $v_r$ is the probability that a given $r$ will be chosen, and the second expression is simply a formulation in alternative notation of the first), the state $M(\xi)$ is accepted by the verifier of Protocol \ref{protocol:mf16} with probability at least $p$.
\end{proof}

Now we turn to arguing that Protocol \ref{protocol:main-protocol} has a soundness parameter $s$ which is negligibly close to $\frac{3}{4}$.

\begin{lemma}
\label{lem:soundness-of-main-protocol}
Suppose that the instance $x=(H,a,b)$ of the $2$-local XZ Hamiltonian problem that is provided as input to the verifier and prover in Protocol \ref{protocol:main-protocol} is a no-instance, i.e.\ the ground energy of $H$ is larger than $b$. Then, provided that Assumptions~\ref{assumptions} hold, the probability that the honest verifier $V$ accepts in Protocol \ref{protocol:main-protocol} after an interaction with any quantum polynomial-time prover $\mathbb{P}$ is at most $\frac{3}{4} + \textsf{negl}(|x|)$.
\end{lemma}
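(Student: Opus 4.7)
The plan is to reduce the soundness analysis to three ingredients that have already been established: the soundness of the Mahadev measurement protocol (Claim \ref{claim:measurement-protocol}), the bridge Claim \ref{claim:if-V-accepts-so-does-V'} relating $V$'s acceptance to that of the Protocol \ref{protocol:mf16} verifier, and the exponential soundness of the amplified XZ-Hamiltonian proof system in no-instances (Claim \ref{claim:proof-amplification-works}). Fix any QPT prover $\mathbb{P}$ interacting with $V$. Perfect binding of the commitment in step 2 uniquely determines a tuple $(\pi,a,b,s_p)$ from $z$; the soundness of the NP zero-knowledge proof system in step 8 lets us further assume (absorbing a negligible additive error) that $\mathbb{P}$ convinces $V$ in step 8 only when the predicate $Q(t,\pi,a,b,r,u)$ actually holds for this unique tuple and some trap string $t$.

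Decompose the acceptance probability by round type. With probability $\tfrac{1}{2}$ the verifier runs a test round, accepting with probability $1-p_{h,T}$ on basis choice $h$ (in the notation of Claim \ref{claim:measurement-protocol}). With probability $\tfrac{1}{2}$ it runs a Hadamard round, accepting with probability $(1-p_{h,H}) \cdot \Pr_{m \sim D^C_{\mathbb{P},h}}[Q(m)]$, where $Q(m)$ abbreviates the event that the outcomes $u$ extracted from $m$ using the trapdoors $\tau$ satisfy the predicate for the unique $(\pi,a,b)$ determined by $z$.

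Now apply Claim \ref{claim:measurement-protocol} to extract a state $\xi$ and prover $\mathbb{P}'$ with $\|D^C_{\mathbb{P},h} - D_{\mathbb{P}',h}\|_{TV} \leq \sqrt{p_{h,T}} + p_{h,H} + \mu$ and $D_{\mathbb{P}',h} \approx_c D_{\xi,h}$. Because checking $Q$ is classical polynomial-time once the function keys $\kappa$, trapdoors $\tau$, and public strings $r, z$ are fixed, computational indistinguishability transfers the $Q$-probability: $|\Pr_{D_{\mathbb{P}',h}}[Q] - \Pr_{D_{\xi,h}}[Q]| \leq \mathsf{negl}$. Then Claim \ref{claim:if-V-accepts-so-does-V'} identifies $\sum_h v_h \Pr_{D_{\xi,h}}[Q]$ with the probability that the verifier of Protocol \ref{protocol:mf16} accepts the state $M(\xi)$, which Claim \ref{claim:proof-amplification-works} bounds by a negligible function in the no-instance case. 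Combining these three estimates yields $\Pr_{m \sim D^C_{\mathbb{P},h}}[Q(m)] \leq \sqrt{p_{h,T}} + p_{h,H} + \mathsf{negl}$ on average over $h$.

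Putting everything together and averaging over $h$ (using concavity of $\sqrt{\cdot}$ via Jensen to control $\mathbb{E}_h[\sqrt{p_{h,T}}]$), I obtain
\begin{equation*}
\Pr[V \text{ accepts}] \;\leq\; \tfrac{1}{2}(1 - p_T) \;+\; \tfrac{1}{2}(1 - p_H)\bigl(\sqrt{p_T} + p_H\bigr) \;+\; \mathsf{negl}(|x|),
\end{equation*}
where $p_T, p_H$ are the $h$-averaged rejection probabilities. A routine optimization of the right-hand side over $(p_T,p_H) \in [0,1]^2$ gives the claimed bound of $\tfrac{3}{4} + \mathsf{negl}(|x|)$. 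The main obstacle I anticipate is making the computational-indistinguishability step rigorous, because $Q$ depends on the trapdoors $\tau$ that are only revealed after the measurement protocol concludes: one must verify that computing $Q$ on a sample $m$ from $D_{\mathbb{P}',h}$ or $D_{\xi,h}$ can be realized by a quantum polynomial-time distinguisher on whichever joint input the claim $D_{\mathbb{P}',h} \approx_c D_{\xi,h}$ is stated for. Carefully tracking the conditional nature of $D^C_{\mathbb{P},h}$ through the TV-distance bound, and accounting for all negligible error sources (commitment concealing, NP zero-knowledge soundness, and the ETCFF/LWE assumption), are the remaining technical items.
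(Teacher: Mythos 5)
Your overall strategy matches the paper's proof: apply Claim~\ref{claim:measurement-protocol} to extract $\xi$ and $\mathbb{P}'$, transfer acceptance of $D_{\mathbb{P}',h}$ to $D_{\xi,h}$ by computational indistinguishability, reduce via Claim~\ref{claim:if-V-accepts-so-does-V'} to acceptance of $M(\xi)$ in Protocol~\ref{protocol:mf16}, invoke Claim~\ref{claim:proof-amplification-works} in the no-instance, and then combine the test- and Hadamard-round contributions. The concern you raise about where the trapdoors enter the indistinguishability argument is handled in the paper by adopting the convention, stated just before Claim~\ref{claim:if-V-accepts-so-does-V'}, that the verifier itself checks $Q$ after step~6 (the verifier generated and holds the trapdoors, and checking $Q$ is efficient given them).

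There is, however, a genuine gap in your averaging step. You write that Jensen's inequality (concavity of $\sqrt{\cdot}$) lets you pass from the $h$-indexed bound to
\[
\tfrac{1}{2}(1-p_T) + \tfrac{1}{2}(1-p_H)\bigl(\sqrt{p_T}+p_H\bigr),
\]
with $p_T,p_H$ the $h$-averaged rejection probabilities. Jensen controls $\mathbb{E}_h[\sqrt{p_{h,T}}]\le\sqrt{p_T}$ in isolation, but it does not let you push the expectation through the full expression $f(p_{h,T},p_{h,H})=\tfrac12\bigl(1-p_{h,T}+(1-p_{h,H})(\sqrt{p_{h,T}}+p_{h,H})\bigr)$, because $f$ is not jointly concave on $[0,1]^2$ (for instance, the Hessian determinant becomes negative once $\sqrt{p_{h,T}}>2(1-p_{h,H})$). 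Decoupling the factors by bounding $(1-p_{h,H})\le 1$ before averaging is also no good — you lose the cancellation between the $(1-p_{h,H})$ prefactor and the $+p_{h,H}$ term, and the resulting bound exceeds $1$. The correct (and simpler) move is the paper's: bound $f(p_{h,T},p_{h,H})\le\frac{3}{4}$ \emph{pointwise} in $h$, then sum with weights $v_h$. Once you replace the Jensen step with this pointwise optimization, the rest of your argument goes through and gives $\frac{3}{4}+\mathsf{negl}(|x|)$ as claimed.
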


\begin{proof}
Claim \ref{claim:measurement-protocol} guarantees that, for any arbitrary quantum polynomial-time prover $\mathbb{P}$ who executes the measurement protocol with $V$, there exists a state $\xi$, a prover $\mathbb{P}'$ and a negligible function $\mu$ such that 
\begin{gather}
\|D^C_{\mathbb{P}, h} - D_{\mathbb{P'},h}\|_{TV} \leq \sqrt{p_{h,T}} + p_{h,H} + \mu \;,\quad \text{and} \notag\\
D_{\mathbb{P}',h} \approx_c D_{\xi,h}\;. \label{eq:comp-indist2}
\end{gather}

(See the paragraph immediately above Claim \ref{claim:measurement-protocol} for relevant notation.)

It follows from \eqref{eq:comp-indist2} that, if $V$ accepts the distribution $D_{\mathbb{P}',h}$ with probability $p$, it must accept the distribution $D_{\xi, h}$ with probability $p - \mathsf{negl}(N)$, because the two are computationally indistinguishable and the verifier $V$ is efficient. Therefore (using Claim \ref{claim:if-V-accepts-so-does-V'}), if $V$ accepts $D_{\mathbb{P}',h}$ with probability $p$, the verifier of Protocol 8.3 from \cite{measurement} accepts the state $M(\xi)$ with probability at least $p - \textsf{negl}(N)$. By the soundness of Protocol \ref{protocol:mf16} (Claim \ref{claim:proof-amplification-works}), we conclude that $p = \mathsf{negl}(N)$ when the problem Hamiltonian is a no-instance.

We now apply a similar argument to that which is used in Section 8.2 of \cite{measurement} in order  to establish an upper bound on the probability $\phi$ that $V$ accepts $\mathbb{P}$ in a no-instance. Let $E^H_{\mathbb{P}, h}$ denote the event that the verifier $V$ does not reject the prover labelled $\mathbb{P}$ in a Hadamard round indexed by $h$ \emph{during the measurement protocol phase} of Protocol \ref{protocol:main-protocol}. Let $E^T_{\mathbb{P}, h}$ denote the analogous event in a test round. Furthermore, let $E_{\mathbb{P}, h}$ denote the event that the verifier accepts the prover $\mathbb{P}$ in the last step of Protocol \ref{protocol:main-protocol}. The total probability that $V$ accepts $\mathbb{P}$ is the average, over all possible basis choices $h$, of the probability that $V$ accepts $\mathbb{P}$ after a test round indexed by $h$, plus the probability that $V$ accepts $\mathbb{P}$ after a Hadamard round indexed by $h$. As such, 
\begin{align*}
\phi &= \sum_{h \in \{0,1\}^{2nN}} v_h	(\: \frac{1}{2} Pr[E^T_{\mathbb{P},h}] + \frac{1}{2} Pr[E^H_{\mathbb{P},h} \cap E_{\mathbb{P},h}] \: )\\
&= \sum_{h \in \{0,1\}^{2nN}} v_h	( \: \frac{1}{2} Pr[E^T_{\mathbb{P},h}] + \frac{1}{2} Pr[E^H_{\mathbb{P},h}] Pr[E_{\mathbb{P},h} | E^H_{\mathbb{P},h}] \: ) \\
&= \sum_{h \in \{0,1\}^{2nN}} v_h	( \: \frac{1}{2} (1 - p_{h, T}) + \frac{1}{2} (1 - p_{h, H})(1 - \tilde{p}_h(D^C_{\mathbb{P}, h})) \: )\;.
\end{align*}

Since Lemma 3.1 of \cite{measurement} and Claim \ref{claim:measurement-protocol} taken together yield the inequality
\begin{gather*}
\tilde{p}_h(D_{\mathbb{P}',h}) - \tilde{p}_h(D^C_{\mathbb{P}, h}) \leq \|D^C_{\mathbb{P}, h} - D_{\mathbb{P},h}\|_{TV} \leq \sqrt{p_{h,T}} + p_{h,H} + \mu\;,
\end{gather*}
 it follows that
\begin{align*}
\phi &\leq \sum_{h \in \{0,1\}^{2nN}} v_h	( \: \frac{1}{2} (1 - p_{h, T}) + \frac{1}{2} (1 - p_{h, H})(1 - \tilde{p}_h(D_{\mathbb{P}', h}) + \sqrt{p_{h,T}} + p_{h,H} + \mu) \: ) \\
&\leq \frac{1}{2}\mu + \frac{1}{2} \sum_{h \in \{0,1\}^{2nN}} v_h(1 - p_{h,T} + (1 - p_{h,H})(p_{h,H} + \sqrt{p_{h,T}})) + \frac{1}{2} \sum_{h \in \{0,1\}^{2nN}} v_h(1 - \tilde{p}_h(D_{\mathbb{P}', h})) \\
&\leq \frac{1}{2}\mu + \frac{3}{4} + \frac{1}{2}p\;.
\end{align*}

The upper bound of $\frac{3}{4}$ in the last line can be obtained by straightforward calculation.\footnote{For example, one can obtain this bound by maximising the quantity $f(p_{h,T}, p_{h,H}) = \frac{1}{2}\big(1 - p_{h,T} + (1 - p_{h,H})(p_{h,H} + \sqrt{p_{h,T}})\big)$ under the assumption that $p_{h,T}$ and $p_{h,H}$ lie in $[0,1]$. The function $f$ has one stationary point $(p_{h,T}=\frac{1}{9}, p_{h,H}=\frac{1}{3})$ in $[0,1]^2$; checking $f$ at this point, in addition to its maxima on each of the boundaries of $[0,1]^2$, reveals that the choice of $(p_{h,T}, p_{h,H}) \in [0,1]^2$ which yields the maximum value of $f$ is $(\frac{1}{9},\frac{1}{3})$, giving $f = \frac{2}{3}$. Of course, $\frac{2}{3} < \frac{3}{4}$; we use the bound of $\frac{3}{4}$ for consistency with \cite{measurement}.} We conclude that Protocol \ref{protocol:main-protocol} has a soundness parameter $s$ which is negligibly close to $\frac{3}{4}$.
\end{proof}

\section{Zero-knowledge property of protocol}
\label{section:zk}

In this section, we establish that Protocol~\ref{protocol:main-protocol} is zero-knowledge against arbitrary classical probabilistic polynomial time (PPT) verifiers. Specifically, we show the following:

\begin{lemma}\label{lem:zk}
Suppose that the instance $x=(H,a,b)$ of the $2$-local XZ Hamiltonian problem that is provided as input to the verifier and prover in Protocol \ref{protocol:main-protocol} is a yes-instance, i.e.\ the ground energy of $H$ is smaller than $a$. Then (provided that Assumptions \ref{assumptions} hold) there exists a polynomial-time generated PPT simulator $S$ such that, for any arbitrary PPT verifier $V^*$, the distribution of $V^*$'s final output after its interaction with the honest prover $P$ in Protocol \ref{protocol:main-protocol} is (classical) computationally indistinguishable from $S$'s output distribution.
\end{lemma}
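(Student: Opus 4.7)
The plan is to construct, for any PPT verifier $V^*$, a classical PPT simulator $S$ whose output is computationally indistinguishable from $V^*$'s view of an interaction with the honest prover $P$. The construction exploits three features of Protocol~\ref{protocol:main-protocol} that were introduced specifically to enable classical simulation: the delayed opening of the coin-flip commitment $c$ at step~7, the delayed reveal of the verifier's ETCFF trapdoors $\tau_\kappa$ (also at step~7), and the fact that the simulated witness $\rho_{r}$ of Remark~\ref{remark:rho_r} is a tensor product of Pauli eigenstates, so that its encoding $E(\rho_r)$ is a stabilizer state whose Pauli-basis measurements can be sampled classically by Gottesman--Knill.

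The simulator will operate in two stages. First, it runs a \emph{lookahead} pass whose sole purpose is to extract $r_v$ and $\tau_\kappa$ from $V^*$: $S$ picks a uniformly random target $r^*$, commits to arbitrary encoding keys as $z$ (using the computational hiding of the commitment scheme), sends a random $r_p^{(0)}$, and plays the measurement-protocol messages of steps~5--6 using the honest prover strategy on a fixed reference state, so that the transcript up to step~7 is distributed as for some honest prover. With probability $\frac{1}{2}$ the verifier selects a Hadamard round and reveals both $r_v$ and $\tau_\kappa$ at step~7; otherwise $S$ rewinds and retries, giving $O(1)$ expected lookahead cost. Then $S$ rewinds to step~3 once more and sends $r_p = r_v \oplus r^*$, so that the effective coin-flip outcome is $r^*$, before proceeding to the main pass.

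In the main pass, $S$ emulates an honest prover holding $E(\rho_{r^*})$ without ever running a quantum circuit. Using the stabilizer description of $E(\rho_{r^*})$, the simulator samples classical triples $(b_i,x_i,e_i)$ from the correct marginal distribution of each committed qubit and sets $y_i = \eta^*_{\kappa_i}(b_i \Vert x_i, e_i)$, obtaining commitment strings distributed as in an honest quantum execution. For a test round at step~6, $S$ replies with $(\beta_i,x_i)=(b_i,x_i)$. For a Hadamard round, $S$ first samples the target extracted outcomes $m$ \emph{jointly} from the classically sampleable distribution $D_{E(\rho_{r^*}),h}$, and then uses $\tau_\kappa$ to compute, for each $i$ with $h_i=1$, both claw-preimages of $y_i$ and pick $(\beta_i,d_i)$ so that the verifier's extraction procedure from step~6(b) yields precisely $m$. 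For step~8, $S$ runs the trapdoor and key check of Protocol~\ref{protocol:trapdoor-and-key-check}; if it fails it aborts exactly as the honest prover would, and otherwise it invokes the NP zero-knowledge simulator on the (true) statement that $z$ commits to some key under which $Q(t,\pi,a,b,r^*,u)=1$.

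Indistinguishability will follow from a hybrid argument that successively replaces $z$ by a commitment to junk (by computational hiding of the commitment), biases the coin-flip to $r^*$ (by the rewinding analysis, which introduces only negligible statistical error because the Hadamard branch is chosen independently of the simulator's step~5--6 messages), substitutes $\rho$ by $\rho_{r^*}$ (reusing nearly verbatim the Pauli-twirl analysis of Lemma~\ref{lem:modified-qma-proof}, now applied inside the measurement-protocol ciphertext rather than to quantum messages in the clear), swaps the honest quantum execution of steps~5--6 for $S$'s classical emulation (by correctness of ETCFF inversion together with Claim~\ref{claim:measurement-protocol}), and finally replaces the NP proof by the NP simulator. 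The main obstacle I expect is the penultimate step: making the classically sampled $(y_i,\beta_i,d_i)$ indistinguishable from an honest prover's quantum-derived messages requires that their \emph{joint} distribution, not merely the per-qubit marginals, match what the verifier would extract from a real $E(\rho_{r^*})$. This forces the target $m$ to be sampled jointly first and the $(\beta_i,d_i)$ to be chosen consistently afterwards, and it is where the encryption-scheme properties of $E$ combine essentially with the trapdoor-extraction guarantee of~\cite{measurement}; accounting for the error incurred at each hybrid will then give that $S$'s output is negligibly close to $V^*$'s real view.
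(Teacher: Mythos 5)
Your proposal takes a genuinely different route from the paper's, and it has a gap that I do not see how to repair as written. The central structural difference is that the paper's simulator never rewinds. After eliminating the coin-flip (delegated to step 1 of Section 5 of \cite{qma}), the de-quantized intermediary $I_2'$ simply answers the Hadamard round with a \emph{uniformly random} string $s_i$ for each $i$, and the later component $I_3'$ — which does receive the verifier's trapdoors $\tau_\kappa$ and the opening of $r_v$ at step~7 — \emph{retroactively} picks a fresh one-time-pad $Z$-key $b'$ so that the equation $\textsf{Meas}(H\ket{\psi_i}) \oplus d_i \cdot (x_{0,i} \oplus x_{1,i}) = s_{i,1}$ holds for every Hadamard-basis index. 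Because the commitment $z$ has already been replaced by a commitment $z'$ to a fixed junk key, there is no conflict between the retroactive $b'$ and anything the verifier has seen. This is the device that makes the trapdoor-check step of the protocol pay off: the trapdoors arrive \emph{after} the simulator's messages, so the simulator must be able to defer all consistency choices that depend on the trapdoors until step~8, and the retroactive choice of $b'$ is exactly what makes that deferral possible.

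Your proposal, by contrast, needs the trapdoors \emph{before} the simulator responds to the Hadamard round — you want to sample a target outcome $m$ from $D_{E(\rho_{r^*}),h}$ and then solve $\beta_i = m_i \oplus d_i \cdot (x_{0,i}\oplus x_{1,i})$ for $(\beta_i,d_i)$, which requires both claw preimages of $y_i$ and hence the trapdoor for $\kappa_i$. You try to obtain the trapdoors via a rewinding lookahead, but this does not give you what you need. The lookahead uses a random $r_p^{(0)}$, after which you rewind and send $r_p = r_v \oplus r^*$. Since the verifier's choice of bases $h$ in step~4 depends on $r = r_v \oplus r_p$ (and its key sampling on $h$), the keys $\kappa$ — and therefore the trapdoors — produced in the main pass are not the ones you extracted in the lookahead, so the claw-preimage computation in the main pass cannot be carried out. (Using the same $r_p$ in both passes does not help either: then the lookahead transcript is identical to the main pass up to the trapdoor reveal, and you are back to needing the trapdoors at step~6 in order to have produced step~6 responses in the first place.) In addition, the claim of $O(1)$ expected lookahead cost is unjustified: an arbitrary PPT $V^*$ may choose a Hadamard round with arbitrarily small or zero probability, and your rewinding loop has no escape hatch for that case, whereas the paper's simulator has no such issue because it simply never rewinds. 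Your hybrid skeleton (junk commitment, substitute $\rho_{r}$ via Lemma~\ref{lem:modified-qma-proof}, swap NP proof for NP simulator) is otherwise in the same spirit as the paper's, but the missing idea is precisely the deferred, trapdoor-assisted retroactive choice of the one-time-pad key — without it, the de-quantization of the measurement-protocol phase does not go through.
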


\begin{remark}\label{rk:52}
Lemma~\ref{lem:zk} formulates the zero-knowledge property in terms of classical verifiers and computational indistinguishability against classical distinguishers, because this is the most natural setting for a protocol in which verifier and interaction are classical. However, the same proof can be adapted to show that, for any quantum polynomial-time verifier executing Protocol \ref{protocol:main-protocol}, there exists a quantum polynomial-time generated simulator whose output is QPT indistinguishable in yes-instances from that of the verifier. (In particular, the latter follows from the fact that the second item in Assumptions~\ref{assumptions} implies an NP proof system which is zero-knowledge against quantum polynomial-time verifiers, an implication shown to be true in~\cite{watrous2009zero}.)
\end{remark}

We show that Protocol \ref{protocol:main-protocol} is zero-knowledge by replacing the components of the honest prover with components of a simulator one at a time, and demonstrating that, when the input is a yes-instance, the dishonest verifier's output after each replacement is made is at the least computationally indistinguishable from its output before. The argument proceeds in two stages. In the first, we show that the honest prover can be replaced by a \emph{quantum polynomial-time} simulator that does not have access to the witness $\rho$. In the second, we de-quantise the simulator to show that the entire execution can be simulated by a classical simulator who likewise does not have access to $\rho$. (The latter is desirable because the verifier is a classical entity.)

We begin with the protocol execution between the honest prover $P$ and an arbitrary cheating verifier $V^*$, the latter of whom may take some (classical) auxiliary input $Z_0$, store information (represented by $Z_1$ and $Z_2$), and produce an arbitrary final output $Z_3$. A diagram representing the interaction between $V^*$ and $P$ can be found in Figure \ref{figure:cheating-verifier}.

\subsection{Eliminating the coin-flipping protocol}

Our first step in constructing a simulator is to eliminate the coin-flipping protocol, which is designed to produce a trusted random string $r$, and replace it with the generation of a truly random string. (This step is entirely analogous to step 1 of Section 5 in \cite{qma}, and we omit the analysis.) The new diagram is shown below. In this diagram, \textit{\textsf{coins}} represents a trusted procedure that samples a uniformly random string $r$ of the appropriate length. 

\begin{center}
    \includegraphics[width=0.8\textwidth]{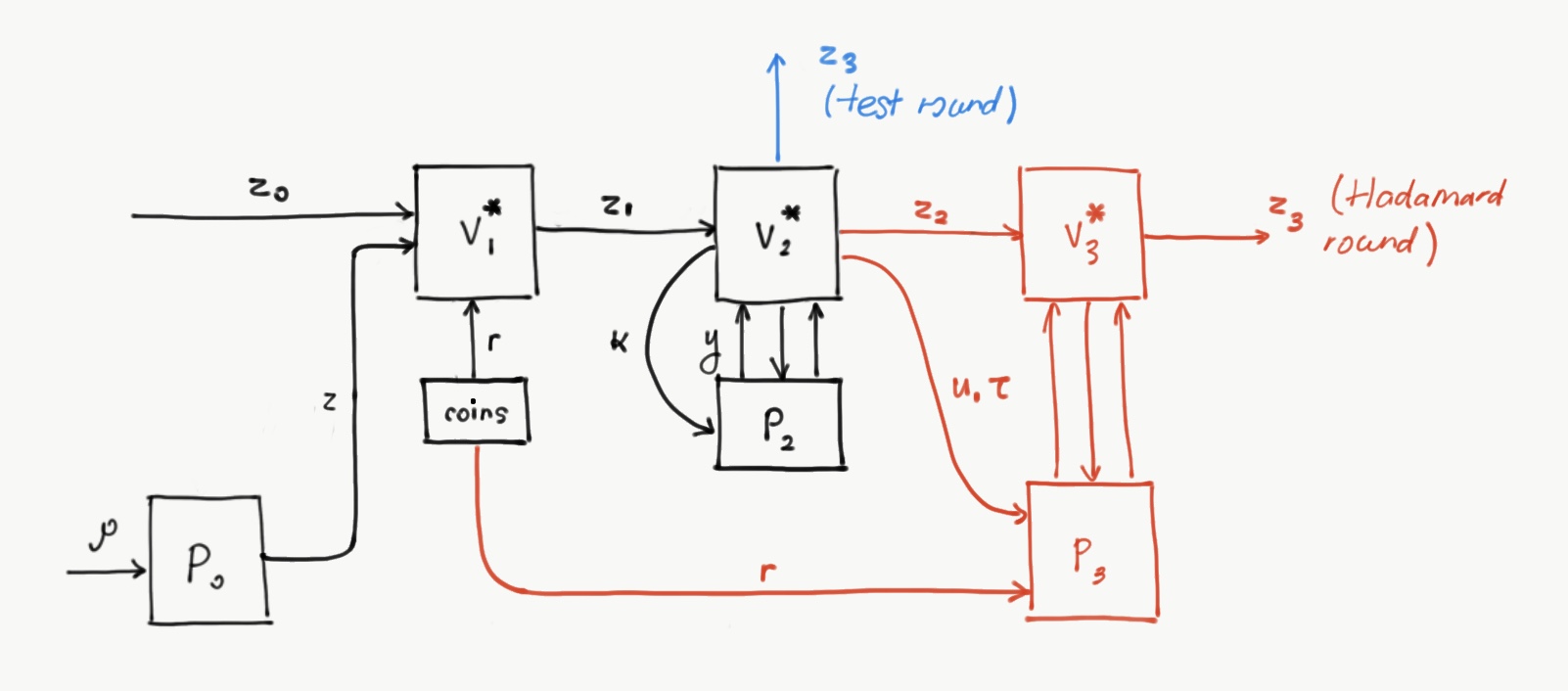}
\end{center}

\subsection{Introducing an intermediary}

Our next step is to introduce an \textit{intermediary}, denoted by $I$, which pretends—to the cheating verifier of Protocol \ref{protocol:main-protocol}—to be its prover $P$, while simultaneously playing the role of \textit{verifier} to the prover from the zero-knowledge proof system of Protocol \ref{protocol:modifiedqma} \footnote{Protocol \ref{protocol:modifiedqma} is identical in structure to the protocol presented in \cite{qma}. We refer the reader to Figure 4 in that paper for a diagram representing the appropriate interactions.}. (We denote the honest prover and honest verifier for the proof system of Protocol \ref{protocol:modifiedqma} by $\mathcal{P}$ and $\mathcal{V}$, respectively, to distinguish them from the prover(s) $P$ and verifier(s) $V$ of the classical-verifier protocol currently under consideration.) We remark, for clarity, that $I$ is a quantum polynomial-time procedure. The essential idea of this section is that $I$ will behave so it is impossible for the classical verifier $V$ to tell whether it is interacting with the intermediary or with its honest prover. (We achieve this simply by making $I$ output exactly the same things that $P$ would.) Given that this is so, the map that $V$ implements from its input to its output, including its auxiliary registers, cannot possibly be different in the previous section as compared to this section.
\begin{figure}[H]
\begin{center}
    \includegraphics[width=\textwidth]{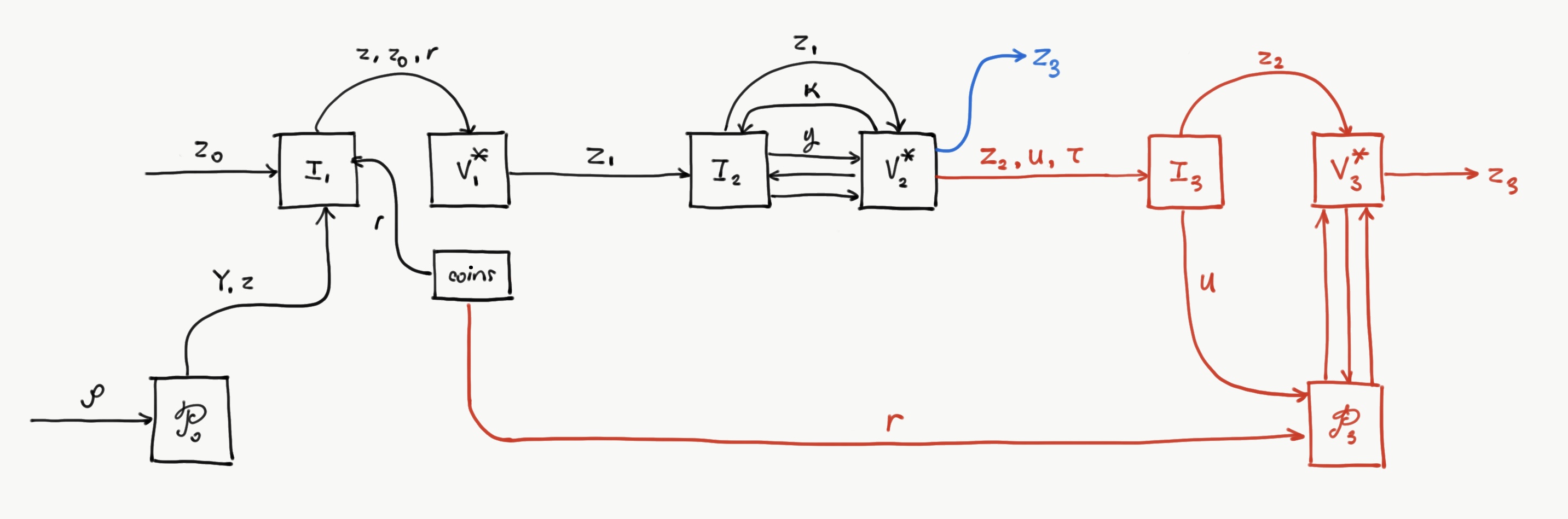}
    \caption{The intermediary interacting with the honest prover from the proof system of Protocol \ref{protocol:modifiedqma}, denoted by $\mathcal{P}$, and also with the cheating classical verifier $V^*$. $I_1$ receives the encoded quantum witness, which we have denoted by $Y$, from $\mathcal{P}$, in addition to $\mathcal{P}$'s commitment $z$. It then sends $z$ to $V^*_1$, along with $Z_1$, the auxiliary input that $V^*_1$ is supposed to receive, and $r$, the random string generated by \textit{\textsf{coins}}. $I_2$ passes on any output $V^*_1$ produces to $V^*_2$, performs itself the procedure for committing to a quantum state from \cite{measurement}, and executes the measurement protocol with $V^*_2$. $I_3$ receives the measurement outcomes $u$ and the trapdoors $\tau$ from $V_2^*$, and checks whether the trapdoors are valid. If they are invalid, it aborts directly; if they are valid, it sends $u$ on to $\mathcal{P}_3$ and passes $Z_2$ to $V^*_3$, so that $\mathcal{P}_3$ and $V^*_3$ can execute the NP zero-knowledge proof protocol. (Each part of $I$ should also send everything it knows to its successor, but we have omitted these communications for the sake of cleanliness, as we omitted the communication between parts of the prover in previous diagrams.)}
    \label{figure:intermediary}
\end{center}
\end{figure}

\subsection{Simulating the protocol with a quantum simulator}

We now note that Figure \ref{figure:intermediary} looks exactly like Figure 4 from \cite{qma}, if we consider the intermediary $I$ and the cheating classical verifier $V^*$ taken together to be a cheating verifier $\mathcal{V}'$ for the proof system of Protocol \ref{protocol:modifiedqma}.

\begin{figure}[H]
\begin{center}
    \includegraphics[width=\textwidth]{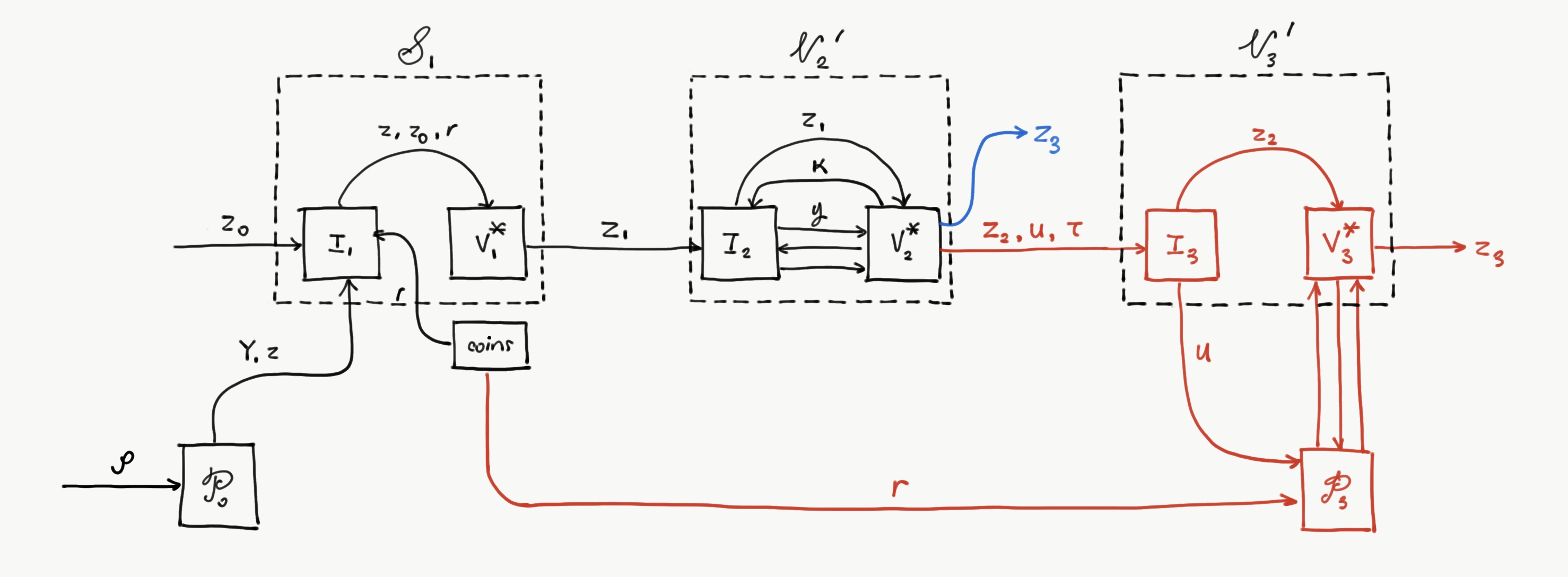}
    \caption{Compare to Figure 4 of \cite{qma}. Note that $\mathcal{S}_1$ includes the behaviour of an arbitrary $\mathcal{V}'_1$; the reason it is called $\mathcal{S}_1$ and not $\mathcal{V}'_1$ is because $\mathcal{V}'_1$ obtains $r$ from a coin-flipping protocol, while $\mathcal{S}_1$ generates $r$ using \textit{\textsf{coins}}. In all other respects, $\mathcal{S}_1$ is the same as $\mathcal{V}'_1$.}
    \label{figure:simulation-ready-1}
\end{center}
\end{figure}

Using similar reasoning as in \cite{qma} (and recalling that, by Lemma~\ref{lem:modified-qma-proof}, it still works when the Hamiltonian being verified is an XZ Hamiltonian), therefore, we conclude that we can replace $\rho$ in Figure \ref{figure:simulation-ready-1} with $\rho_r$—where $\rho_r$ is a quantum state specifically designed to pass the challenge indexed by $r$—without affecting the verifier's output distribution (to within computational indistinguishability). See Remark \ref{remark:rho_r} for a procedure that explicitly constructs $\rho_r$. Note that, if our objective was to achieve a \textit{quantum} simulation without knowing the witness state $\rho$, our task would already be finished at this step. However, our verifier is classical; therefore, in order to prove that our classical verifier's interaction with its prover does not impart to it any knowledge (apart from the fact that the problem instance is a yes-instance) that it could not have generated itself, we need to achieve a \emph{classical} simulation of the argument system.

\subsection{Simulating the protocol with a classical simulator}

\subsubsection{Replacing $\mathcal{P}_0$ and $I_1$}

If we want to simulate the situation in Figure \ref{figure:simulation-ready-1} classically, then we need to de-quantise $\mathcal{P}_0$, $I_1$ and $I_2$. ($I_3$ and $\mathcal{P}_3$ are already classical.) Our first step is to replace $\mathcal{P}_0$ and $I_1$ with a single classical entity, $I_1'$.

$I_1'$ simply chooses encoding keys $(t, \pi, a, b)$ and generates $z$, a commitment to the encoding keys $(\pi, a, b)$. It then sends $z, r$ and $Z_1$ to $V^*_1$, as $I_1$ would have. Because $I_1'$ has exactly the same output as $I_1$, the verifier's output in Figure \ref{figure:I_1-prime} is the same as its output in Figure \ref{figure:simulation-ready-1}. (We assume that the still-quantum $I_2$ now generates $\rho_r$ for itself.)

\begin{figure}[H]
\includegraphics[width=\textwidth]{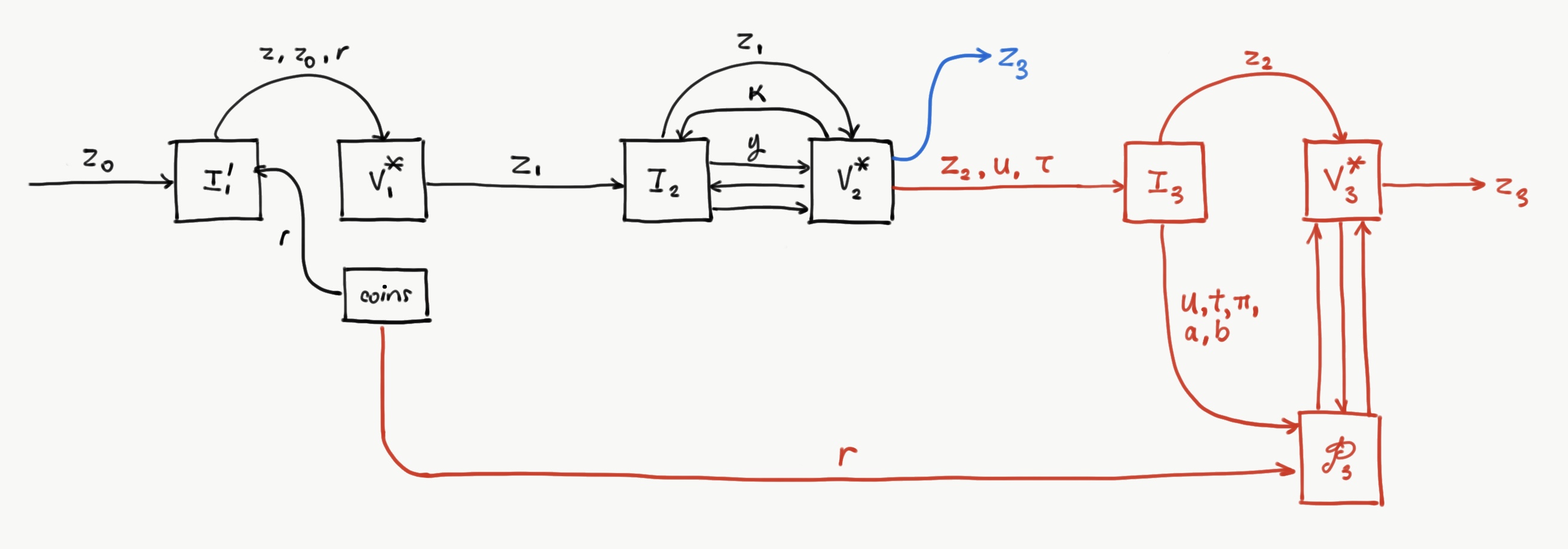}
\caption{$\mathcal{P}_0$ and $I_1$ have been replaced by $I_1'$.}
\label{figure:I_1-prime}
\end{figure}

\subsubsection{Some simplifications (which make it possible to de-quantise $I_2$)}

Following \cite{qma}, we make some alterations to Figure \ref{figure:I_1-prime} that will allow us to eventually de-quantise $I_2$. The alterations are as follows:

\begin{enumerate}[itemsep=1pt, topsep=0pt]
\item Replace $V_3^*$ and $\mathcal{P}_3$ with an efficient simulation $S_3$. (An efficient simulation of the NP proof protocol execution between $V_3^*$ and $\mathcal{P}_3$ is guaranteed to exist because the NP proof protocol is zero-knowledge.) Recall that the statement $\mathcal{P}_3$ is meant to prove to $V^*_3$ in a zero-knowledge way is as follows: `There exists a string \(s_p\) and an encoding key \((t, \pi, a, b)\) such that \(z = \mathsf{commit}((\pi, a, b), s_p) \) and \(Q(t, \pi, a, b, r, u) = 1\).' The zero-knowledge property of the NP proof system guarantees that, for yes-instances, the output of $S_3$ is indistinguishable from the output of the protocol execution between $V_3^*$ and $P_3$. In our case, $I_1'$ always holds $s_p$ and $(\pi, a, b)$ such that \(z = \mathsf{commit}((\pi, a, b), s_p)\), and the honest prover will abort the protocol if \(Q(t, \pi, a, b, r, u) = 0\). Therefore, whenever the prover does not abort, the output of $S_3$ is computationally indistinguishable from that of $V_3^*$ and $P_3$. We assume, following \cite{qma}, that $S_3$ also behaves as $V_3^*$ would when the prover aborts. If it does, then Figure \ref{figure:NP-proof-simulation} is computationally indistinguishable from Figure \ref{figure:I_1-prime}.

\begin{figure}[H]
\includegraphics[width=\textwidth]{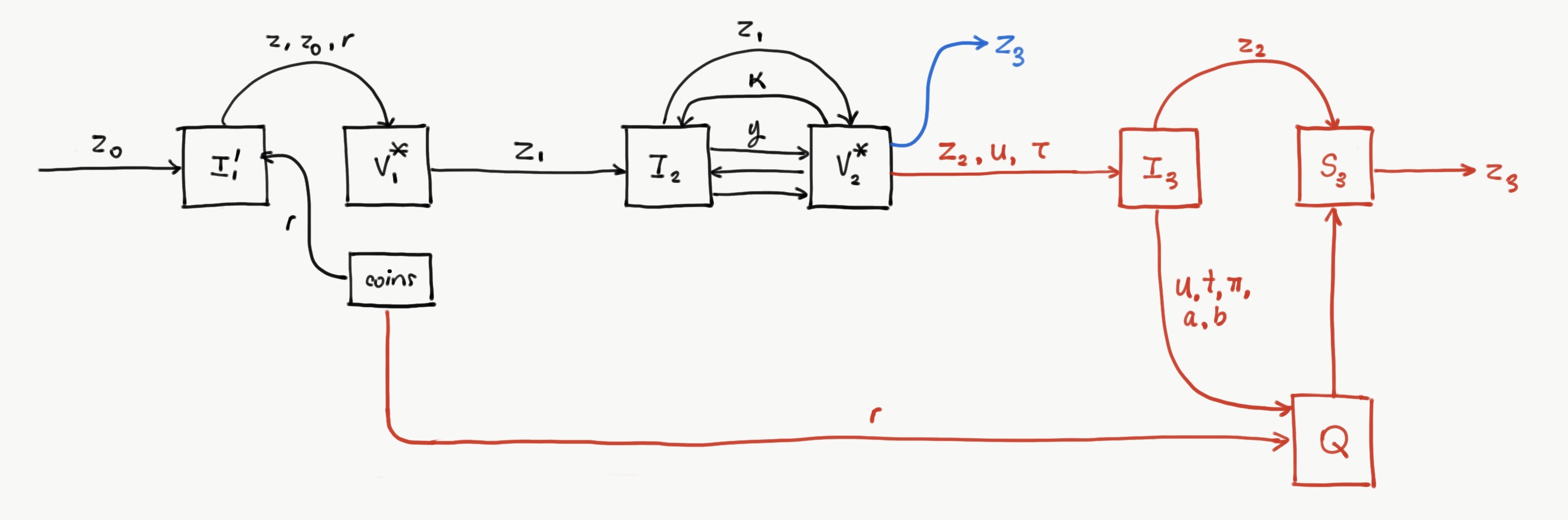}
\caption{$V_3^*$ and $\mathcal{P}_3$ have been replaced by $S_3$. Note that $S_3$ does not require access to the witness $(s_p, t, \pi, a, b)$, and so $s_p$ can be discarded immediately after $I_1'$ is run.}
\label{figure:NP-proof-simulation}
\end{figure}

\item Replace the generation of the genuine commitment $z$ with the generation of a commitment $z' = \textsf{commit}((\pi_0, a_0, b_0), s_p)$, where $\pi_0, a_0$ and $b_0$ are fixed strings independent of the encoding key $(t, \pi, a, b)$ that $I_1'$ chooses. Because the commitment protocol is (computationally) concealing, and the commitment is never opened (recall that $s_p$ is discarded after $I_1'$ is run), $V_1^*$ should not be able to tell (computationally speaking) that $z$ has been replaced by $z'$.

The genuine encoding key is still used to evaluate the predicate $Q$. Note that, because $z$ has been replaced with $z'$, the statement for which $S_3$ must simulate the execution of a zero-knowledge proof between $V_3^*$ and $\mathcal{P}_3$ is now as follows: `There exists a string \(s_p\) and an encoding key \((t, \pi, a, b)\) such that \(z' = \mathsf{commit}((\pi, a, b), s_p) \) and \(Q(t, \pi, a, b, r, u) = 1\).' This statement is, in general, no longer true, because the commitment protocol is perfectly binding. However, if the predicate $Q$ is still satisfied for the encoding key $(t, \pi, a, b)$ that $I_3$ sent, then $S_3$ will proceed to generate a transcript for the no-instance that is computationally indistinguishable from a transcript for a yes-instance. If $Q$ is no longer satisfied, then $S_3$ will abort, as before. In effect, therefore, the cheating verifier $V^*$ will not be able to tell (up to computational indistinguishability) that $z$ has been replaced by $z'$, and that the NP statement being `proven' to it is no longer true.
\end{enumerate}

\subsubsection{De-quantising $I_2$}
\label{sec:iprimesim}

We now replace $I_2$ with a classical entity $I_2'$. In the process, we require modifications to the behaviour of $I_3$.

Knowing $r$, $I_2'$ can calculate for itself what $\rho_r$ should be, though it cannot physically produce this state. As we noted in Remark \ref{remark:rho_r}, $\rho_r$ is a simple state: it is merely the tensor product of $\ket{0}, \ket{1}, \ket{+}$ and $\ket{-}$ qubits. Applying the concatenated Steane code to $\rho_r$ will then result in a tensor product of $N$-qubit states that look like
\begin{gather}
\sum_{x \in \mathcal{D}^0_N} \ket{x},  \quad
\sum_{x \in \mathcal{D}^1_N} \ket{x},  \quad
\sum_{x \in \mathcal{D}^0_N} \ket{x} + \sum_{x \in \mathcal{D}^1_N} \ket{x}, \quad \text{and} \quad
\sum_{x \in \mathcal{D}^0_N} \ket{x} - \sum_{x \in \mathcal{D}^1_N} \ket{x}, \label{eq:0}
\end{gather}
after appropriate normalisation.

A brief argument will suffice to establish that it is possible to classically simulate standard or Hadamard basis measurements on the qubits in $E(\rho_r)$. Each qubit of $E(\rho_r)$ is either an encoding qubit or a trap qubit, up to the application of a random single-qubit Pauli operator. Simulating standard-basis measurements of encoding qubits is classically
 feasible, because $\mathcal{D}^0_N$ and $\mathcal{D}^1_N$ are polynomially sized, and the expressions in~\eqref{eq:0} only involve superpositions over those sets with equal-magnitude coefficients. Simulating standard-basis measurements of trap qubits, which are always initialised either to $\ket{0}$ or $\ket{+}$, is trivially feasible.

To simulate a Hadamard basis measurement, we can take advantage of the transversal properties of the encoding scheme, and apply $H$ before we apply the concatenated Steane code. Denote the application of the concatenated Steane code to $\rho_r$ by $S(\rho_r)$. We have that
\[S(H^{\otimes n} \rho_r H^{\otimes n}) = H^{\otimes nN} S(\rho_r) H^{\otimes nN}\]
by transversality. To simulate a Hadamard basis measurement of $E(\rho_r)$, we then
\begin{enumerate}[itemsep=1pt, topsep=0pt]
\item Apply $H^{\otimes n}$ to $\rho_r$. This is easy to classically simulate, because $\rho_r$ is a tensor product of $\ket{0}, \ket{1}, \ket{+}$ and $\ket{-}$ qubits.
\item Apply the concatenated Steane code to $H^{\otimes n} \rho_r H^{\otimes n}$. Simulating this is classically feasible, by the same argument that we used for standard basis measurements, because $H^{\otimes n} \rho_r H^{\otimes n}$ is still a tensor product of $\ket{0}, \ket{1}, \ket{+}$ and $\ket{-}$ qubits.
\item Concatenate trap qubits to each $N$-qubit block in $S(H^{\otimes n} \rho_r H^{\otimes n}) = H^{\otimes nN} S(\rho_r) H^{\otimes nN}$. Simulate the application of $H$ to each trap qubit (which is, once again, classically easy to do because each trap qubit is initialised either to $\ket{0}$ or to $\ket{+}$).
\item Apply the permutation $\pi$ to each $2N$-tuple.
\item Simulate a standard basis measurement of the result.
\item XOR the string $b$ to the measurement outcome ($b$ was previously the $Z$-key for the Pauli one-time pad).
\end{enumerate}
Having established that it is possible to classically simulate standard and Hadamard basis measurements of the qubits in $E(\rho_r)$, we now describe the procedure that the classical $I_2'$ should follow for each qubit $i$ in the state $E(\rho_r)$.
\begin{enumerate}[itemsep=1pt, topsep=0pt]
\item During the commitment phase, $I_2'$ simulates a standard basis measurement on the $i$th qubit, obtains a simulated measurement result $\beta_i$, and then chooses a uniformly random preimage $x_i$ from the domain of the function specified by $\kappa_i$. It applies the function specified by $\kappa_i$ to $\beta_i \| x_i$ and sets $y_i = \eta_{\kappa_i}(\beta_i \| x_i)$.
\item If the verifier requests a test round, $I_2'$ sends $\beta_i \| x_i$ to the verifier. This is exactly what the quantum prover $I_2$ would send in the case of a test round, so the verifier cannot tell that it is  interacting with $I_2'$ instead of $I_2$.
\item If the verifier requests a Hadamard round, $I_2'$ sends a uniformly random string $s_i \in \{0,1\}^{w+1}$ to the verifier, where $w$ is the length of the preimages. 
In the same situation, the quantum $I_2$ would have sent Hadamard basis measurements of the $w+1$ qubits in the $i$th committed qubit in $E(\rho)$ and its associated preimage register.

\begin{enumerate}[itemsep=1pt, topsep=0pt]
\item If $h_i = 0$, the outcomes of these measurements are uniformly distributed and thus indistinguishable from the distribution of strings $s_i$ reported by $I'_2$.
\item Let $\ket{\psi_i}$ be the state of the $i$th qubit of $E(\rho)$, let $x_{0,i}$ and $x_{1,i}$ be the two preimages to $y_i$ under the function $f_{\kappa_i}$, and let $b_i$ be the $i$th bit of the one-time-pad $Z$-key $b$ from $I_1'$'s encoding key $(t, \pi, a, b)$. If $h_i = 1$, the outcomes of $I_2$'s Hadamard basis measurements can be represented as a tuple $(\beta_i,d_i)$, where $d_i$ is uniformly random, and
\begin{gather*}
\beta_i = d_i \cdot (x_{0,i} \oplus x_{1,i}) \oplus b_i \oplus \textsf{Meas}(H\ket{\psi_i})\;.
\end{gather*}

($\textsf{Meas}$ here denotes a standard basis measurement.)

Note that the distribution over $(b_i, \beta_i, d_i)$ which one would obtain by measuring $\ket{\psi_i}$ in the Hadamard basis, choosing $d_i$ and $b_i$ uniformly at random, and letting
\begin{gather*}
\beta_i = d_i \cdot (x_{0,i} \oplus x_{1,i}) \oplus b_i \oplus \textsf{Meas}(H\ket{\psi_i})
\end{gather*}
is equivalent to the one that one would obtain choosing a uniformly random $s_i$, measuring $\ket{\psi_i}$ in the Hadamard basis, calculating
\begin{gather*}
b_i = s_{i,1} \oplus d_i \cdot (x_{0,i} \oplus x_{1,i}) \oplus \textsf{Meas}(H\ket{\psi_i})\;,
\end{gather*}
and finally setting $\beta_i = s_{i,1}, \: d_i = s_{i,2} \cdots s_{i,w+1}$.

The former set of actions is equivalent to the set of actions that $I_2$ performs. The latter set of actions is (as we will shortly show) classically feasible provided that we have the verifier's trapdoors. Note that $I_2'$ only needs to send the verifier $s_i$, and can rely on its successor $I_3$, who \emph{will} have access to the verifier's trapdoors, to calculate the bits $b_i$ retroactively. It follows that, given that $I_3$ can produce correct bits $b_i$ (we will shortly show that it can), the distribution of strings reported by $I'$ is identical to the distribution of outcomes reported by $I$.
\end{enumerate}
\end{enumerate}

Having established that $I_2'$ and $I_2$ are the same from $V_2^*$'s perspective (meaning that it must have the same behaviour that it did in Figure \ref{figure:I_1-prime} after $I_2$ is replaced with $I_2'$), it remains to ensure that the choice of the one-time pad $Z$-key $ b$ is consistent with the $s_i$ that $I_2'$ picked. We relegate the task of making this choice  to $I_3'$, our new version of $I_3$, because it has access to the verifier's trapdoors $\tau$. If any of the trapdoors that it receives from the verifier are invalid, or if any of the ETCFF keys $\kappa$ which the verifier chose are invalid, $I_3'$ aborts, as specified in Protocol \ref{protocol:main-protocol}. (`Validity', here, means the following: 1) all the $\kappa$s which the verifier sent earlier well-formed, and 2) for each $y_i$, the trapdoor $\tau_{\kappa_i}$ correctly inverts the function specified by $\kappa_i$. We expand on this notion of `validity' in Appendix \ref{appendix:trapdoor-check}.) Presuming upon valid keys and valid trapdoors, $I_3'$ then deduces the verifier's choices of measurement basis, $h$, from $\tau$. Given that the trapdoors are valid and that the keys are well-formed, $I_3'$ can be confident that its deductions in this regard will lead it to behave in the same way that the honest prover would, because (given valid keys and trapdoors) $I_3'$ will know exactly which superpositions the honest prover would have obtained during the measurement protocol after following the verifier's instructions.

For notational convenience, let $\ket{\psi^*}$ denote the state obtained by applying the first three steps of $E$, but not the last step, to $\rho_r$. $I_3'$ subsequently executes the following procedure for all $i$ such that $h_i = 1$:

\begin{enumerate}[itemsep=1pt, topsep=0pt]
\item Set $d_i$ to be the last $w$ bits of $s_i$, and compute $d_i \cdot (x_{0,i} \oplus x_{1,i})$ using the trapdoor $\tau_{\kappa_i}$.
\item Simulate a standard basis measurement of $H X^{a_i} \ket{\psi^*_i}$. Denote the result by $\beta_i$. (Here, $a_i$ refers to the $i$th bit of $a$, where $a$ is taken from $I_1'$'s initial choice of one-time pad keys, and $\ket{\psi^*_i}$ denotes the $i$th qubit of $\ket{\psi^*}$.)
\item Set $b'_i$ (the $i$th bit of $b'$, the new $Z$-key for the one-time pad) to be equal to $\beta_i \oplus s_{i,1} \oplus d_i \cdot (x_{0,i} \oplus x_{1,i})$ (where $s_{i,1}$ refers to the first bit of $s_i$). This will cause the equation $\textsf{Meas}(H\ket{\psi_i}) \oplus d_i \cdot (x_{0,i} \oplus x_{1,i}) = s_{i,1}$ to be satisfied:
\begin{gather*}
\textsf{Meas}(H\ket{\psi_i}) \oplus d_i \cdot (x_{0,i} \oplus x_{1,i}) = s_{i,1} \\
\iff \textsf{Meas}(H Z^{b_i} X^{a_i} \ket{\psi^*_i}) \oplus d_i \cdot (x_{0,i} \oplus x_{1,i}) = s_{i,1} \\
\iff b_i \oplus \textsf{Meas}(H X^{a_i} \ket{\psi^*_i}) \oplus d_i \cdot (x_{0,i} \oplus x_{1,i}) = s_{i,1} \\
\iff b_i \oplus \beta_i \oplus d_i \cdot (x_{0,i} \oplus x_{1,i}) = s_{i,1} \\
\iff b_i = \beta_i \oplus s_{i,1} \oplus d_i \cdot (x_{0,i} \oplus x_{1,i}).
\end{gather*}

\end{enumerate}

Having done this, $I_3'$ then feeds $(t, \pi, a, b')$ into $Q$. (Note that replacing $b$ with $b'$ cannot create any conflict with the commitment string $z'$ that the verifier will notice, because $z'$ was already independent of the one-time-pad keys $(a,b)$.) In all other respects $I_3'$ behaves the same way that $I_3$ did.

The final simulation will be as follows:

\begin{figure}[H]
\includegraphics[width=\textwidth]{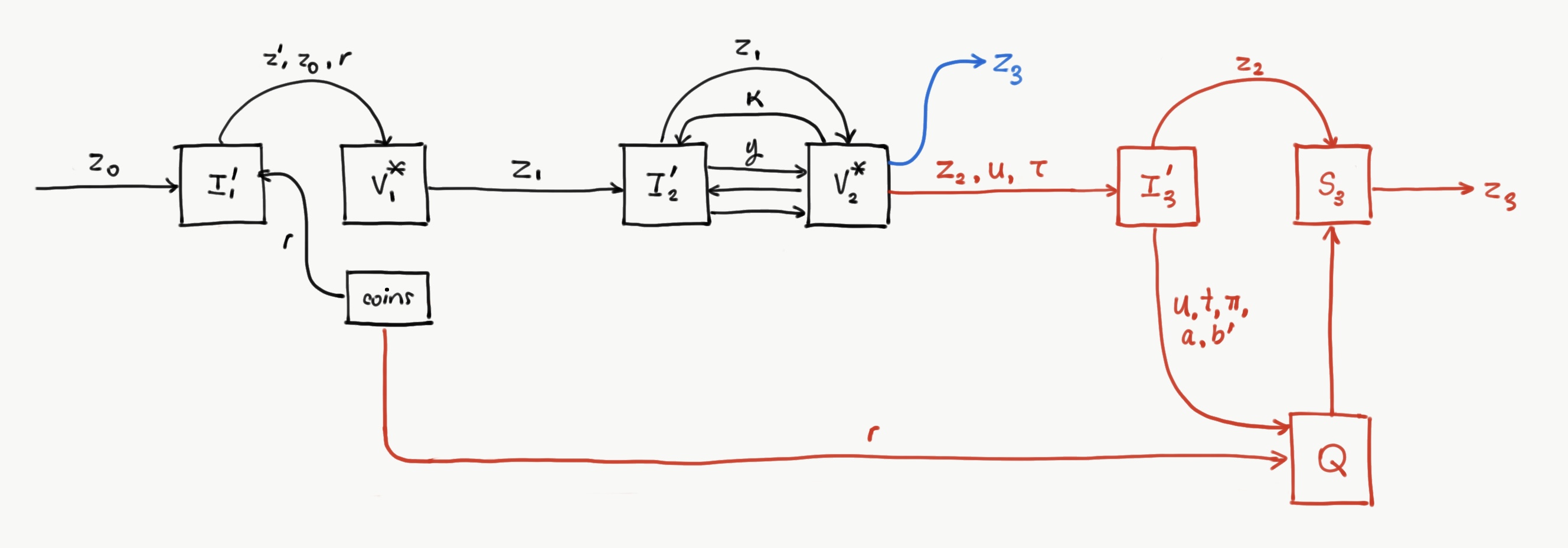}
\end{figure}

Since all the entities in this simulation are classical and efficient, and none have access to information about the witness state $\rho$, it follows that the protocol is zero-knowledge.

\appendix

\section{LWE-based ETCFF family and efficient trapdoor check}
\label{appendix:trapdoor-check}

In order to explain the trapdoor check that the honest prover of Protocol \ref{protocol:main-protocol} implements during step 8 of Protocol~\ref{protocol:main-protocol}, we briefly outline, at a level of detail appropriate for us, how the LWE-based ETCFF family that is used in~\cite{measurement} is constructed. 

We begin by introducing the instantiations of the keys $\kappa$ and the trapdoors $\tau$ for noisy trapdoor claw-free ($f$) and trapdoor injective ($g$) functions, whose properties we have relied upon in a black-box way for the rest of this work. For details, we refer the reader to Section 9 of \cite{measurement}.

The key $(\kappa_1, \kappa_2)$ for an noisy two-to-one function $f$ is $(A, As + e)$, where $A$ is a matrix in $\mathbb{Z}_q^{n\times m}$ and $e\in \mathbb{Z}_q^n$ is an error vector such that $|e| < B_f$ for some small upper bound $B_f$. (The specific properties that $B_f$ should satisfy will be described later.) Here, $n,m$ are integers, and $q$ is a prime power modulus that should be chosen as explained in~\cite{measurement}. In addition, in order to implement the trapdoor, we assume that the matrix $A$ is generated using the efficient algorithm $\textsf{GenTrap}$ which is described in Algorithm 1 in~\cite{trapdoors}. (For convenience, we use the `statistical instantiation' of the procedure described in Section 5.2 of the paper.) $\mathsf{GenTrap}$ produces a matrix $A$ that has the form $A=[\overline{A}|HG-\overline{A}R]$, for some publicly known $G \in \mathbb{Z}_q^{m \times w}$, $n \leq w \leq m$, some $\overline{A} \in \mathbb{Z}_q^{n \times (m-w)}$, some invertible matrix $H \in \mathbb{Z}_q^{n \times n}$, and some $R \in \mathbb{Z}^{(m-w) \times w}$, where $R=\tau_A$ is the trapdoor to $A$. As shown in~\cite[Theorem 5.4]{trapdoors}, it is straightforward, given the matrix $R$, to verify that $R$ is a `valid' trapdoor, in the sense that it allows a secret vector $s$ to be recovered from a tuple of the form $(A, b=As+e)$ with certainty when $e$ has magnitude smaller than some bound $B_{\textsf{Invert}}$. Checking that $R$ is a valid trapdoor involves computing the largest singular value of $R$ and checking that $A$ is indeed of the form $A=[\overline{A}|HG-\overline{A}R]$ for some invertible $H$ and for the publicly known $G$. Using any valid trapdoor, recovery can be performed via an algorithm $\textsf{Invert}$ described in~\cite{trapdoors}. 

 The key for an injective function $g$, meanwhile, is $(A, u)$, where $u$ is a random vector not of the form $As + e$ for any $e$ of small enough magnitude. (Again, `small enough' here refers to a specific upper bound, and what the bound is precisely will be described later. The distribution of $u$ is uniform over all vectors that satisfy this latter requirement.) The trapdoor $\tau_A$ is still the $R$ corresponding to the matrix $A$ which is described in the preceding paragraph. 

The functions $f_{\kappa}$ and $g_{\kappa}$ both take as input a bit $b$ and a vector $x$ and output \emph{a probability distribution} (to be more precise, a truncated Gaussian distribution of the kind defined in Section 2.3, equation 4 of \cite{measurement}). We clarify that, when we say that the functions output a probability distribution, we mean that they should be thought of as maps from the space of strings to the set of probability distributions, not that their outputs are randomised. Given a sample $y$ from one such probability distribution $Y$, the trapdoor $\tau_A$ can be used to recover the tuple(s) $(b, x)$ which are preimages of $y$ under the function specified by $\kappa$. (See Definition \ref{def:preimage-of-y} for a definition of the phrase `preimage of $y$'.) The functions $f_\kappa$ and $g_\kappa$ can be defined (using notation explained in the paragraph below the definition) as follows:

\begin{definition}[Definition of trapdoor claw-free and trapdoor injective functions]
\label{def:f-and-g}
\begin{align*}
\textbf{(a)} \quad f_{\kappa}(b, x) &= Ax + e_0 + b \cdot (As + e)\;, \\
&\quad \text{where $e_0$ is distributed as a truncated Gaussian with bounded magnitude $|e_0|_{max}$} \\
\textbf{(b)} \quad g_{\kappa}(b, x) &= Ax + e_0 + b \cdot u\;.
\end{align*}
\end{definition}

What the above notation means is that one \emph{samples from} the distribution determined by the input $(b,x)$ and the function key $\kappa = (\kappa_1, \kappa_2)$ by sampling $e_0$ from a truncated Gaussian centred at the origin and then computing $\kappa_1 x + e_0 + b \cdot (\kappa_2)$. A key feature of the $f$ functions is that the output distributions given by $f_\kappa(0, x)$ and $f_\kappa(1, x - s)$ are truncated Gaussians which overlap to a high degree (so that the statistical distance between the distributions $f_\kappa(0, x)$ and $f_\kappa(1, x - s)$ is negligible). The $g$ functions, meanwhile, are truly injective in the sense that $g(b, x)$ and $g(b', x')$ never overlap for $(b, x) \neq (b', x')$. In order that these two things are true, we require that the $e$ in Definition \ref{def:f-and-g}(a) is very small ($B_f \ll |e_0|_{max}$), and that the $u$ in Definition \ref{def:f-and-g}(b) is such that $u \neq As + e$ for any $|e| < B_g$, where $B_g > |e_0|_{max}$. It follows from hardness of the (decisional) LWE assumption that the keys for the $f$ functions and the keys for the $g$ functions are computationally indistinguishable.

The trapdoor check that the prover of protocol \ref{protocol:main-protocol} executes in step 8 is as follows:
\begin{protocol}[Trapdoor and key check]
\label{protocol:trapdoor-and-key-check}
\item[]Let $\kappa_i = (A_i, A_i s_i + e_i)$. (Note that $|e_i|$ need not be smaller than any particular bound in this definition of $\kappa_i$.) For all $i \in \{1, \ldots, 2nN\}$:
\begin{enumerate}
\item Check that $\tau_{\kappa_i}$ is a `valid' trapdoor for $A_i$, in the sense that was explained in the third paragraph of this appendix. If it is not, abort.
\item For a choice of $B_f$, $B_g$ and $|e_0|_{max}$ such that $B_f \ll |e_0|_{max} < B_g \leq B_{\textsf{Invert}}$ and $B_g - |e_0|_{max} > |e_0|_{max}$, check that one of the following three conditions hold:
\begin{enumerate}
\item \textsf{Invert} applied to $\kappa_{2,i}$ succeeds, and recovers an $e$ such that $|e| < B_f$, \emph{or}
\item \textsf{Invert} applied to $\kappa_{2,i}$ succeeds, and recovers an $e$ such that $B_g < |e|$, \emph{or}
\item \textsf{Invert} fails.
\end{enumerate}
\end{enumerate}
\end{protocol}

\begin{figure}[H]
\begin{center}
\includegraphics[width=0.5\textwidth]{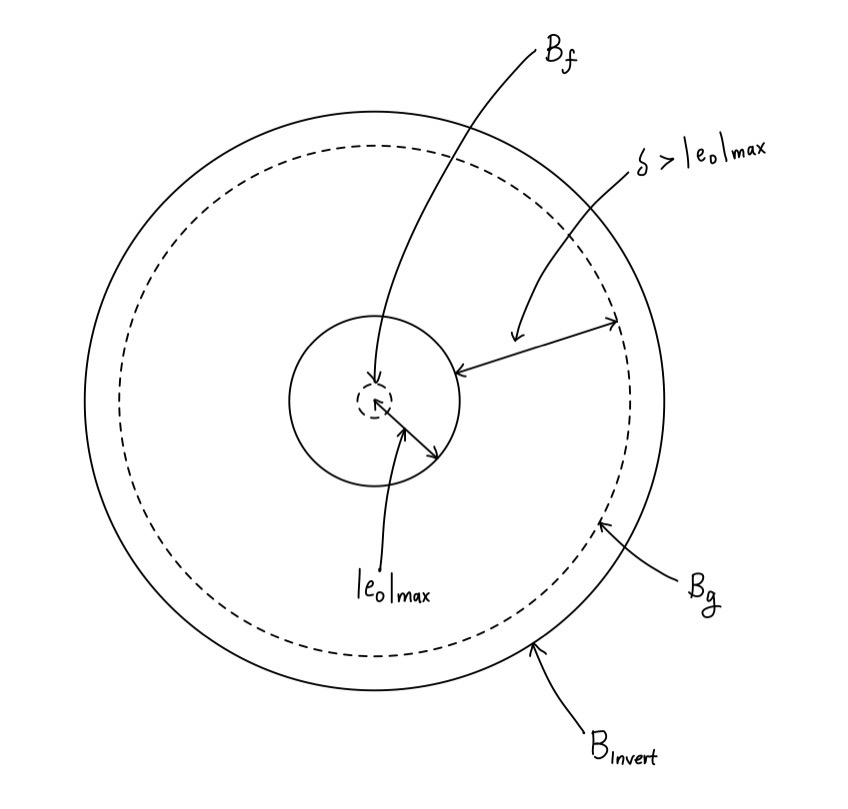}
\end{center}
\caption{Diagram illustrating one possible choice of parameters that satisfies the conditions in 2. above. When a circle is labelled with a number (such as $B_f$ or $B_g$), the radius of the circle represents the size of that number.}
\end{figure}

The conditions in step 2 above are intended to ensure that, for all $i$, $\kappa_i$ is a key either for an $f$ or a $g$ function, and therefore well-formed. An ill-formed key would be of the form $\kappa_{bad} = As + e$ for $B_f < e < B_g$; for some choices of $B_f$ and $B_g$, a subset of $\kappa_{bad}$s as just defined would behave neither like keys for $f$ functions nor like keys for $g$ functions, because the distributions $\eta_{\kappa_{bad}}(0,x)$ and $\eta_{\kappa_{bad}}(1, x-s)$ would overlap but not to a sufficient degree. The specifications on the parameters that are made in step 2 above, and the tests prescribed for the prover, are designed to ensure that $B_f$ and $B_g$ are properly chosen and that the prover can check efficiently that the verifier's $\kappa$s are well-formed according to these appropriate choices of $B_f$ and $B_g$.

The following claim shows that, given a valid trapdoor (i.e.\ a matrix $R$ that satisfies the efficiently verifiable conditions described in the third paragraph of this appendix), and a well-formed key $\kappa$, the trapdoor can be used to successfully recover all the preimages under a function $\eta_\kappa$ to any sample $y$ from a distribution $Y$ in the range of the function $\eta_\kappa$. This claim is needed to justify the correctness of the ``de-quantized'' simulator $I'_2$ considered in Section~\ref{sec:iprimesim}: if $I_2'$ can be sure that it has recovered all the preimages to $y$, and no others, then it can successfully simulate the honest prover.

\begin{claim}
\label{claim:recover-all-preimages}
Let $A$ be a matrix in $\mathbb{Z}_q^{n \times m}$, let $\kappa = (A, \kappa_2)$, and let the function $\eta_\kappa$ be defined by $\eta_\kappa(b, x) = Ax + e_0 + b \cdot \kappa_2$. (The output of $\eta_\kappa$ is, as in Definition \ref{def:f-and-g}, a probability distribution.) Let $\tau_A$ be a purported trapdoor for $A$. Suppose that $\kappa$ passes the test in step 2 of Protocol \ref{protocol:trapdoor-and-key-check}, and suppose that the trapdoor $\tau_A$ inverts the matrix $A$, in the sense that, given $r = As + e$ for some $e$ of sufficiently small magnitude, $\tau_A$ can be used to recover the unique $(s, e)$ such that $As + e = r$. Then one can use $\tau_A$ to efficiently recover all the preimages to any $y$ sampled from any distribution $Y$ in the range of $\eta_\kappa$.
\end{claim}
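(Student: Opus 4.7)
The plan is to show that the following procedure---running \textsf{Invert} twice with trapdoor $\tau_A$, once on $y$ and once on $y - \kappa_2$---recovers exactly the set of preimages of $y$ under $\eta_\kappa$. Concretely, for each $b \in \{0,1\}$, one applies \textsf{Invert} with trapdoor $\tau_A$ to the shifted sample $y - b \kappa_2$. If \textsf{Invert} succeeds and returns a pair $(x_b, e_b)$ with $|e_b| \leq |e_0|_{\max}$, one appends $(b, x_b)$ to the list of preimages; otherwise one discards. Soundness and completeness of this procedure then need to be established from the parameter constraints imposed in Protocol~\ref{protocol:trapdoor-and-key-check}.

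First I would unwind the definition: by Definition~\ref{def:preimage-of-y}, $(b,x)$ is a preimage of $y$ iff $y - b\kappa_2 = Ax + e_0$ for some $e_0$ in the (truncated) support of the Gaussian, i.e.\ $|e_0| \leq |e_0|_{\max}$. So the procedure above is exactly asking \textsf{Invert} whether such a decomposition exists and verifying it by size. Since $|e_0|_{\max} < B_g \leq B_{\textsf{Invert}}$ and the trapdoor is assumed valid (which the prover has separately checked via the singular-value and structural tests on $R = \tau_A$), whenever $y - b\kappa_2$ admits a decomposition with error below $B_{\textsf{Invert}}$, the algorithm \textsf{Invert} will succeed and return the \emph{unique} such decomposition.

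Next I would split into the two cases left open by Protocol~\ref{protocol:trapdoor-and-key-check}(2). In case (a), $\kappa$ is an $f$-key with $\kappa_2 = As + e$, $|e| < B_f$; any $y$ in the range of $\eta_\kappa$ has the form $Ax_0 + e_0 + b_0 \kappa_2$ with $|e_0| \leq |e_0|_{\max}$, and both shifts $y$ and $y - \kappa_2$ then decompose as $A x' + e'$ with $|e'| \leq |e_0|_{\max} + B_f$; since $B_f \ll |e_0|_{\max}$, the size check still passes, and \textsf{Invert} produces both preimages $(b_0, x_0)$ and $(1-b_0, x_0 \pm s)$ as required. In cases (b) and (c), $\kappa$ is a $g$-key: the ``correct'' shift $y - b_0 \kappa_2 = Ax_0 + e_0$ has error $|e_0| \leq |e_0|_{\max}$ and is recovered cleanly, whereas the ``wrong'' shift $y - (1-b_0)\kappa_2$ would require writing $\kappa_2$ itself as $As' + e'$ with $|e'| \leq 2|e_0|_{\max}$, contradicting either the lower bound $|e| > B_g$ (by the gap $B_g - |e_0|_{\max} > |e_0|_{\max}$) or the failure of \textsf{Invert} on $\kappa_2$.

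The main obstacle I anticipate is pinning down the parameter bookkeeping in case (b) cleanly, in particular arguing that \emph{no} short decomposition of the ``wrong'' shift exists. The key point is that \textsf{Invert} returns the \emph{unique} small-error decomposition when one exists, so it suffices to rule out existence; this follows from the strict triangle-inequality gap $B_g - |e_0|_{\max} > |e_0|_{\max}$ imposed in Protocol~\ref{protocol:trapdoor-and-key-check}(2), which ensures that any two preimage candidates under $\eta_\kappa$ disagreeing in the $b$-coordinate would force $\kappa_2$ to lie within distance $2|e_0|_{\max} < B_g$ of the column lattice of $A$, a contradiction. Once this is in place, completeness (all real preimages are found) and soundness (no spurious ones are reported) both follow from the single size check $|e_b| \leq |e_0|_{\max}$.
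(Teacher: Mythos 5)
Your procedure and overall strategy coincide with the paper's: run \textsf{Invert} with $\tau_A$ on both $y$ and $y-\kappa_2$, keep a candidate iff the recovered error has magnitude $\le |e_0|_{\max}$, and argue by cases depending on whether $\kappa$ is an $f$-key or a $g$-key. Your $g$-key argument (ruling out a short decomposition of the wrong shift via $B_g > 2|e_0|_{\max}$) also matches the paper's.

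The one place your reasoning goes astray is the $f$-key case. You derive that the error in the shifted decomposition is bounded by $|e_0|_{\max}+B_f$ and then assert that "the size check still passes" because $B_f \ll |e_0|_{\max}$. That inference is not valid: the threshold in the size check is exactly $|e_0|_{\max}$, and $|e_0|_{\max}+B_f$ strictly exceeds it, so for samples $y$ with $|e_0|$ near the truncation boundary the check on the "other" $b$ can legitimately fail. The paper handles this by observing that in precisely those (negligibly many) cases, $y$ has \emph{only one} preimage under $\eta_\kappa$, so the rejection is correct, not a failure of completeness. In fact your first two paragraphs already contain the cleaner argument that makes the case analysis unnecessary: by definition $(b,x)$ is a preimage of $y$ iff $y-b\kappa_2 = Ax+e'$ with $|e'|\le |e_0|_{\max}$, the validity of $\tau_A$ plus $|e_0|_{\max} < B_g \le B_{\textsf{Invert}}$ ensures \textsf{Invert} finds the unique such decomposition whenever it exists, and the size check is therefore a perfect preimage-membership test for each $b$. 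If you keep the case analysis, phrase the acceptance of the second $b$ as conditional (exactly when the second preimage exists) rather than unconditional.
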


\begin{proof}
By hypothesis, $\kappa_2$ is either of the form $As + e$ for some $(s, e)$ (with $|e| < B_f$), or it is not of the form $As + e$ for any $e$ such that $|e| < B_g$. We do not know \emph{a priori} which of these is the case, but the procedure that we perform in order to recover the preimage(s) to $y$ is the same in both cases:
\begin{enumerate}
\item Use the trapdoor $\tau_A$ to attempt to find $(x_1, e_1)$ such that $Ax_1 + e_1 = y$. If such an $(x_1, e_1)$ exists, and $|e_1| < |e_0|_{max}$, record $0 \| x_1$ as the first preimage.
\item Use the trapdoor $\tau_A$ to attempt to find $(x_2, e_2)$ such that $Ax_2 + e_2 = y - \kappa_2$. If such an $(x_2, e_2)$ exists, and $|e_2 < |e_0|_{max}$, record $1 \| x_2$ as the second preimage.
\end{enumerate}

If $\kappa_2 = As + e$ for some $s$ and $e$ such that $|e| \ll B_f < |e_0|_{max}$, then this procedure will return two preimages (except with negligible probability, which happens when $y$ comes from the negligibly-sized part of the support of a distribution $f_\kappa(b,x)$ which is not in the support of the distribution $f_\kappa(\lnot b, \: x + (-1)^b s)$; this can occur if $y$ is a sample such that $|e_0| + |e| > |e_0|_{max}$, using notation from Definition \ref{def:f-and-g}). Assuming that the latter is not the case, in step 1, the algorithm above will recover $x$ such that $y = Ax + e_0$ for some $e_0 < |e_0|_{max}$, because (under our assumption, and by linearity) $y$ is always of the form $Ax + e_0$. In step 2, it will recover $x' = x - s$, because $x' = x - s$ will satisfy the equation $y - (As + e) = Ax' + e'$ for $e' = e_0 - e$, and $|e_0 - e| < |e_0| + |e| < |e_0|_{max}$. We know that $y$ has two preimages under our assumption, so we conclude that, when our assumption holds, the algorithm returns all of the preimages to $y$ under $\eta_\kappa$ and no others. In the negligible fraction of cases when $y$ has only one preimage even though $\kappa_2 = As + e$, the algorithm returns one preimage, which is also the correct number.

It can be seen by similar reasoning that, when $\kappa_2 = u$ for $u \neq As + e$ for any $e$ such that $|e| < B_g$, this procedure will return exactly one preimage, which is what we expect when $\kappa_2 = u$.
\end{proof}

In the context of Protocol \ref{protocol:main-protocol}, the honest prover knows that $\eta_{\kappa_i}$ has been evaluated correctly for all $i$, because the prover evaluated these functions for itself. Therefore, given Claim \ref{claim:recover-all-preimages}, if our goal is to show that the honest prover can efficiently determine whether or not a purported trapdoor $\tau'_{A_i}$ can be used to recover all the preimages to $y_i$ under $\eta_{\kappa_i}$, with $\kappa_i = (A_i, \kappa_{2,i})$, it is sufficient to show that a procedure exists to efficiently determine whether or not $\tau'_{A_i}$ truly `inverts $A_i$', i.e. recovers $(s, e)$ correctly from all possible $r = A_is + e$ with $e$ having sufficiently small magnitude. This procedure exists in the form of \textsf{Invert} from \cite{trapdoors}.

\section{Completeness and soundness of Protocol \ref{protocol:mf16}}
\label{appendix:chernoff-bound}

For notational convenience, define $\alpha = \frac{a}{\sum_s 2 |d_s|}$ and $\beta = \frac{b}{\sum_s 2 |d_s|}$. Fix an arbitrary state $\rho$ sent by the prover. For $j=1,\ldots,m$ let $X_j$ be a Bernoulli random variable that is $1$ if the $j$-th measurement from step 4 of Protocol \ref{protocol:mf16} yields $-\mathrm{sign}(d_j)$ and $0$ otherwise. Let $X = \sum_{j=1}^m X_j$ and $B_j = \textrm{E}[X|X_j,\ldots,X_1]$. Then $(B_1,\ldots,B_m)$ is a martingale. Applying Azuma's inequality, for any $t\geq 0$
\begin{align*}
\Pr\big( |X - \textrm{E}[X]| \geq  t \: \big) &\leq e^{-\frac{t^2}{2m}}\;.
\end{align*}
In the case of an instance $x\notin L$, as mentioned in the main text $\textrm{E}[X_j] \leq \frac{1}{2}-\beta$. Choosing $t = \frac{1}{2}m(\beta - \alpha)$, it follows that in this case
\begin{align*}
\Pr\Big( X \leq \frac{1}{2}m(1 - \beta - \alpha) \Big) &\leq 2e^{-m(\beta-\alpha)^2/8}\;.
\end{align*}

%

Since $\beta - \alpha$ is inverse polynomial, by \cite{proof-of-gap}, the right-hand side can be made exponentially small by choosing $m$ to be a sufficiently large constant times $\frac{|x|}{(\beta - \alpha)^2}$. The soundness of Protocol \ref{protocol:mf16} follows.

Completeness follows immediately from a similar computation using the Chernoff bound, since in this case we can assume that the witness provided by the prover is in tensor product form. 

\section{Commitment scheme}
\label{appendix:commitment}
We provide an informal description of a generic form for a particular (and commonly seen) kind of commitment scheme. The protocol for making a commitment under this scheme requires three messages in total between the party making the commitment, whom we refer to as the \emph{committer}, and the party receiving the commitment, whom we call the \emph{recipient}. The first message is an initial message $i$ from the recipient to the committer; the second is the commitment which the committer sends to the recipient; and the third message is a reveal message from the committer to the recipient. The scheme consists of a tuple of algorithms $(\mathsf{gen, initiate, commit, reveal, verify})$ defined as follows:
\begin{itemize}[itemsep=1pt, topsep=0pt]
\item $\mathsf{gen(1^\ell)}$ takes as input a security parameter, and generates a public key $pk$.
\item $\textsf{initiate}(pk)$ takes as input a public key and generates an initial message $i$ (which the recipient should send to the committer).
\item $\textsf{commit}(pk, i, m, s)$ takes as input a public key $pk$, an initial message $i$, a message $m$ to which to commit, and a random string $s$, and produces a commitment string $z$.
\item $\textsf{reveal}(pk, i, z, m, s)$ outputs the inputs it is given.
\item $\textsf{verify}(pk, i, z, m, s)$ takes as argument an initial message $i$, along with a purported public key, commitment string, committed message and random string, evaluates $\textsf{commit}(pk, i, m, s)$, and outputs 1 if and only if $z = \textsf{commit}(pk, i, m, s)$.
\end{itemize}

For brevity, we sometimes omit the public key $pk$ and the initial message $i$ as arguments in the body of the paper. The commitment schemes which we assume to exist in the paper have the following security properties:

\begin{itemize}[itemsep=1pt, topsep=0pt]
\item \emph{Perfectly binding}: If $\textsf{commit}(pk, i, m, s) = \textsf{commit}(pk, i, m', s')$, then $(m, s) = (m', s')$.
\item \emph{(Quantum) computationally concealing}: For any public key $pk \leftarrow \textsf{gen}(1^\ell)$, fixed initial message $i$, and any two messages $m, m'$, the distributions over $s$ of $\textsf{commit}(pk, i, m, s)$ and $\textsf{commit}(pk, i, m', s)$ are quantum computationally indistinguishable.
\end{itemize}

It is known that a commitment scheme with the above form and security properties exists assuming the quantum hardness of LWE: see Section 2.4.2 of \cite{coladangelo2019non}. The commitment scheme outlined in that work is analysed in the common reference string (CRS) model, but the analysis can easily be adapted to the standard model when an initial message $i$ is allowed to pass from the recipient to the committer.

\newcommand{\etalchar}[1]{$^{#1}$}

\end{document}